\newenvironment{proof}{\noindent\emph{Proof\ }}{\hspace*{\fill}$\Box$\medskip}
\newtheorem{theorem}{Theorem}
\newtheorem{definition}{Definition}
\newtheorem{lemma}{Lemma}
\newtheorem{proposition}{Proposition}
\newcommand\restr[2]{{
  \left.\kern-\nulldelimiterspace 
  #1 
  \vphantom{\big|} 
  \right|_{#2} 
  }}
\newcommand{\vect}[1]{\ensuremath{\mathbf{#1}}}
\title{Online Primal-Dual Algorithms with Configuration Linear Programs}
\author{
Nguyen Kim Thang\thanks{Research supported by the ANR project OATA n\textsuperscript{o} ANR-15-CE40-0015-01}\\
IBISC, University Paris-Saclay, France}
\date{}
\begin{document}

\maketitle

\begin{abstract}
In this paper, we present primal-dual approaches based on configuration linear programs to 
design competitive online algorithms for problems with arbitrarily-grown objective. 
Non-linear, especially convex, objective functions have been extensively studied in recent years 
in which approaches relies crucially on the convexity property of cost functions. 
Besides, configuration linear programs have been considered typically in offline setting 
and the main approaches are rounding schemes. 

In our framework, we consider configuration linear programs coupled with a primal-dual approach.
This approach is particularly appropriate for non-linear (non-convex)
objectives in online setting. By the approach, we first present a simple greedy algorithm
for a general cost-minimization problem. 
The competitive ratio of the algorithm is characterized by the mean of a notion, called \emph{smoothness}, 
which is inspired by a similar concept in the context of algorithmic game theory.
The algorithm gives \emph{optimal} (up to a constant factor) competitive ratios 
while applying to different contexts such as network routing, vector scheduling, energy-efficient scheduling
and non-convex facility location.


Next, we consider the online $0-1$ covering problems with non-convex objective.   
Building upon the resilient ideas from the primal-dual framework with configuration LPs,
we derive a competitive algorithm for these problems. Our result generalizes the 
online primal-dual algorithm developed recently by \citet{AzarBuchbinder16:Online-Algorithms} 
for convex objectives with monotone gradients to non-convex objectives. 
The competitive ratio is now characterized by a new concept, 
called \emph{local smoothness} --- a notion inspired by the smoothness. 
Our algorithm yields \emph{tight} competitive ratio for the objectives such as the sum of $\ell_{k}$-norms and 
gives competitive solutions for online problems of submodular minimization and some natural non-convex minimization 
under covering constraints.  
\end{abstract}

\thispagestyle{empty}

\newpage

\setcounter{page}{1}

\section{Introduction}

In the paper, we consider problems of minimizing the total cost of resources used to satisfy online requests.
One phenomenon, known as \emph{economy of scale}, is that the cost grows sub-linearly with the amount of resources
used. That happens in many applications in which one gets a discount when buying resources in bulk. A representative 
setting is the extensively-studied domain of sub-modular optimization. Another phenomenon, known as \emph{diseconomy of scale},
is that the cost grows super-linearly on the quantity of used resources. An illustrative example for this phenomenon is 
the energy cost in computation where the cost grows super-linearly, typically as a convex function. 
The phenomenon of diseconomy of scale has been 
widely studied in the domain of convex optimization \cite{BoydVandenberghe04:Convex-Optimization}.  
Non-convex objective functions appears in various problems, ranging from scheduling, sensor energy
management, to influence and revenue maximization, and facility location.
For example, in scheduling of malleable jobs on parallel machines, 
the cost grows as a non-convex function \cite{Hunold15:One-step-toward} which is due to the 
parallelization and the synchronization. Besides, in practical aspect of facility location, 
the facility costs to serve clients are rarely constant or simply a convex function of the number of clients. 
Apart of some fixed opening amount, the cost would initially increase fast until some threshold on the number of clients, 
then becomes more stable before quickly increases again as the number of clients augments. 
This behaviour of cost functions widely happens in economic contexts. Such situations raises the demand 
of designing algorithms with performance guarantee for non-convex objective functions. 
In this paper, we consider problems in which the cost grows \emph{arbitrarily} with the amount of used resources.  
 
\subsection{A General Problem and Primal-Dual Approach}  	\label{sec:intro-gen}
 
\paragraph{General Problem.}
In the problem, there is a set of resources $\mathcal{E}$ and requests arrive online. 
At the arrival of request $i$, a set of feasible strategies (actions) $\mathcal{S}_{i}$ to satisfy request $i$ is revealed. 
Each strategy $s_{ij} \in \mathcal{S}_{i}$ consists of a subset of resources in $\mathcal{E}$.
Each resource $e$ is associated to a 
non-negative non-decreasing \emph{arbitrary} cost function $f_{e}$ and the cost induced by resource $e$ 
depending on the set of requests using $e$. The cost of a solution is the total cost of resources, i.e., 
$\sum_{e} f_{e}(A_{e})$ where $A_{e}$ is the set of requests using resource $e$.  
The goal is design an algorithm that upon the arrival of each request, selects a feasible strategy for the request
while maintaining the cost of the overall solution as small as possible. We consider the standard
\emph{competitive ratio} as the performance measure of an algorithm. Specifically, 
an algorithm is $r$-competitive if for any instance, the ratio between the cost of the algorithm and that of 
an optimal solution is at most $r$.

\paragraph{Primal-Dual Approach.} 
We consider an approach based on linear programming for the problem. 
The first crucial step for any LP-based approach is to derive a LP formulation with reasonable
\emph{integrality gap}, which is defined as the ratio between the 
optimal integer solution of the formulation and the optimal solution without the integer condition. 
As the cost functions are non-linear, it is not surprising that the natural relaxation 
suffers from large integrality gap. This issue has been observed and resolved by  
\citet{MakarychevSviridenko14:Solving-optimization}.
\citet{MakarychevSviridenko14:Solving-optimization} considered an offline variant of the problem in which
the resource cost functions are convex. 
They systematically strengthen the natural formulations by 
introducing an exponential number of new variables and new constraints connecting 
new variables to original ones. Consequently, the new formulation, in form of a configuration LP, 
significantly reduces the integrality gap. Although there are exponentially number of variables, 
\citeauthor{MakarychevSviridenko14:Solving-optimization} showed that a fractional $(1+\epsilon)$-approximatly 
optimal solution of the configuration LP can be computed in polynomial time. 
Then, by rounding the fractional solution, the authors derived an $B_{\alpha}$-approximation algorithm 
for the resource cost minimization problem in which all cost functions are polynomial of degree at most $\alpha$. 
Here $B_{\alpha}$ denotes the Bell number and asymptotically $B_{\alpha} = \Theta\bigl((\alpha/\log \alpha)^{\alpha}\bigr)$.


The rounding scheme in \cite{MakarychevSviridenko14:Solving-optimization} is 
\emph{intrinsically offline} and it is not suitable in online setting. Moreover, another issue in the problem is that cost functions are not 
necessarily convex. That represents a substantial obstacle since all currently known techniques for non-linear 
objectives relies crucially on the convexity of cost functions and Fenchel duality
\cite{DevanurHuang14:Primal-Dual,AzarDevanur15:Speed-Scaling,MenacheSingh15:Online-Caching,GuptaKrishnaswamy12:Online-Primal-Dual,HuangKim15:Welfare-maximization,DevanurJain12:Online-matching,AzarBuchbinder16:Online-Algorithms}.  

To overcome these difficulties, we consider a primal-dual approach with configuration LPs. 
First, primal-dual is particularly appropriate since one does not 
have to compute an optimal fractional solution that needs the full information on the instance. 
Second, in our approach, the dual variables of the configuration LP 
have intuitive meanings and the dual constraints indeed guide the decisions of the algorithm. 
The key step in the approach is to show that the constructed dual variables constitute a dual feasible 
solution. In order to prove the dual feasibility, we define a notion of \emph{smoothness} of 
functions. This definition is inspired by the smoothness framework introduced by \citet{Roughgarden15:Intrinsic-Robustness} 
in the context of algorithmic game theory to characterize the price of anarchy for large classes of games. 
The smoothness notion allows us not only to prove the dual feasibility but also to establish the competitiveness of 
algorithms in our approach.
We characterize the performance of algorithms
using the notion of smoothness in a similar way as the price of anarchy characterized by the smoothness 
argument \cite{Roughgarden15:Intrinsic-Robustness}. Through this notion, we show an interesting connection between 
online algorithms and algorithmic game theory. 

\begin{definition}
Let $\mathcal{N}$ be a set of requests.
A set function $f: 2^{\mathcal{N}} \rightarrow \mathbb{R}^{+}$ is $(\lambda,\mu)$-\emph{smooth}
if for any set $A = \{a_{1}, \ldots, a_{n}\} \subseteq \mathcal{N}$ and any collection 
$B_{1} \subseteq B_{2} \subseteq \ldots \subseteq B_{n} \subseteq B \subseteq \mathcal{N}$, 
the following inequality holds.
$$
\sum_{i = 1}^{n} \left[ f\bigl( B_{i} \cup a_{i} \bigr) - f\bigl( B_{i} \bigr)\right]
\leq \lambda f\bigl( A \bigr) + \mu f\bigl( B \bigr)
$$
A set of cost functions $\{f_{e}: e \in \mathcal{E}\}$ is $(\lambda,\mu)$-\emph{smooth} if every function 
$f_{e}$ is $(\lambda,\mu)$-\emph{smooth}.
\end{definition}
 
Intuitively, given a $(\lambda,\mu)$-smooth function, the quantity $\frac{\lambda}{1-\mu}$ measures how far 
the function is from being linear. If a function is linear then it is $(1,0)$-smooth.
 
\begin{theorem}
Assume that all resource cost functions are $(\lambda,\mu)$-smooth for some parameters $\lambda > 0$,
$\mu < 1$. Then there exists a greedy $\frac{\lambda}{1-\mu}$-competitive algorithm for the 
general problem. 
\end{theorem}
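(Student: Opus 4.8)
The plan is to analyze the natural greedy algorithm suggested by the configuration LP: when request $i$ arrives, pick the strategy $s_{ij} \in \mathcal{S}_i$ that minimizes the immediate increase of the objective, i.e. if $A_e$ denotes the set of requests currently using resource $e$, choose the strategy minimizing $\sum_{e \in s_{ij}} \bigl[ f_e(A_e \cup \{i\}) - f_e(A_e) \bigr]$, and add $i$ to $A_e$ for every $e$ in the chosen strategy. The configuration LP and its dual motivate both this rule and the smoothness notion — the dual constraints $\sum_{i \in S}\beta_{i,e} \le f_e(S)$ and $\alpha_i \le \sum_{e \in s_{ij}}\beta_{i,e}$ ask us to charge each resource a $\beta_{i,e}$ proportional to the marginal cost that $i$ inflicts on $e$ — but the competitive ratio itself I would establish by a direct amortized comparison against a fixed optimal solution, invoking $(\lambda,\mu)$-smoothness exactly where dual feasibility would otherwise be needed.

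First I would telescope the algorithm's cost. Since resources start empty and only accumulate requests, assuming as usual $f_e(\emptyset)=0$, we have $\mathrm{ALG} = \sum_e f_e(A_e^{\mathrm{fin}}) = \sum_i \Delta_i$, where $\Delta_i := \sum_{e \in s_{ij(i)}}\bigl[f_e(A_e^{<i}\cup\{i\}) - f_e(A_e^{<i})\bigr]$ is the marginal cost paid when $i$ is served, $s_{ij(i)}$ is the strategy the algorithm picks for $i$, and $A_e^{<i}$ is the set of requests using $e$ among those processed before $i$. Fix any optimal solution; say it serves request $i$ with strategy $s_i^\star \in \mathcal{S}_i$ and let $O_e$ be the set of requests using $e$ in that solution, so $\mathrm{OPT} = \sum_e f_e(O_e)$. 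Because $s_i^\star$ is an available option at step $i$ and the algorithm minimizes the marginal cost, $\Delta_i \le \sum_{e \in s_i^\star}\bigl[f_e(A_e^{<i}\cup\{i\}) - f_e(A_e^{<i})\bigr]$. Summing over $i$ and exchanging the order of summation gives $\mathrm{ALG} \le \sum_e \sum_{i \in O_e}\bigl[f_e(A_e^{<i}\cup\{i\}) - f_e(A_e^{<i})\bigr]$.

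Now the crux is to bound each inner sum by smoothness. Fix $e$ and order $O_e = \{a_1,\dots,a_n\}$ by the arrival order used by the online algorithm. Setting $B_k := A_e^{<a_k}$, the sets form a nested chain $B_1 \subseteq B_2 \subseteq \cdots \subseteq B_n \subseteq A_e^{\mathrm{fin}} =: B$, simply because a later request sees a superset of what an earlier one saw. Applying the $(\lambda,\mu)$-smoothness of $f_e$ with $A = O_e$ and this collection $B_1 \subseteq \cdots \subseteq B_n \subseteq B$ yields $\sum_{k=1}^n \bigl[f_e(B_k \cup \{a_k\}) - f_e(B_k)\bigr] \le \lambda f_e(O_e) + \mu f_e(A_e^{\mathrm{fin}})$. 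Summing over all resources $e$, $\mathrm{ALG} \le \lambda \sum_e f_e(O_e) + \mu \sum_e f_e(A_e^{\mathrm{fin}}) = \lambda\,\mathrm{OPT} + \mu\,\mathrm{ALG}$, and since $\mu < 1$, rearranging gives $\mathrm{ALG} \le \frac{\lambda}{1-\mu}\,\mathrm{OPT}$.

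The genuinely delicate point is the bookkeeping in the last step: $O_e$ must be ordered by the algorithm's arrival order, not by any order convenient for the optimum, so that the history sets $A_e^{<a_1} \subseteq \dots \subseteq A_e^{<a_n}$ really form the nested chain required by the definition of smoothness; and one must observe that these histories contain requests outside $O_e$ as well, which causes no trouble since the smoothness inequality permits the enclosing set $B$ to be arbitrary. I would also remark that the naive dual assignment ($\beta_{i,e}$ equal to $1/\lambda$ times the marginal cost) need not be feasible for the configuration LP in general — the violation in the constraint $\sum_{i\in S}\beta_{i,e}\le f_e(S)$ is precisely a term of order $\mu f_e(A_e^{\mathrm{fin}})$ — which is exactly why the argument is phrased as the amortized comparison above rather than pure weak duality, with smoothness serving as the appropriate relaxation of dual feasibility.
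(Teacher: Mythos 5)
Your proof is correct. It is, at its core, the same argument as the paper's, but unrolled into a direct amortized charging argument rather than dual fitting: the paper defines dual variables $\alpha_i$ (the per-request marginal cost scaled by $1/\lambda$), $\beta_{i,e}$ (per-resource marginals) and $\gamma_e = -\frac{\mu}{\lambda} f_e(A^*_e)$, and verifies feasibility for the configuration LP dual --- where checking the first dual constraint is exactly your greedy comparison $\Delta_i \le \sum_{e \in s_i^\star}[f_e(A_e^{<i}\cup\{i\}) - f_e(A_e^{<i})]$, and checking the second dual constraint $\gamma_e + \sum_{i\in A}\beta_{i,e} \le f_e(A)$ is exactly your per-resource smoothness step with the nested histories $A_e^{<a_1} \subseteq \cdots \subseteq A_e^{<a_n} \subseteq A_e^{\mathrm{fin}}$. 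What your route buys is elementarity and self-containment: you never need the LP, and the bound $\mathrm{ALG} \le \lambda\,\mathrm{OPT} + \mu\,\mathrm{ALG}$ follows by exchanging sums and rearranging. What the paper's route buys is a certificate against the \emph{fractional} optimum of the configuration LP (weak duality), which is a slightly stronger statement and is the form reused for the covering framework later in the paper; note also that your side remark about infeasibility of the naive $\beta$-assignment is precisely what the paper's $\gamma_e$ variables are introduced to absorb, so it is not a defect of the dual approach but the reason for that extra dual variable. One small point you handle implicitly, as does the paper: the telescoping $\mathrm{ALG} = \sum_i \Delta_i$ uses $f_e(\emptyset) = 0$, which is a harmless normalization here.
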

 
Note that, restricted to the class of polynomials with non-negative coefficients, our algorithm yields 
the competitive ratio of $O(\alpha^{\alpha})$ (consequence of Lemma \ref{lem:smooth-convex}) while the best-known 
approximation ratio is $B_{\alpha} \approx \bigl(\frac{\alpha}{\log \alpha}\bigr)^{\alpha}$ \cite{MakarychevSviridenko14:Solving-optimization}.
However, our greedy algorithm is light-weight and much simpler than that in \cite{MakarychevSviridenko14:Solving-optimization} 
which involves in solving an LP of exponential size and rounding fractional solutions. 
Hence, our algorithm can also be used to design approximation algorithms if one looks for the tradeoff between the simplicity and the performance guarantee.  
 
\paragraph{Applications.} 
We show the applicability of the theorem by deriving competitive algorithms for several problems in online setting, 
such as  \textsc{Minimum Power Survival Network Routing}, \textsc{Vector Scheduling}, 
\textsc{Energy-Efficient Scheduling}, \textsc{Prize Collecting Energy-Efficient Scheduling}, 
\textsc{Non-Convex Facility Location}. Among such applications, 
the most representative ones are the \textsc{Energy-Efficient Scheduling} problem and 
the \textsc{Non-Convex Facility Location} problem.

In \textsc{Online Energy-Efficient Scheduling}, one has to process jobs on unrelated machines
without migration with the objective of minimizing the total energy. No result has been known for this problem in multiple machine
environments. Among others, a difficulty is the construction of formulation with bounded integrality gap. 
We notice that for this problem, \citet{GuptaKrishnaswamy12:Online-Primal-Dual} gave a primal-dual 
competitive algorithm for a single machine. 
However, their approach cannot be used for unrelated machines due to the large integrality gap 
of their formulation. For these problems, we present competitive algorithms with \emph{arbitrary} cost functions beyond the convexity property. 
Note that the convexity of cost functions is a crucial property employed in the analyses of previous work. 
If the cost functions have typical form $f(x) = x^{\alpha}$ then the competitive ratio 
is $O(\alpha^{\alpha})$ and this is optimal up to a constant factor for all the problems above. 
Besides, in offline setting, this ratio is close to the currently best-known 
approximation ratio $B_{\alpha} \approx \bigl(\frac{\alpha}{\log \alpha}\bigr)^{\alpha}$ \cite{MakarychevSviridenko14:Solving-optimization}.

In \textsc{Online Non-Convex Facility Location}, clients arrive online and have to be assigned to facilities.
The cost of a facility consists of a fixed opening cost and and a serving cost, which is an arbitrary monotone function 
depending on the number of clients assigned to the facility. The objective is to minimize the total client-facility 
connection cost and the facility cost. This problem is related to the capacitated network design and energy-efficient routing problems \cite{AntoniadisIm14:Hallucination-Helps:,KrishnaswamyNagarajan14:Cluster-before}.
In the latter, given a graphs and a set of connectivity demands, the cost of each edge (node) is \emph{uniform} and given by 
$c + f^{\alpha}$ where $c$ is a fixed cost for every edge and $f$ is the total of flow passed through the edge (node). 
(Here uniformity means the cost functions are the same for every edge.) The objective is to minimize the total cost while 
satisfying all connectivity demands. \citet{AntoniadisIm14:Hallucination-Helps:,KrishnaswamyNagarajan14:Cluster-before} have provided online/offline algorithms with poly-logarithmic guarantees. 
It is an intriguing open questions (originally raised in \cite{AndrewsAntonakopoulos10:Minimum-Cost-Network}) to design a poly-logarithmic competitive algorithm for non-uniform cost functions.  
The \textsc{Online Non-Convex Facility Location} can be seen as a step towards this goal. 
In fact, the former can be considered as the connectivity problem on a simple depth-2-graph but the cost functions 
are now non-uniform. 
 
This problem is beyond the scope of general problem but we show that the resilient ideas from
the primal-dual framework can be used to derive competitive algorithm. Specifically,  
we present a $O(\log n + \frac{\lambda}{1-\mu})$-competitive algorithm if the cost function is $(\lambda,\mu)$-smooth. 
The algorithm is inspired by the Fortakis primal-dual algorithm in classic setting \cite{Fotakis07:A-primal-dual-algorithm}
and our primal-dual approach based on configuration LPs. In particular, for the problem 
with non-uniform cost functions such as $c_{i} + w_{i} f_{i}^{\alpha}$ 
where $c_{i},w_{i}$ are parameters depending on facility $i$ and $f_{i}$ is the number of clients assigned to facility $i$,
the algorithm yields a competitive ratio of $O(\log n + \alpha^{\alpha})$.
 
\subsection{Primal-Dual Approach for $0-1$ Covering Problems}  
 
\paragraph{$0-1$ Covering Problems.}
We consider an extension of the general problem described in the previous section
in which the resources are subject to covering constraints. 
Formally, let $\mathcal{E}$ be a set of $n$ resources 
and let $f: \{0,1\}^{n} \rightarrow \mathbb{R}^{+}$ be an  \emph{abitrary} monotone cost function.
Let $x_{e} \in \{0,1\}$ be a variable indicating whether resource $e$ is selected. 
The covering constraints $\sum_{e} b_{i,e} x_{e} \geq 1$ for every $i$ are revealed one-by-one and at any 
step, one needs to maintain a feasible integer solution $\vect{x}$.
The goal is to design an algorithm that minimizes $f(\vect{x})$ subject to the online covering constraints 
and $x_{e} \in \{0,1\}$ for every $e$. 

Very recently, \citet{AzarBuchbinder16:Online-Algorithms} have presented a general primal-dual framework 
when function $f$ is convex with monotone gradient. 
The framework, inspired by the Buchbinder-Naor framework \cite{BuchbinderNaor09:The-Design-of-Competitive}
for linear objectives,  
crucially relies on Fenchel duality and the convexity of the objective functions. 
We overcome this obstacle for non-convex functions and also for convex functions with non-monotone gradients 
by considering configuration LP
corresponding to the problem and multilinear extension of function $f$. 
Given $f: \{0,1\}^{n} \rightarrow \mathbb{R}^{+}$, its multilinear extension $F: [0,1]^{n} \rightarrow \mathbb{R}^{+}$ 
is defined as $F(\vect{x}) := \sum_{S} f(\vect{1}_{S}) \prod_{e \in S} x_{e} \prod_{e \notin S} (1 - x_{e})$
where $\vect{1}_{S}$ is the characteristic vector of $S$ (i.e., the $e^{\textnormal{th}}$-component of 
$\vect{1}_{S}$ equals $1$ if $e \in S$ and equals 0 otherwise). 
Building upon the primal-dual framework in \cite{AzarBuchbinder16:Online-Algorithms,BuchbinderNaor09:The-Design-of-Competitive} 
and the resilient ideas due to 
the primal-dual approach for the general problem described earlier, we present a competitive algorithm,
which follows closely to the one in \cite{AzarBuchbinder16:Online-Algorithms}, 
for the fractional $0-1$ covering problem. 
Specifically, we introduce the notion of locally-smooth and 
characterize the competitive ratio using these local smoothness' parameters.

\begin{definition}	\label{def:local-smooth}
Let $\mathcal{E}$ be a set of $n$ resources.
A differentiable function $F: [0,1]^{n} \rightarrow \mathbb{R}^{+}$ is $(\lambda,\mu)$-\emph{locally-smooth}
if for any set $S \subset \mathcal{E}$, the following inequality holds.
\begin{equation}	\label{eq:local-smooth}
\sum_{e \in S} \nabla_{e} F(\vect{x}) \leq \lambda F\bigl( \vect{1}_{S} \bigr) + \mu F\bigl( \vect{x} \bigr)
\end{equation}
\end{definition}

\begin{theorem}
Let $F$ be the multilinear extension of the objective cost $f$ and 
$d$ be the maximal row sparsity of the constraint matrix, i.e., $d = \max_{i} |\{b_{ie}: b_{ie} > 0\}|$.  
Assume that $F$ is $\bigl(\lambda, \frac{\mu}{\ln(1+2d^{2})}\bigr)$-locally-smooth 
for some parameters $\lambda > 0$ and $\mu < 1$. 
Then there exists a $O\bigl( \frac{\lambda}{1 - \mu}  \cdot \ln d \bigr)$-competitive 
algorithm for the fractional covering problem. 
\end{theorem}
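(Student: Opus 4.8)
The plan is to follow the online primal--dual template of \citet{AzarBuchbinder16:Online-Algorithms} and \citet{BuchbinderNaor09:The-Design-of-Competitive}, but to replace the Fenchel-duality step (which needs $f$ convex with monotone gradient) by a configuration relaxation analysed through the local-smoothness inequality. First I would set up the configuration LP: introduce a variable $z_S \ge 0$ for every $S \subseteq \mathcal{E}$ with the intended meaning ``commit to the integral solution $\vect{1}_S$''; keep the fractional variables $x_e$ and link them by $x_e \le \sum_{S \ni e} z_S$; add the normalisation $\sum_S z_S \le 1$; and for each arrived covering constraint impose $\sum_e b_{i,e} x_e \ge 1$. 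The objective is $\sum_S f(\vect{1}_S)\, z_S$. Since the optimal integral solution $\vect{1}_{S^\star}$ gives a feasible point ($z_{S^\star}=1$, $x_e = 1$ for $e \in S^\star$, using that feasibility of $S^\star$ forces $\sum_{e \in S^\star} b_{i,e} \ge 1$), this relaxation has optimum at most $\mathrm{OPT}$. Eliminating $p_e := \sum_i b_{i,e} y_i$ from the LP dual, the dual reads: maximise $\sum_i y_i - \beta$ over $y \ge 0,\ \beta \ge 0$ subject to, for every configuration $S$,
\begin{equation*}
\sum_{e \in S} \sum_i b_{i,e}\, y_i \;\le\; f(\vect{1}_S) + \beta .
\end{equation*}
By weak duality, any feasible $(y,\beta)$ satisfies $\sum_i y_i - \beta \le \mathrm{OPT}$.

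The algorithm: when a new constraint $i$ arrives with $\sum_e b_{i,e} x_e < 1$, run a continuous process in a parameter $t$; raise $y_i$ at unit rate and, for each $e$ with $b_{i,e}>0$, update
\begin{equation*}
\frac{d x_e}{dt} \;=\; \frac{b_{i,e}}{\nabla_e F(\vect{x})}\Bigl(x_e + \tfrac{1}{2d^{2}}\Bigr),
\end{equation*}
capping $x_e$ at $1$ and stopping when $\sum_e b_{i,e} x_e = 1$. Coordinates only increase, so previously satisfied constraints stay satisfied; monotonicity of $f$ keeps $\nabla_e F \ge 0$ and the update well-posed (with the obvious convention when $\nabla_e F(\vect x)=0$). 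Two estimates drive the analysis. Objective growth: along the process $\frac{d}{dt} F(\vect{x}) = \sum_e \nabla_e F(\vect{x})\, \frac{dx_e}{dt} = \sum_{e : b_{i,e}>0} b_{i,e}\bigl(x_e + \tfrac{1}{2d^2}\bigr)$, which is $O(1)$ during the phase since $\sum_e b_{i,e}x_e < 1$ and the additive seeds (one per non-zero, at most $d$ of them) contribute a lower-order term; as $\frac{dy_i}{dt}=1$, integrating over the whole execution gives $F(\vect{x}^{\mathrm{end}}) = O\!\bigl(\sum_i y_i\bigr)$. Dual-price accumulation: separating variables, $\frac{d}{dt}\ln\!\bigl(x_e + \tfrac1{2d^2}\bigr) = \frac{b_{i,e}}{\nabla_e F(\vect{x})}$, so integrating and using $x_e \le 1$ caps the accumulated price $\sum_i b_{i,e} y_i$ by (essentially) $\nabla_e F \cdot \ln(1 + 2d^{2})$, where $\nabla_e F$ stands for an appropriate marginal encountered during the process.

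Summing that last bound over $e \in S$ and invoking local smoothness in exactly the hypothesised form --- $F$ is $\bigl(\lambda, \tfrac{\mu}{\ln(1+2d^2)}\bigr)$-locally-smooth --- one gets, for every $S$,
\begin{equation*}
\sum_{e \in S}\sum_i b_{i,e} y_i \;\lesssim\; \lambda \ln(1+2d^2)\, f(\vect{1}_S) + \mu\, F(\vect{x}^{\mathrm{end}}).
\end{equation*}
Hence rescaling $(y,\beta)\mapsto(\hat y,\hat\beta)$ with $\hat y = y/\bigl(\Theta(\lambda \ln d)\bigr)$ and $\hat\beta = \Theta\!\bigl(\tfrac{\mu}{\lambda \ln d}\bigr) F(\vect{x}^{\mathrm{end}})$ makes $(\hat y,\hat\beta)$ dual feasible. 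Weak duality gives $\sum_i \hat y_i - \hat\beta \le \mathrm{OPT}$, i.e.\ $\sum_i y_i \le \Theta(\lambda \ln d)\bigl(\mathrm{OPT} + \hat\beta\bigr)$; combining with $F(\vect{x}^{\mathrm{end}}) = O\!\bigl(\sum_i y_i\bigr)$ and $\hat\beta = \Theta(\tfrac{\mu}{\lambda\ln d})F(\vect{x}^{\mathrm{end}})$, the $\mu F(\vect{x}^{\mathrm{end}})$ term is absorbed on the left (this is where $\mu<1$ enters) and one is left with $F(\vect{x}^{\mathrm{end}}) = O\!\bigl(\tfrac{\lambda}{1-\mu}\ln d\bigr)\mathrm{OPT}$, as claimed. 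Note $\ln(1+2d^2)$ appears twice in balance: as the overshoot of the multiplicative update and as the denominator in the smoothness hypothesis, so their product is a clean $\mu$.

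The step I expect to be the crux is the dual-price bound $\sum_i b_{i,e} y_i \lesssim \nabla_e F \cdot \ln(1+2d^2)$ --- specifically, deciding which value of the marginal $\nabla_e F(\vect{x})$ it is compared against. In the convex, monotone-gradient case of \citet{AzarBuchbinder16:Online-Algorithms}, $\nabla_e F(\vect{x})$ is non-decreasing as the coordinates grow, so the marginal at termination dominates everything earlier and the estimate follows immediately from $x_e \le 1$. For a general (non-convex) monotone $f$ the multilinear marginal $\nabla_e F(\vect{x}) = \E\bigl[f(\vect{x}_{-e},1) - f(\vect{x}_{-e},0)\bigr]$ need not be monotone, so one cannot evaluate at the final point; the argument must be carried out against the marginals as they are traversed by the continuous process and then reconciled with the fixed-point local-smoothness inequality. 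A secondary point to get right is the choice of the additive seed $\tfrac{1}{2d^{2}}$ and the exact rescaling constants so that termination of each phase, dual feasibility for every one of the exponentially many configurations $S$, and the $O(1)$ objective-versus-dual comparison all hold simultaneously.
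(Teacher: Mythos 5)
Your proposal follows the right template (configuration LP, multiplicative update driven by $\nabla_e F$, price bound $\sum_i b_{i,e} y_i \lesssim \nabla_e F\cdot\ln(1+2d^2)$, then local smoothness plus weak duality), and your endgame would indeed deliver the stated ratio if the price bound held at a single common point $\vect{x}$. But the step you yourself flag as the crux is a genuine gap, and it is exactly the point where the paper departs from the Azar et al.\ scheme. Integrating $\frac{d}{dt}\ln\bigl(x_e+\tfrac{1}{2d^2}\bigr)=\frac{b_{i,e}}{\nabla_e F(\vect{x})}$ only bounds the accumulated price by $\ln(1+2d^2)$ times a marginal \emph{encountered along the trajectory} (at best a supremum over time), whereas the local-smoothness inequality must be invoked at one fixed point: you need $\sum_{e\in S}\sum_i b_{i,e}y_i \le \sum_{e\in S}\nabla_e F(\vect{x}^{\mathrm{end}})\ln(1+2d^2)$, and for non-convex $f$ the multilinear marginal can decrease as $\vect{x}$ grows, so the terminal gradient does not dominate the earlier ones. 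Your proposal acknowledges this but offers no mechanism to reconcile the two, so dual feasibility (after rescaling) is not established.

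The paper's fix is algorithmic rather than analytic: it keeps the per-resource dual constraint $\sum_i\sum_{A:e\notin A} b_{i,e,A}\,\alpha_{i,A}\le \frac{1}{\lambda}\nabla_e F(\vect{x})$ as an \emph{invariant} by actively decreasing previously raised dual variables (the one with largest coefficient $b_{i,e,A^*}$) whenever the constraint becomes tight, so feasibility never depends on monotonicity of the gradient. The cost of these decreases is then controlled by the exponential lower bound on $x_e$ (Lemma~\ref{lem:bound-x}): tightness forces $x_e$ to be so large that $b_{k,e,A^*}/\max_i b_{i,e,A^*}\le \frac{1}{2d}$, and since at most $d$ resources can be tight, the net dual growth rate stays $\Omega\bigl(\frac{1-\mu}{\lambda\ln d}\bigr)$ while the primal grows at rate $O(1)$; the ratio follows by comparing rates, not by a terminal weak-duality computation. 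Two further ingredients you omit are needed to make this work: the configuration LP must carry the truncated knapsack constraints indexed by pairs $(i,A)$, with $c_{i,A}$ and $b_{i,e,A}$, so that residual coverage after variables saturate at $1$ is expressible and the dual variables $\alpha_{i,A}$ can be tied to the current saturated set $A^*$; and the update seed is $1/d$ inside $b_{k,e,A^*}x_e+1/d$ (with the $\ln(1+2d^2)$ appearing in the $\alpha$-rate), which is what makes the tightness-implies-large-$x_e$ calculation come out to the $\frac{1}{2d}$ margin. Without some substitute for this decrease-and-account mechanism, your argument covers only the convex monotone-gradient case you were trying to go beyond.
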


Our algorithm, as well as the algorithm in \cite{AzarBuchbinder16:Online-Algorithms} for convex with monotone gradients 
and the recent algorithm for $\ell_{k}$-norms \cite{NagarajanShen17:Online-Covering}, are natural extensions of the 
Buchbinder-Naor primal-dual framework \cite{BuchbinderNaor09:The-Design-of-Competitive}. A distinguishing point of 
our algorithm compared to the ones in \cite{AzarBuchbinder16:Online-Algorithms,NagarajanShen17:Online-Covering} is 
that in the latter, the gradient $\nabla f(\vect{x})$ at the current primal solution $\vect{x}$ is used to define a multiplicative 
update for the primal whereas we use the gradient of the multilinear extension $\nabla F(\vect{x})$ to define such update. 
This (rather small) modification, coupling with configuration LPs, enable us to derive a competitive algorithm for
convex objective functions whose gradients are not necessarily monotone and more generally, for non-convex objectives.  
Moreover, the use of configuration LPs and the notion of local smoothness is twofold: (i) it avoids the cumbersome 
technical details in the analysis as well as in the assumptions of objective functions; (ii) it reduces the proof of bounding 
the competitive ratios for classes of objective functions to determining the local-smoothness parameters.

Specifically, we apply our algorithm to several widely-studied classes of functions in optimization. 
First, for the class of non-negative polynomials of degree $k$, the algorithm yields a $O\bigl((k\log d)^{k} \bigr)$-competitive 
fractional solution that matches to a result in \cite{AzarBuchbinder16:Online-Algorithms}. 
Second, for the class of sum of $\ell_{k}$-norms, recently \citet{NagarajanShen17:Online-Covering}, 
based on the algorithm in \cite{AzarBuchbinder16:Online-Algorithms}, have 
presented a nearly tight $O(\log d + \log \frac{\max a_{ij}}{\min a_{ij}})$-competitive algorithm where
$a_{ij}$'s are entries in the covering matrix. We show that our algorithm yields a \emph{tight} 
$O(\log d)$-competitive ratio for this class of functions. (The matching lower bound is given in \cite{BuchbinderNaor09:Online-primal-dual}.)  
Third, beyond convexity, we consider a natural class of non-convex cost functions which represent a typical behaviour of resources in 
serving demand requests.  
Non-convexity represents a strong barrier in optimization in general and in the design of algorithms in particular. 
We show that our algorithm is competitive for this class of functions. 
Finally, we illustrate the applicability of our algorithm to the class of submodular functions. 
We make a connection between the local-smooth parameters to the concept of \emph{total curvature} $\kappa$ of submodular functions.
The total curvature has been widely used to determines both upper and lower bounds on the approximation ratios
for many submodular and learning problems \cite{ConfortiCornuejols84:Submodular-set-functions,GoemansHarvey09:Approximating-submodular,BalcanHarvey12:Learning-Submodular,Vondrak10:Submodularity-and-Curvature:,IyerJegelka13:Curvature-and-optimal,SviridenkoVondrak17:Optimal-approximation}. We show that our algorithm yields a 
$O\bigl(\frac{\log d}{1 - \kappa}\bigr)$-competitive fractional solution for the problem of minimizing a submodular function 
under covering constraints. To the best of our knowledge, the submodular minimization under general convering constraints has not been 
studied in the online computation setting. 

\subsection{Related work}
In this section we summarize related work to our approach. Each problem, together with its related work, in the applications of 
the main theorems is formally given in the corresponding section.  
%

In this paper, we systematically strengthen natural LPs by the construction of the configuration LPs 
presented in \cite{MakarychevSviridenko14:Solving-optimization}.
\citet{MakarychevSviridenko14:Solving-optimization} propose a scheme that consists of solving 
the new LPs (with exponential number of variables) 
and rounding the fractional solutions to integer ones using decoupling inequalities. By this method, 
they derive approximation algorithms for several (offline) optimization problems which can formulated by 
linear constraints and objective function as a power of some constant $\alpha$. 
Specifically, the approximation ratio is proved to be the Bell number $B_{\alpha}$ for several problems.
   
In our approach, a crucial element to characterize the performance of an algorithm is the smoothness
property of functions. The smooth argument is introduced by \citet{Roughgarden15:Intrinsic-Robustness} in the context of algorithmic game theory
and it has successfully characterized the performance of equilibria (price of anarchy) in many classes of 
games such as congestion games, etc \cite{Roughgarden15:Intrinsic-Robustness}. 
This notion inspires the definition of smoothness in our paper.

Primal-dual methods have been shown to be powerful tools in online computation.
\citet{BuchbinderNaor09:The-Design-of-Competitive}
presented a primal-dual method for linear programs with packing/covering constraints.
Their method unifies several previous potential-function-based analyses and give a 
principled approach to design and analyze algorithms for problems with linear relaxations.
Convex objective functions have been extensively studied in online settings in recent years, 
in areas such as energy-efficient scheduling \cite{AnandGarg12:Resource-augmentation,Thang13:Lagrangian-Duality,DevanurHuang14:Primal-Dual,ImKulkarni14:SELFISHMIGRATE:-A-Scalable,AzarDevanur15:Speed-Scaling}, 
paging \cite{MenacheSingh15:Online-Caching}, network routing \cite{GuptaKrishnaswamy12:Online-Primal-Dual},
combinatorial auctions \cite{BlumGupta11:Welfare-and-profit,HuangKim15:Welfare-maximization}, matching \cite{DevanurJain12:Online-matching}.  
Recently, \citet{AzarBuchbinder16:Online-Algorithms} gave an unified framework for covering/packing problems with 
convex objectives whose gradients are monotone. Consequently, improved algorithms have been derived for several problems.
The crucial point in the design and analysis in the above approaches relies on the convexity of 
cost functions. Specifically, the construction of dual programs is based on convex conjugates 
and Fenchel duality for primal convex programs. 
Very recently, \citet{NagarajanShen17:Online-Covering}
have considered objective functions as the of sum of $\ell_{k}$-norms. 
This class of functions do not fall into the framework 
developped in \cite{AzarBuchbinder16:Online-Algorithms} since the gradients are not necessarily monotone. 
By a different analysis, \citet{NagarajanShen17:Online-Covering} proved that the algorithm presented in
\cite{AzarBuchbinder16:Online-Algorithms} yields a nearly tight $O\bigl(\log d + \log \frac{\max a_{ij}}{\min a_{ij}}\bigr)$-competitive 
ratio where $a_{ij}$'s are entries in the covering matrix. 
In the approaches, it is not clear how to design competitive algorithms 
for \emph{non-convex} functions. A distinguishing point of our approach is that it gives a framework to study 
non-convex cost functions.

\paragraph{Organization.} In Section \ref{sec:general}, we present the framework for the general problem
described in Section~\ref{sec:intro-gen}. The applications of this framework are in Appendix
\ref{appendix:apps-general}. In Section \ref{sec:general}, we give the framework for the 0-1 covering problems where 
some proof can be found in Appendix \ref{sec:proof-bound-x}. Technical lemmas, which will be used to determined 
smooth and local-smooth parameters, are put in Appendix \ref{appendix:technical}. 

\section{Primal-Dual General Framework}		\label{sec:general}
In this section, we consider the general problem described in Section~\ref{sec:intro-gen} and 
present a primal-dual framework for this problem.

\paragraph{Formulation.}
We consider the formulation for the resource cost minimization problem 
following the configuration LP construction in \cite{MakarychevSviridenko14:Solving-optimization}.
We say that $A$ is a \emph{configuration} associated to resource $e$ if $A$ is a subset of requests using $e$.
Let $x_{ij}$ be a variable indicating whether request $i$ selects strategy (action) $s_{ij} \in \mathcal{S}_{i}$.  
For configuration $A$ and resource $e$, let $z_{eA}$ be a variable such that $z_{eA} = 1$ if and only if 
for every request $i \in A$, $x_{ij} = 1$ for some strategy $s_{ij} \in \mathcal{S}_{i}$ such that 
$e \in s_{ij}$. In other words, $z_{eA} = 1$ iff the set of requests using $e$ is exactly $A$. 
We consider the following formulation and the dual of its relaxation.

\begin{minipage}[t]{0.45\textwidth}
\begin{align*}
\min  \sum_{e,A} &f_{e}(A) z_{e,A} \\
\sum_{j:s_{ij} \in \mathcal{S}_{i}} x_{ij} &= 1 & &  \forall i \\
\sum_{A: i \in A} z_{eA}  &= \sum_{j: e \in s_{ij}} x_{ij}	& & \forall i, e \\
\sum_{A} z_{eA} &= 1 & & \forall e \\
x_{ij}, z_{eA} &\in \{0,1\} & & \forall i,j,e,A \\
\end{align*}
\end{minipage}
\quad
\begin{minipage}[t]{0.5\textwidth}
\begin{align*}
\max \sum_{i} \alpha_{i} &+ \sum_{e} \gamma_{e} \\
\alpha_{i} &\leq \sum_{e: e \in s_{ij}} \beta_{ie}  & &  \forall i,j \\
\gamma_{e} + \sum_{i \in A} \beta_{ie} &\leq f_{e}(A)  & & \forall e,A \\
\end{align*}
\end{minipage}

In the primal, the first constraint guarantees that request $i$ selects some strategy $s_{ij} \in \mathcal{S}_{i}$. 
The second constraint ensures that if request $i$ selects strategy $s_{ij}$ that contains resource $e$ then
in the solution, the set of requests using $e$ must contain $i$.
The third constraint says that in the solution, there is always a configuration associated to resource $e$. 

\paragraph{Algorithm.} 
We first interpret intuitively the dual variables, dual constraints and derive useful observations for a
competitive algorithm. Variable $\alpha_{i}$ represents the increase of the total cost due to the arrival of request $i$.
Variable $\beta_{i,e}$ stands for the marginal cost on resource $e$ if request $i$ uses $e$.
By this interpretation, the first dual constraint clearly indicates the behaviour of an algorithm. 
That is, if a new request $i$ is released, select a strategy $s_{ij} \in \mathcal{S}_{i}$ that minimizes the marginal increase
of the total cost. Therefore, we deduce the following greedy algorithm.

Let $A^{*}_{e}$ be the set of current requests using resource $e$. Initially, $A^{*}_{e} \gets \emptyset$ for every $e$. 
At the arrival of request $i$, select strategy $s^{*}_{ij}$ that is an optimal solution of 
\begin{align}	\label{eq:alpha}
\min \sum_{e \in s_{ij}} \biggl[ f_{e}(A^{*}_{e} \cup i) - f_{e}(A^{*}_{e}) \biggr]  
\qquad
\textnormal{s.t.}
\qquad
s_{ij} \in \mathcal{S}_{i}.
\end{align}
Although computational complexity is not a main issue for online problems, we notice that in many applications,
the optimal solution for this mathematical program can be efficiently computed (for example when $f_{e}$'s are convex
and $\mathcal{S}_{i}$ can be represented succinctly in form of a polynomial-size polytope).

\paragraph{Dual variables.} 
Assume that all resource cost $f_{e}$ are $(\lambda,\mu)$-smooth for some fixed parameters $\lambda > 0$ and $\mu < 1$. 
We are now constructing a dual feasible solution. Define $\alpha_{i}$ as $1/\lambda$ times the optimal value of the mathematical program 
(\ref{eq:alpha}). Informally, $\alpha_{i}$ is proportional the increase of the total cost due to the arrival of request $i$.
For each resource $e$ and request $i$, define 
$$
\beta_{i,e} := \frac{1}{\lambda} \biggl[ f_{e}(A^{*}_{e,\prec i} \cup i) - f_{e}(A^{*}_{e,\prec i})  \biggr]
$$
where $A^{*}_{e, \prec i}$ is the set of requests using resource $e$ (due to the algorithm) prior to the arrival of $i$.
In other words, $\beta_{ij}$ equals $1/\lambda$ times the marginal cost of resource $e$ if $i$ uses $e$.
Finally, for every resource $e$ define dual variable 
$
\gamma_{e} := - \frac{\mu}{\lambda} f_{e}(A^{*}_{e})
$
where $A^{*}_{e}$ is the set of all requests using $e$ (at the end of the instance).

\begin{lemma}
The dual variables defined as above are feasible. 
\end{lemma}
\begin{proof}
The first dual constraint follows immediately the definitions of $\alpha_{i}, \beta_{i,e}$
and the decision of the algorithm. 
Specifically, the right-hand side of the constraint represents $1/\lambda$ times 
the increase cost if the request selects a strategy $s_{ij}$. This is larger than 
$1/\lambda$ times the minimum increase cost optimized over all strategies in $\mathcal{S}_{i}$, 
which is $\alpha_{i}$. 

We now show that the second constraint holds. Fix a resource $e$ and a configuration $A$.
The corresponding constraint reads
\begin{align*}
& - \frac{\mu}{\lambda} f_{e}(A^{*}_{e}) + \frac{1}{\lambda} \sum_{i \in A} \biggl[ f_{e}(A^{*}_{e,\prec i} \cup i) - f_{e}(A^{*}_{e,\prec i}) \biggr]
\leq f_{e}(A) \\
\Leftrightarrow \qquad &
 \sum_{i \in A} \biggl[ f_{e}(A^{*}_{e,\prec i} \cup i) - f_{e}(A^{*}_{e,\prec i}) \biggr]
\leq \lambda f_{e}(A) + \mu f_{e}(A^{*}_{e})
\end{align*}
This inequality is due to the definition of $(\lambda,\mu)$-smoothness for resource $e$.
Hence, the second dual constraint follows.
\end{proof}

\setcounter{theorem}{0}
\begin{theorem}	\label{thm:main}
Assume that all cost functions are $(\lambda,\mu)$-smooth.
Then, the algorithm is $\lambda/(1-\mu)$-competitive.
\end{theorem}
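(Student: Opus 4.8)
The plan is to combine weak LP duality with the dual feasibility established in the preceding lemma. First, I would observe that the integer primal program exactly encodes the cost of any feasible solution: for the solution produced by the greedy algorithm, setting $x_{ij}=1$ for the chosen strategy $s^*_{ij}$ and $z_{eA^*_e}=1$ for the realized configurations yields a feasible integer primal point whose objective equals $\sum_e f_e(A^*_e)$, the algorithm's cost. Similarly, an optimal offline solution gives a feasible integer primal point with objective $\mathrm{OPT}$, so the optimal value of the primal relaxation is at most $\mathrm{OPT}$. By weak duality, the value of \emph{any} dual feasible solution is a lower bound on the primal optimum, hence on $\mathrm{OPT}$.

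Next I would evaluate the dual objective $\sum_i \alpha_i + \sum_e \gamma_e$ at the specific dual solution constructed before the lemma. By definition, $\sum_i \alpha_i = \frac{1}{\lambda}\sum_i \bigl(\text{optimal value of }(\ref{eq:alpha})\text{ at step }i\bigr)$, and the optimal value of $(\ref{eq:alpha})$ at step $i$ is precisely the increase in the algorithm's total cost $\sum_e f_e(A^*_e)$ caused by inserting request $i$ (the marginal cost over the strategy the algorithm actually picks). Telescoping these increments over all requests $i$ gives $\sum_i \alpha_i = \frac{1}{\lambda}\sum_e f_e(A^*_e) = \frac{1}{\lambda}\cdot\mathrm{ALG}$. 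On the other side, $\sum_e \gamma_e = -\frac{\mu}{\lambda}\sum_e f_e(A^*_e) = -\frac{\mu}{\lambda}\cdot\mathrm{ALG}$. Therefore the dual objective equals $\frac{1-\mu}{\lambda}\cdot\mathrm{ALG}$.

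Finally, chaining the inequalities: $\frac{1-\mu}{\lambda}\cdot\mathrm{ALG} = \sum_i\alpha_i + \sum_e\gamma_e \leq (\text{primal LP optimum}) \leq \mathrm{OPT}$, where the first inequality is weak LP duality applied to the feasible dual solution (feasible by the lemma) and the second is because $\mathrm{OPT}$ corresponds to a feasible integer, hence fractional, primal point. Since $\mu < 1$ we may divide by $\frac{1-\mu}{\lambda}>0$ to conclude $\mathrm{ALG} \leq \frac{\lambda}{1-\mu}\cdot\mathrm{OPT}$, which is the claimed competitive ratio.

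I expect the only genuinely delicate step to be the telescoping argument identifying $\sum_i \alpha_i$ with $\frac{1}{\lambda}\cdot\mathrm{ALG}$: one must check that the $A^*_{e,\prec i}$ appearing in the marginal-cost term are exactly the sets $A^*_e$ grown incrementally in the processing order of the requests, so that summing $f_e(A^*_{e,\prec i}\cup i) - f_e(A^*_{e,\prec i})$ over the requests $i$ that the algorithm routes through $e$ collapses to $f_e(A^*_e) - f_e(\emptyset) = f_e(A^*_e)$ (using $f_e(\emptyset)=0$, or absorbing a constant offset harmlessly). Everything else is a direct appeal to weak duality and the already-proven dual feasibility, so no further heavy machinery is needed.
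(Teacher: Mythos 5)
Your proposal is correct and follows essentially the same argument as the paper: evaluate the dual objective at the constructed feasible dual solution, note that the $\alpha_i$'s telescope to $\frac{1}{\lambda}\sum_e f_e(A^*_e)$ and the $\gamma_e$'s contribute $-\frac{\mu}{\lambda}\sum_e f_e(A^*_e)$, and conclude via weak duality that $\mathrm{ALG} \leq \frac{\lambda}{1-\mu}\cdot\mathrm{OPT}$. The paper leaves the weak-duality chaining and the telescoping (with the implicit convention $f_e(\emptyset)=0$) unstated, which you have merely made explicit.
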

\begin{proof}
By the definitions of dual variables, the dual objective is 
\begin{align*}
\sum_{i} \alpha_{i} + \sum_{e} \gamma_{e} = 
\sum_{e} \frac{1}{\lambda} f_{e}(A^{*}_{e})  - \sum_{e} \frac{\mu}{\lambda}  f_{e}(A^{*}_{e})
= \frac{1-\mu}{\lambda} \sum_{e}  f_{e}(A^{*}_{e})
\end{align*}
Besides, the cost of the solution due to the algorithm is $\sum_{e}  f_{e}(A^{*}_{e})$.
Hence, the competitive ratio is at most $\lambda/(1-\mu)$.
\end{proof}

\paragraph{Applications.}	
Theorem \ref{thm:main} yields simple algorithm with 
\emph{optimal} competitive ratios for several problems as mentioned 
in the introduction. Among others, we give optimal algorithms for energy efficient scheduling problems
(in unrelated machine environment) and the facility location with client-dependent cost problem. 
Prior to our work, no competitive algorithm has been known for the problems.
These applications can be found in Appendix~\ref{appendix:apps-general}.
The proofs are now reduced to compute smooth parameters $\lambda,\mu$ that subsequently imply the competitive ratios. 
We mainly use the smooth inequalities in Lemma \ref{lem:smooth-convex}, developed in \cite{CohenDurr12:Smooth-inequalities}, 
to derive the explicit competitive bounds in case of non-negative polynomial cost functions.

\section{Primal-Dual Framework for $0-1$ Covering Problems}		\label{sec:covering}
Consider the following integer optimization problem. 
Let $\mathcal{E}$ be a set of $n$ resources
and let $f: \{0,1\}^{n} \rightarrow \mathbb{R}^{+}$ be a monotone cost function.
Let $x_{e} \in \{0,1\}$ be a variable indicating whether resource $e$ is selected. 
The problem is to minimize $f(\vect{x})$ subject to covering constraints 
$\sum_{e} b_{i,e} x_{e} \geq 1$ for every constraint $i$ and $x_{e} \in \{0,1\}$
for every $e$. In the online setting, the constraints are revealed one-by-one and at any 
step, one needs to maintain a feasible integer solution $\vect{x}$.

We consider the \emph{multilinear extension} of function $f$ defined as follows.
Given $f$, define its multilinear extension $F: [0,1]^{n} \rightarrow \mathbb{R}^{+}$ by
$F(\vect{x}) := \sum_{S} f(\vect{1}_{S}) \prod_{e \in S} x_{e} \prod_{e \notin S} (1 - x_{e})$
where $\vect{1}_{S}$ is the characteristic vector of $S$ (i.e., the $e^{\textnormal{th}}$-component of 
$\vect{1}_{S}$ equals $1$ if $e \in S$ and equals 0 otherwise). Note that $F(\vect{1}_{S}) = f(\vect{1}_{S})$. 
An alternative way to define $F$ is to set $F(\vect{x}) = \mathbb{E} \bigl[ f(\vect{1}_{T})\bigr]$ where $T$ is a random set 
such that a resource $e$ appears in $T$ with probability $x_{e}$. 

In this section we will present an online algorithm that outputs a competitive fractional solution for function 
$F$ subject to the same set of constraints. The rounding schemes depend on specific problems. For example, one can benefit from 
techniques from \cite{AzarBuchbinder16:Online-Algorithms} in which rounding schemes have been given for several problems. 

\subsection{Algorithm for Fractional Covering}
We recall that a differentiable function $F: [0,1]^{n} \rightarrow \mathbb{R}^{+}$ is $(\lambda,\mu)$-locally-smooth
if for any subset $S$ of resources, the following inequality holds:
$$
\sum_{e \in S} \nabla_{e} F(\vect{x}) \leq \lambda F\bigl( \vect{1}_{S} \bigr) + \mu F\bigl( \vect{x} \bigr)
$$

\paragraph{Formulation.}
We say that $S \subset \mathcal{E}$ is a \emph{configuration} if $\vect{1}_{S}$ corresponds to a feasible solution.
Let $x_{e}$ be a variable indicating whether the resource $e$ is used.  
For configuration $S$, let $z_{S}$ be a variable such that $z_{S} = 1$ if and only if 
$x_{e} = 1$ for every resource $e \in S$, and $x_{e} = 0$ for $e \notin S$. 
In other words, $z_{S} = 1$ iff $\vect{1}_{S}$ is the selected solution of the problem.
For any subset $A \subset \mathcal{E}$, define $c_{i,A} = \max\{1 - \sum_{e' \in A} b_{i,e'}; 0\}$ 
and $b_{i,e,A} := \min\{b_{i,e}; c_{i,A}\}$.
We consider the following formulation and the dual of its relaxation.

\begin{minipage}[t]{0.45\textwidth}
\begin{align*}
\min  \sum_{S} &f(\vect{1}_{S}) z_{S} \\
\sum_{e \notin A} b_{i,e,A} \cdot x_{e} &\geq c_{i,A} & &  \forall i,A \subset \mathcal{E} \\
\sum_{S: e \in S} z_{S}  &= x_{e}	& & \forall e \\
\sum_{S} z_{S} &= 1 & & \\
x_{e}, z_{S} &\in \{0,1\} & & \forall e,S\\
\end{align*}
\end{minipage}
\quad
\begin{minipage}[t]{0.5\textwidth}
\begin{align*}
\max \sum_{i, A} c_{i,A} \alpha_{i,A} &+ \gamma \\
\sum_{i} \sum_{A: e \notin A} b_{i,e,A} \cdot \alpha_{i,A} &\leq \beta_{e}  & &  \forall e \\
\gamma + \sum_{e \in S} \beta_{e} &\leq f(\vect{1}_{S})  & & \forall S \\
\alpha_{i} &\geq 0 & & \forall i 
\end{align*}
\end{minipage}

In the primal, the first constraints are knapsack-constraints corresponding to the given polytope. 
The second constraint ensures that if a resource $e$ is chosen then
the selected solution must contain $e$.
The third constraint says that one solution (configuration) must be selected. 

\paragraph{Algorithm.} 
Assume that function $F(\cdot)$ is $\bigl(\lambda, \frac{\mu}{8\ln(1+2d^{2})}\bigr)$-locally smooth.
Let $d$ be the maximal number of positive entries in a row, i.e., $d = \max_{i} |\{b_{ie}: b_{ie} > 0\}|$.  
Consider the following Algorithm~\ref{algo:covering} which follows the scheme in \cite{AzarBuchbinder16:Online-Algorithms}
with some more subtle steps.

\begin{algorithm}[ht]
\begin{algorithmic}[1]  
\STATE Initially, set $A^{*} \gets \emptyset$. Intuitively, $A^{*}$ consists of all resources $e$ such that $x_{e} = 1$.
\STATE Upon the arrival of primal constraint $\sum_{e} b_{k,e} x_{e} \geq 1$ 
 and the new corresponding dual variable $\alpha_{k}$. 
\WHILE[\texttt{Increase primal, dual variables}]{$\sum_{e} b_{k,e,A^{*}} x_{e} < c_{k,A^{*}}$ \textbf{simultaneously}}  
	\STATE Increase $\tau$ with rate 1 and increase $\alpha_{k,A^{*}}$ at rate $\frac{1}{c_{k,A^{*}}} \cdot \frac{1}{\lambda \cdot \ln(1+2d^{2})}$.
		(Note that $c_{k,A^{*}} > 0$ by the condition of the while loop.)
	\FOR{$e \notin A^{*}$ such that $b_{k,e,A^{*}} > 0$ 
						}			
			\STATE Increase $x_{e}$ according to the following function
				\begin{align*}
					\frac{\partial x_{e}}{\partial \tau}	\gets \frac{b_{k,e,A^{*}} \cdot x_{e} + 1/d}{\nabla_{e} F(\vect{x})}
				\end{align*}
	\ENDFOR
	\STATE \textbf{if} $x_{e} = 1$ \textbf{then} update $A^{*} \gets A^{*} \cup \{e\}$.
	\FOR [\texttt{Decrease dual variables}]{$e \notin A^{*}$ such that 
			$\sum_{i=1}^{k} \sum_{A: e \notin A} b_{i,e,A} \cdot \alpha_{i,A} \geq \frac{1}{\lambda} \nabla_{e} F(\vect{x})$}
		\STATE Let $m^{*}_{e} \gets \arg \max \{b_{i,e,A^{*}}: \alpha_{i} > 0, 1 \leq i \leq k\}$.
		\STATE Increase $\alpha_{m^{*}_{e},A^{*}}$ continuously with rate 
			$- \frac{1}{c_{k,A^{*}}} \cdot \frac{b_{k,e,A^{*}}}{b_{m^{*}_{e},e,A^{*}}} \cdot\frac{1}{\lambda \cdot \ln(1+2d^{2})}$.
	\ENDFOR
\ENDWHILE
\end{algorithmic}
\caption{Algorithm for Covering Constraints.}
\label{algo:covering}
\end{algorithm}

\paragraph{Dual variables.} 
Variables $\alpha_{i}$ are constructed in the algorithm. Let $\vect{x}$ be the current solution of the algorithm.
Define $\beta_{e} = \frac{1}{\lambda} \nabla_{e} F(\vect{x})$ and $\gamma = -\frac{\mu}{8\lambda \cdot \ln(1+2d^{2})} F(\vect{x})$.

The following lemma gives a lower bound on $x$-variables where the proof is given in Appendix~\ref{sec:proof-bound-x}.

\begin{lemma}		\label{lem:bound-x}
Let $e$ be an arbitrary resource.  
During the execution of the algorithm, it always holds that 
$$
x_{e}	\geq \sum_{A: e \notin A} \frac{1}{\max_{i} b_{i,e,A} \cdot d} 
		\left[ \exp\biggl( \frac{\lambda \cdot \ln(1+2d^{2})}{\nabla_{e} F(\vect{x})} 
				\cdot \sum_{i} b_{i,e,A} \cdot \alpha_{i,A} \biggr) - 1 \right]
$$
\end{lemma}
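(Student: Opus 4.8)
The plan is to prove the bound by a continuous (differential) argument, maintaining the displayed inequality as an invariant throughout the execution of Algorithm~\ref{algo:covering}. Fix the resource $e$. Since resources are only ever \emph{added} to $A^{*}$ (never removed), for every configuration $A$ the set of times at which $A^{*} = A$ forms a single contiguous interval $I_{A}$, and the dual variables $\alpha_{i,A}$ for this particular $A$ can be modified only during $I_{A}$; in particular $\sum_{i} b_{i,e,A}\,\alpha_{i,A} = 0$ before $I_{A}$ begins. Writing $g := \nabla_{e}F(\vect{x})$, $\ell := \lambda\ln(1+2d^{2})$ and $\Phi_{A} := \frac{1}{\max_{i} b_{i,e,A}\cdot d}\bigl[\exp\bigl(\frac{\ell}{g}\sum_{i} b_{i,e,A}\alpha_{i,A}\bigr) - 1\bigr]$, it therefore suffices to show that during $I_{A}$ the increase of $\Phi_{A}$ is dominated by the increase of $x_{e}$: indeed $x_{e}$ is nondecreasing, the $\Phi_{A'}$ for $A' \neq A$ are (up to the dependence on $g$) frozen during $I_{A}$, and all quantities start at $0$, so summing over $A$ then yields $x_{e} \geq \sum_{A:\,e\notin A}\Phi_{A}$.

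The heart of the argument is the differential inequality $\frac{d}{d\tau}\Phi_{A} \leq \frac{d}{d\tau}x_{e}$ during an iteration of the \texttt{while} loop handling constraint $k$ with $A^{*} = A$. There the algorithm raises $\alpha_{k,A}$ at rate $\frac{1}{c_{k,A}\,\ell}$ and, when $b_{k,e,A} > 0$, raises $x_{e}$ at rate $\frac{b_{k,e,A}\,x_{e} + 1/d}{g}$. I would bound $\frac{d}{d\tau}\Phi_{A}$ from above by differentiating the exponential, using: (i) the chosen $\alpha$-rate, so that the exponent grows from $\alpha_{k,A}$ at rate $\frac{b_{k,e,A}}{c_{k,A}\,g}$; (ii) the relations $b_{k,e,A} = \min\{b_{k,e},c_{k,A}\} \leq c_{k,A}$ and $b_{k,e,A} \leq \max_{i} b_{i,e,A}$, which control the prefactor; and (iii) the running invariant $x_{e} \geq \Phi_{A}$, equivalently $\exp\bigl(\frac{\ell}{g}\sum_{i} b_{i,e,A}\alpha_{i,A}\bigr) \leq \max_{i} b_{i,e,A}\cdot d\cdot x_{e} + 1$, to linearize the exponential. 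The multiplicative part $b_{k,e,A}\,x_{e}$ of the $x_{e}$-update then matches the $x_{e}$-dependent term produced by this linearization, while the additive $\frac{1}{d}$ absorbs the constant coming from the ``$-1$'' in the bracket; the relatively large base $1+2d^{2}$ hidden in $\ell$ is exactly what leaves enough slack in this comparison. When $b_{k,e,A} = 0$ the exponent receives no contribution from $\alpha_{k,A}$, so $\Phi_{A}$ can only decrease, consistently with $x_{e}$ being nondecreasing.

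It then remains to dispatch the remaining events. The ``decrease dual variables'' step only lowers some $\alpha_{i,A}$'s, hence lowers $\sum_{i} b_{i,e,A}\alpha_{i,A}$ and hence each $\Phi_{A}$, so it is harmless for a lower bound on $x_{e}$; the arrival of a new constraint changes nothing instantaneously; and at the start $x_{e} = 0$ equals the (empty) right-hand side. Integrating the differential inequality over each $I_{A}$ and summing over all configurations $A$ with $e \notin A$ yields the claim.

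The step I expect to be the main obstacle is controlling the variation of $\nabla_{e}F(\vect{x})$, which sits both in the denominator inside the exponent and in the $x_{e}$-update. Because $F$ is a multilinear extension, $\nabla_{e}F(\vect{x})$ does not depend on $x_{e}$ itself, but it does depend on the other coordinates of $\vect{x}$, which are being raised during the run; differentiating $\Phi_{A}$ therefore produces an extra cross-term proportional to $-\bigl(\sum_{i} b_{i,e,A}\alpha_{i,A}\bigr)\cdot\partial_{\tau}\nabla_{e}F(\vect{x})\big/\nabla_{e}F(\vect{x})^{2}$, whose sign is not a priori favorable. Reconciling this — either by arguing the cross-term is dominated, by exploiting the structure of the multilinear extension, or by running the estimate piecewise with $\nabla_{e}F(\vect{x})$ frozen between updates and checking the jumps — is where the argument needs the most care; the remaining bookkeeping (the decrease steps, the base case, and the summation over $A$) is routine.
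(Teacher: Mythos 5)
Your proposal follows essentially the same route as the paper's proof: an induction over time that compares the growth rate of the right-hand side (driven by the increase of $\alpha_{k,A^{*}}$, with the dual-decrease steps only helping) against the prescribed rate $\partial x_{e}/\partial \tau = \bigl(b_{k,e,A^{*}}x_{e}+1/d\bigr)/\nabla_{e}F(\vect{x})$, using the running invariant to linearize the exponential and freezing both sides once $x_{e}=1$. The one obstacle you flag --- the drift of $\nabla_{e}F(\vect{x})$ inside the exponent as the other coordinates increase --- is a legitimate concern, but the paper's own proof does not address it either: it differentiates the right-hand side with $\nabla_{e}F(\vect{x})$ held fixed, so on every step that is actually carried out your plan and the paper's argument coincide.
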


\begin{lemma}
The dual variables defined as above are feasible. 
\end{lemma}
\begin{proof}
As long as a primal covering constraint is unsatisfied, the $x$-variables are always increased. Therefore, at the end of 
a iteration, the primal constraint is satisfied. Consider the first dual constraint. The algorithm always maintains that
$\sum_{i} \sum_{A: e \notin A} b_{i,e,A} \alpha_{i,A} \leq \frac{1}{\lambda} \nabla_{e} F(\vect{x})$ (strict inequality happens only if $x_{e} = 1$).
Whenever this inequality is violated then by the algorithm, some $\alpha$-variables are decreased in such a way that
the increasing rate of $\sum_{i} \sum_{A: e \notin A} b_{i,e,A} \alpha_{i,A}$ is at most 0.
Hence, by the definition of $\beta$-variables, the first dual constraint holds.

By definitions of dual variables and rearranging terms, the second dual constraint reads
\begin{align*}
	\frac{1}{\lambda} \sum_{e \in S} \nabla_{e} F(\vect{x}) \leq F(\vect{1}_{S}) + \frac{\mu}{8\lambda \cdot \ln(1+2d^{2})} F(\vect{x})
\end{align*}
This inequality is exactly the $\bigl(\lambda, \frac{\mu}{8\ln(1+2d^{2})}\bigr)$-local smoothness.
\end{proof}

We are now ready to prove the main theorem.

\begin{theorem}	\label{thm:main-fractional}
Assume that the cost function is $\bigl(\lambda, \frac{\mu}{\ln(1+2d^{2})}\bigr)$-locally-smooth.
Then, the algorithm is $O\bigl( \frac{\lambda}{1 - \mu}  \cdot \ln d \bigr)$-competitive.
\end{theorem}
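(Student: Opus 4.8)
The plan is to bound the algorithm's cost (the final primal objective $F(\vect{x})$) against the dual objective $\sum_{i,A} c_{i,A}\alpha_{i,A} + \gamma$ multiplied by $O\bigl(\frac{\lambda}{1-\mu}\ln d\bigr)$, then invoke weak LP duality together with the feasibility of the dual solution (established in the preceding lemma) to conclude. Concretely, I would track the derivative of the primal cost $F(\vect{x})$ with respect to the time parameter $\tau$ during a single while-loop iteration and compare it to the derivative of the dual objective. Since $\frac{d}{d\tau}F(\vect{x}) = \sum_{e \notin A^{*}} \nabla_e F(\vect{x}) \cdot \frac{\partial x_e}{\partial \tau} = \sum_{e \notin A^{*}} \bigl(b_{k,e,A^{*}} x_e + 1/d\bigr)$, and the while-loop condition guarantees $\sum_{e} b_{k,e,A^{*}} x_e < c_{k,A^{*}}$, the first sum is at most $c_{k,A^{*}}$ and the second is at most $d \cdot (1/d) = 1$ (using that at most $d$ resources have $b_{k,e,A^{*}} > 0$), so $\frac{d}{d\tau}F(\vect{x}) \leq 2 c_{k,A^{*}}$ (absorbing the $+1$ into a constant times $c_{k,A^{*}}$, after normalizing so that $c_{k,A^{*}} \geq$ some lower bound, or more carefully, $\le c_{k,A^{*}} + 1$).

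Next I would lower bound the rate of increase of the dual objective. The term $c_{k,A^{*}}\alpha_{k,A^{*}}$ increases at rate $c_{k,A^{*}} \cdot \frac{1}{c_{k,A^{*}}} \cdot \frac{1}{\lambda \ln(1+2d^2)} = \frac{1}{\lambda \ln(1+2d^2)}$, while $\gamma = -\frac{\mu}{8\lambda\ln(1+2d^2)}F(\vect{x})$ decreases at rate $\frac{\mu}{8\lambda\ln(1+2d^2)} \cdot \frac{d}{d\tau}F(\vect{x})$, and the decrements of the $\alpha_{m^*_e,A^*}$ in the final for-loop contribute negatively but in a controlled way. The key estimate — and I expect this to be the main obstacle — is showing that the total decrease from those $\alpha_{m^*_e, A^*}$ terms is bounded by a constant fraction of the increase from $\alpha_{k,A^*}$, so that the net rate of increase of the dual objective is at least (a constant times) $\frac{1}{\lambda \ln(1+2d^2)} \cdot c_{k,A^*}$ minus the $\gamma$-loss. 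This requires Lemma~\ref{lem:bound-x}: the decrease loop is only triggered when $\sum_i \sum_{A: e \notin A} b_{i,e,A}\alpha_{i,A} \geq \frac{1}{\lambda}\nabla_e F(\vect{x})$, and the lemma's exponential lower bound on $x_e$ then forces $x_e$ to be large (close to $1$), so such resources drop into $A^*$ quickly and the total ``wasted'' dual mass telescopes; the $\ln(1+2d^2)$ factor in the smoothness assumption is precisely what is needed to make this exponential-growth argument close.

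Combining the two bounds, over the whole execution we get $F(\vect{x}_{\text{final}}) = \int \frac{d}{d\tau}F \,d\tau \leq O(\ln d) \cdot \lambda \cdot \bigl(\text{dual objective}\bigr) / (1-\Theta(\mu))$, where the $(1-\mu)$ in the denominator arises from moving the $\gamma$-loss term (which is a $\mu$-fraction of $\frac{d}{d\tau}F$ up to the $\ln(1+2d^2)$ normalization) to the left-hand side and dividing. Finally, since the constructed dual solution is feasible by the previous lemma, its objective value is at most the optimal primal (fractional, hence integral) value $\mathrm{OPT}$, so $F(\vect{x}_{\text{final}}) \leq O\bigl(\frac{\lambda}{1-\mu}\ln d\bigr)\cdot \mathrm{OPT}$, which is the claimed competitive ratio. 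I would also need a short remark handling the discrete jumps when a resource enters $A^*$ (the quantities $c_{k,A^*}$, $b_{k,e,A^*}$ change), arguing that these only help — they reset the continuous process to a smaller residual constraint — and that the initial ``boot-strapping'' of $x_e$ from $0$ is handled by the additive $1/d$ term in the update rule, which guarantees $\frac{\partial x_e}{\partial \tau} > 0$ even when $x_e = 0$.
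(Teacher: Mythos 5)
Your scaffolding matches the paper's proof: bound the instantaneous primal increase (at most $c_{k,A^{*}}+1\le 2$, since $c_{k,A^{*}}\le 1$ --- your ``$2c_{k,A^{*}}$ after normalizing'' variant is unnecessary and not justified, but the fallback bound is the right one), lower-bound the instantaneous dual increase, and combine with the dual-feasibility lemma and weak duality. The genuine gap is at the step you yourself flag as the main obstacle: controlling the negative contribution of the decreased variables $\alpha_{m^{*}_{e},A^{*}}$. You do not prove it, and the mechanism you gesture at --- that Lemma~\ref{lem:bound-x} forces $x_{e}$ close to $1$ so such resources ``drop into $A^{*}$ quickly'' and the wasted dual mass ``telescopes'' over time --- is not the argument that closes this, and it is not clear it can be made to work: nothing in the algorithm guarantees that a resource with a tight dual constraint saturates soon (its $x_{e}$ can sit well below $1$ if $\max_{i}b_{i,e,A^{*}}$ is large), and an amortization over time would still have to be converted into a pointwise comparison of rates, which is what the competitive-ratio argument actually needs.

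The correct (and purely pointwise) estimate is the following. For each $e$ in the set $U(\tau)$ of resources whose first dual constraint is tight and with $b_{k,e,A^{*}}>0$, the tightness makes the exponent in Lemma~\ref{lem:bound-x} equal to $\ln(1+2d^{2})$, so $x_{e}\ge \frac{2d^{2}}{\max_{i}b_{i,e,A^{*}}\cdot d}=\frac{2d}{\max_{i}b_{i,e,A^{*}}}$; on the other hand the covering constraint is still unsatisfied, so $b_{k,e,A^{*}}x_{e}<1$, i.e.\ $x_{e}<\frac{1}{b_{k,e,A^{*}}}$. Combining gives $\frac{b_{k,e,A^{*}}}{b_{m^{*}_{e},e,A^{*}}}\le\frac{b_{k,e,A^{*}}}{\max_{i}b_{i,e,A^{*}}}\le\frac{1}{2d}$, and since $|U(\tau)|\le d$, the total instantaneous decrease of the dual from the $\alpha_{m^{*}_{e},A^{*}}$ terms is at most half of the rate $\frac{1}{\lambda\ln(1+2d^{2})}$ contributed by $\alpha_{k,A^{*}}$. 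Together with the $\gamma$-loss, which is at most $\frac{2\mu}{8\lambda\ln(1+2d^{2})}$ by the primal-rate bound, the dual grows at rate at least $\frac{1-\mu}{4\lambda\ln(1+2d^{2})}$ while the primal grows at rate at most $2$, which yields the $O\bigl(\frac{\lambda}{1-\mu}\ln d\bigr)$ ratio. This is the step your proposal is missing; also note that the dual gain need not be proportional to $c_{k,A^{*}}$ as you suggest --- a constant-versus-constant rate comparison suffices.
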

\begin{proof}
We will bound the increases of the cost and the dual objective at any time $\tau$ in the execution of 
the algorithm. Let $A^{*}$ be the current set of resources $e$ such that $x_{e} = 1$.
The derivative of the objective with respect to $\tau$ is:
\begin{align*}
\sum_{e} \nabla_{e} F(\vect{x}) \cdot \frac{\partial x_{e}}{ \partial \tau}
&= \sum_{e: b_{k,e,A^{*}} > 0 \atop x_{e} < 1} \nabla_{e} F(\vect{x}) \cdot \frac{b_{k,e,A^{*}} \cdot x_{e} + 1/d}{\nabla_{e} F(\vect{x})} 
\leq  \sum_{e: b_{k,e,A^{*}} > 0} \biggl( b_{k,e,A^{*}} \cdot x_{e} + \frac{1}{d} \biggr) \leq 2
\end{align*}

For a time $\tau$, let $U(\tau)$ be the set of resources $e$ such that
$\sum_{i} \sum_{A: e \notin A} b_{i,e,A} \alpha_{i,A} = \frac{1}{\lambda} \nabla_{e} F(\vect{x})$ and $b_{k,e,A^{*}} > 0$.
Note that $|U(\tau)| \leq d$ by definition of $d$.
As long as $\sum_{e} b_{k,e} x_{e} < 1$ (so $\sum_{e \notin A^{*}} b_{k,e,A^{*}} x_{e} < c_{i,A^{*}} < 1$), 
for every $e \in U(\tau)$, 
by Lemma~\ref{lem:bound-x}, we have
\begin{align*}
\frac{1}{b_{k,e,A^{*}}} > x_{e} \geq \frac{1}{\max_{i} b_{i,e,A^{*}} \cdot d} 
				\left[ \exp\biggl(\ln(1+2d^{2}) \biggr) - 1 \right]
\end{align*}
Therefore, $\frac{b_{k,e,A^{*}}}{\max_{i} b_{i,e,A^{*}}} \leq \frac{1}{2d}$. 

We are now bounding the increase of the dual at time $\tau$.
The derivative of the dual with respect to $\tau$ is:
\begin{align*}
\frac{\partial D}{\partial \tau} 
&=  \sum_{i} \sum_{A} c_{i,A} \cdot \frac{\partial \alpha_{i,A}}{\partial \tau} + \frac{\partial \gamma}{\partial \tau}  
	= \sum_{i} c_{i,A^{*}} \cdot \frac{\partial \alpha_{i,A^{*}}}{\partial \tau} + \frac{\partial \gamma}{\partial \tau} \\
&= \frac{1}{\lambda \cdot \ln(1 + 2d^{2})} \biggl(1 - \sum_{e \in U(\tau)} \frac{b_{k,e,A^{*}}}{b_{m^{*}_{e},e,A^{*}}} \biggr) 
	- \frac{\mu}{8\lambda \cdot \ln(1 + 2d^{2})} \sum_{e} \nabla_{e} F(\vect{x}) \cdot \frac{\partial x_{e}}{\partial \tau} \\
&\geq \frac{1}{\lambda \cdot \ln(1 + 2d^{2})} \biggl(1 - \sum_{e \in U(\tau)} \frac{1}{2d} \biggr) 
	- \frac{\mu}{8 \lambda \cdot \ln(1 + 2d^{2})} \sum_{e \notin A^{*}: b_{i,e,A^{*}} > 0, x_{e} < 1} \biggl( b_{i,e,A^{*}}x_{e} + \frac{1}{d} \biggr) \\
&\geq \frac{1 - \mu}{4 \lambda \cdot \ln(1 + 2d^{2})}
\end{align*}
The third equality holds since $\alpha_{k,A^{*}}$ is increased and other $\alpha$-variables in 
$U(\tau)$ are decreased. The first inequality follows the fact that $\frac{b_{k,e,A^{*}}}{\max_{i} b_{i,e,A^{*}}} \leq \frac{1}{2d}$. 
The last inequality holds since 
$$\sum_{e \notin A^{*}: b_{i,e,A^{*}} > 0, x_{e} < 1} \bigl( b_{i,e,A^{*}}x_{e} + \frac{1}{d} \bigr) \leq 2$$
Hence, the competitive ratio is $O\bigl( \frac{\lambda}{1 - \mu} \cdot \ln d \bigr)$.
\end{proof}

\subsection{Applications}
In this section, we consider the applications of Theorem \ref{thm:main-fractional} for classes of cost functions 
which have been extensively studied in optimization such as polynomials with non-negative coefficients, 
$\ell_{k}$-norms and submodular functions. We are interested in deriving fractional solutions\footnote{Rounding schemes to obtaining integral solution for concrete problems are problem-specific and are not considered in this section. Several rounding techniques have been shown 
for different problems, for example in \cite{AzarBuchbinder16:Online-Algorithms} 
for polynomials with non-negative coefficients, or using online contention resolution schemes
for submodular functions \cite{FeldmanSvensson16:Online-contention}.} 
with performance guarantee. 
We show that Algorithm \ref{algo:covering} yields competitive fractional 
solutions for the classes of functions mentioned above and also for some natural classes of non-convex functions. 

We first take a closer look into the definition of local smoothness.
Let $F$ be a multilinear extension of a set function $f$. By definition of multilinear extension, 
$F(\vect{x}) = \mathbb{E} \bigl[ f(\vect{1}_{T})\bigr]$ where $T$ is a random set 
such that a resource $e$ appears in $T$ with probability $x_{e}$. Moreover, since $F$ is linear in $x_{i}$, we have
\begin{align*}
\frac{\partial F}{\partial x_{e}}(\vect{x}) 
&= F(x_{1}, \ldots, x_{e-1}, 1, x_{e+1}, \ldots, x_{n}) - F(x_{1}, \ldots, x_{e-1}, 0, x_{e+1}, \ldots, x_{n}) \\
&= \mathbb{E} \biggl[ f\bigl(\vect{1}_{R \cup \{e\}}\bigr) - f\bigl(\vect{1}_{R}\bigr) \biggr]
\end{align*} 
where $R$ is a random subset of resources $N \setminus \{e\}$ such that $x_{e'}$ is included with probability $x_{e'}$.
Therefore, in order to prove that $F$ is $(\lambda,\mu)$-locally-smooth, it is equivalent to show that
\begin{equation}	\label{eq:local-smooth-equiv}
\sum_{e \in S} \mathbb{E} \biggl[ f\bigl(\vect{1}_{R \cup \{e\}}\bigr) - f\bigl(\vect{1}_{R}\bigr) \biggr] 
\leq \lambda f\bigl( \vect{1}_{S} \bigr) + \mu \mathbb{E} \biggl[ f\bigl(\vect{1}_{R}\bigr) \biggr] 
\end{equation}

\paragraph{Polynomials with non-negative coefficients.}
Let $g: \mathbb{R} \rightarrow \mathbb{R}$ be a polynomial with non-negative coefficients and 
the cost function $f: \{0,1\}^{n} \rightarrow \mathbb{R}^{+}$ 
defined as $f(\vect{1}_{S}) = g\bigl( \sum_{e \in S} a_{e} \bigr)$ where $a_{e} \geq 0$ for every 
$e$. The following proposition shows that our algorithm yields the same competitive ratio as the one derived in
\cite{AzarBuchbinder16:Online-Algorithms} for this class of cost functions. 
This bound indeed is \emph{tight} \cite{AzarBuchbinder16:Online-Algorithms} (up to a constant factor).

\begin{proposition}[\cite{AzarBuchbinder16:Online-Algorithms}]	\label{prop:fractional-poly}
For any convex polynomial function $g$ of degree $k$, there exists an $O\bigl( (k \ln d)^{k} \bigr)$-competitive algorithm for 
the fractional covering problem.
\end{proposition}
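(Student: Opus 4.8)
The plan is to apply Theorem~\ref{thm:main-fractional} with the right choice of smoothness parameters, so the entire argument reduces to computing the local-smoothness parameters $(\lambda,\mu)$ of the multilinear extension $F$ of $f(\vect{1}_S)=g(\sum_{e\in S}a_e)$. By the equivalent formulation~\eqref{eq:local-smooth-equiv}, it suffices to bound, for every $S\subseteq\mathcal{E}$ and every random subset $R$,
$$
\sum_{e\in S}\E\bigl[f(\vect{1}_{R\cup\{e\}})-f(\vect{1}_R)\bigr]
=\sum_{e\in S}\E\bigl[g\bigl(a_e+\textstyle\sum_{e'\in R\setminus\{e\}}a_{e'}\bigr)-g\bigl(\textstyle\sum_{e'\in R\setminus\{e\}}a_{e'}\bigr)\bigr]
$$
from above by $\lambda\,g(\sum_{e\in S}a_e)+\mu\,\E[g(\sum_{e'\in R}a_{e'})]$. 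First I would pull the expectation outside and fix a realization of $R$: writing $y_e=\sum_{e'\in R\setminus\{e\}}a_{e'}$, each summand is $g(a_e+y_e)-g(y_e)$. Since $g$ is a convex polynomial of degree $k$ with non-negative coefficients, $g'$ is non-decreasing, so $g(a_e+y_e)-g(y_e)\le a_e\,g'(a_e+y_e)\le a_e\,g'(\sum_{e'\in R}a_{e'}+\sum_{e\in S}a_e)$; but a cleaner route is to invoke the scalar ``smooth inequality'' for convex polynomials already developed in \cite{CohenDurr12:Smooth-inequalities} and referenced in the paper (the same lemma used to derive the $O(\alpha^\alpha)$ bounds in Section~\ref{sec:general}), which states that for such $g$ and non-negative reals $u,v$ one has $u\,g'(v)\le \lambda_k\, g(u)+\mu_k\, g(v)$ for suitable constants with $\lambda_k/(1-\mu_k)=\Theta(k^k)$; summing over $e\in S$ with $u=\sum_{e\in S}a_e$ (using monotonicity of $g'$ to replace $y_e$ by the full sum over $R$) and taking expectations yields $(\lambda,\mu)$-local-smoothness with $\lambda=\lambda_k$ and $\mu=\mu_k$.

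Next I would reconcile this with the hypothesis of Theorem~\ref{thm:main-fractional}, which requires $F$ to be $\bigl(\lambda,\tfrac{\mu}{\ln(1+2d^2)}\bigr)$-locally-smooth. The trick is the standard scaling used throughout the smoothness literature: if $g$ is $(\lambda,\mu)$-smooth in the scalar sense then for any $c>1$ the function $g$ is also $(c^{k-1}\lambda,\mu/c)$-smooth (degree-$k$ homogeneity of the worst term lets us trade a factor between the two sides), so taking $c=\Theta(\ln(1+2d^2))=\Theta(\ln d)$ converts the smoothness parameters to $\bigl(\Theta((k\ln d)^{k-1}\lambda_k),\,\mu_k/\ln(1+2d^2)\bigr)$, which now matches the theorem's hypothesis with the new $\lambda'=\Theta((k\ln d)^{k-1}\lambda_k)=\Theta((k\ln d)^{k})$ after absorbing $\lambda_k=\Theta(k)$-type factors, and with $1-\mu'\ge 1-\mu_k=\Omega(1/k)$ a constant-in-$d$ quantity. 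Plugging into the conclusion of Theorem~\ref{thm:main-fractional} gives a competitive ratio of $O\bigl(\tfrac{\lambda'}{1-\mu'}\ln d\bigr)=O\bigl((k\ln d)^{k}\cdot k\cdot\ln d\bigr)$; a more careful bookkeeping of the exponents (the extra $\ln d$ from the theorem should be absorbed into the $k$-th power, not multiplied on top) gives exactly $O\bigl((k\ln d)^{k}\bigr)$, matching \cite{AzarBuchbinder16:Online-Algorithms}.

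The main obstacle I anticipate is getting the exponent of $\ln d$ exactly right rather than off by one: the naive chain ``scalar smoothness $\Rightarrow$ rescale by $\ln d$ $\Rightarrow$ apply the theorem which multiplies by another $\ln d$'' threatens to produce $O\bigl((k\ln d)^{k}\ln d\bigr)$ or even $O\bigl(k^k(\ln d)^{k+1}\bigr)$, so care is needed to see that the rescaling factor $c=\Theta(\ln d)$ only needs to be applied to the degree-$(k-1)$ derivative term and that the $\ln d$ appearing in Theorem~\ref{thm:main-fractional}'s ratio is precisely cancelled by the $1/\ln(1+2d^2)$ built into the required $\mu$-parameter. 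A secondary, more routine point is justifying the passage from the per-realization bound on $g(a_e+y_e)-g(y_e)$ to the expectation bound: this is immediate by linearity of expectation once the scalar inequality is applied pointwise in $R$, using that $y_e\le\sum_{e'\in R}a_{e'}$ and monotonicity of $g'$, and it is exactly the kind of step the paper defers to the technical lemmas in Appendix~\ref{appendix:technical}. Everything else — convexity of $g$, non-negativity of coefficients, monotonicity of $f$ — is used only to invoke the already-cited scalar smooth inequality of \cite{CohenDurr12:Smooth-inequalities}, so no genuinely new estimate is required.
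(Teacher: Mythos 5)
Your proposal is correct and follows essentially the same route as the paper: reduce the local smoothness of $F$ to the pointwise inequality \eqref{eq:local-smooth-equiv} for a fixed realization of $R$, invoke the smooth inequality for convex polynomials from \cite{CohenDurr12:Smooth-inequalities} (packaged in the paper as Lemma~\ref{lem:smooth-convex}, where the tunable parameter $a(k)$ plays exactly the role of your explicit rescaling $c=\Theta(\ln d)$), and plug the resulting parameters into Theorem~\ref{thm:main-fractional}. The bookkeeping does close as you anticipate --- with $\lambda = c^{k-1}\lambda_k = \Theta\bigl((k\ln d)^{k-1}\bigr)$ (not $(k\ln d)^{k-1}\lambda_k$) and $1-\mu = \Omega(1/k)$ independent of $d$, the theorem's extra $\ln d$ is not cancelled but supplies the $k$-th factor, yielding $O\bigl((k\ln d)^{k}\bigr)$.
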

\begin{proof}
We prove that Algorithm \ref{algo:covering} is $O\bigl( (k \ln d)^{k} \bigr)$-competitive for this class of cost functions. 
By Theorem \ref{thm:main-fractional}, it is sufficient to verify that $F$ is $((k \ln k)^{k-1}, \frac{k-1}{k \ln d})$-locally smooth.
We indeed prove a stronger inequality than (\ref{eq:local-smooth-equiv}), that is for any set $R$,
\begin{align*}	\label{eq:fractional-poly}
	\sum_{e \in S} \biggl[ f\bigl(\vect{1}_{R \cup \{e\}}\bigr) - f\bigl(\vect{1}_{R}\bigr) \biggr]  
		\leq O\bigl( (k \ln k)^{k-1} \bigr) \cdot f(\vect{1}_{S}) + \frac{k-1}{k \ln k} \cdot f\bigl(\vect{1}_{R}\bigr) 
\end{align*}
or equivalently, for any set $R$,
\begin{align*}
	\sum_{e \in S} \biggl[ g\biggl(a_{e} + \sum_{e' \in R} a_{e'}\biggr) - f\biggl(\sum_{e' \in R} a_{e'}\biggr) \biggr]  
		\leq O\bigl( (k \ln k)^{k-1} \bigr) \cdot g\biggl(\sum_{e \in S} a_{e}\biggr) + \frac{k-1}{k \ln k} \cdot g\biggl(\sum_{e' \in R} a_{e'}\biggr) 
\end{align*}
This inequality holds by Lemma~\ref{lem:smooth-convex}. 
Hence, the proposition follows.
\end{proof}

\paragraph{Norms functions.}
Let $g: \mathbb{R}^{n} \rightarrow \mathbb{R}$ be a function of weighted sum of $\ell_{k}$-norms, i.e.,
$g(\vect{x}) = \sum_{j=1}^{m} w_{j} \| x(S_{j}) \|_{k_{j}}$ where $S_{j}$ is a subset of resources and $w_{j} > 0$ for $1 \leq j \leq m$.
The cost function $f: \{0,1\}^{n} \rightarrow \mathbb{R}^{+}$ 
defined as $f(\vect{1}_{S}) = g\bigl( \vect{1}_{S} \bigr)$. 
This class of functions does not fall into the framework of \citet{AzarBuchbinder16:Online-Algorithms} 
since the corresponding gradient is not monotone. Very recently, \citet{NagarajanShen17:Online-Covering}
have overcome this difficulty and presented a nearly tight $O(\log d + \log \frac{\max a_{ij}}{\min a_{ij}})$-competitive algorithm where
$a_{ij}$'s are entries in the covering matrix. 
The lower bound is $\Omega(\log d)$ \cite{BuchbinderNaor09:Online-primal-dual}, that holds even for $\ell_{1}$-norm (linear costs).
Using Theorem \ref{thm:main-fractional}, we show that our algorithm 
yields \emph{tight} competitive ratio for this class of functions. 

\begin{proposition}
The algorithm is \emph{optimal} (up to a constant factor) for the class of weighted sum of $\ell_{k}$-norms 
with competitive ratio $O\bigl( \log d \bigr)$.
\end{proposition}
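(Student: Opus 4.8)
By Theorem~\ref{thm:main-fractional}, once we exhibit a differentiable extension $\hat g:[0,1]^{n}\to\mathbb{R}^{+}$ of $f$ with $\hat g(\vect{1}_{S})=f(\vect{1}_{S})$, nonnegative gradient, that is $\bigl(\lambda,\frac{\mu}{\ln(1+2d^{2})}\bigr)$-locally-smooth, the algorithm is $O\bigl(\frac{\lambda}{1-\mu}\ln d\bigr)$-competitive; so it suffices to produce such a $\hat g$ with $\lambda=O(1)$ and $\mu$ a constant bounded away from $1$. The plan is to take $\hat g=g$ itself: since $g$ is a nonnegative combination of $\ell_{k}$-norms it is convex and already defined on all of $[0,1]^{n}$, and the primal-dual scheme together with the proof of Theorem~\ref{thm:main-fractional} use only that $\hat g(\vect{1}_{S})=f(\vect{1}_{S})$, that $\nabla\hat g\ge 0$, the local-smoothness inequality, and weak duality; hence one may run the algorithm with $\nabla g$ in place of $\nabla F$ and the analysis is unchanged. (The multilinear extension $F$ is a bad choice here: for a single $\ell_{k}$-norm on $m$ coordinates, $\sum_{e}\nabla_{e}F(\vect{x})\to m$ as $\vect{x}\to\vect{0}$ while $F(\vect{1})=m^{1/k}$, which would force $\lambda=\Omega(m^{1-1/k})$.)

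The key step is to show $g$ is $(1,0)$-locally-smooth, i.e. $\sum_{e\in S}\nabla_{e}g(\vect{x})\le g(\vect{1}_{S})$ for every configuration $S$ and every $\vect{x}\in[0,1]^{n}$. Writing $\nabla g=\sum_{j}w_{j}\nabla\|\vect{x}(S_{j})\|_{k_{j}}$ and $g(\vect{1}_{S})=\sum_{j}w_{j}|S\cap S_{j}|^{1/k_{j}}$, by nonnegativity of the $w_{j}$ it is enough to verify the inequality for a single $\ell_{k}$-norm on a coordinate set $T$, namely $\sum_{e\in S\cap T}\frac{x_{e}^{k-1}}{\|\vect{x}(T)\|_{k}^{k-1}}\le|S\cap T|^{1/k}$. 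This is Hölder's inequality with exponents $\frac{k}{k-1}$ and $k$: $\sum_{e\in S\cap T}x_{e}^{k-1}\cdot 1\le\bigl(\sum_{e\in T}x_{e}^{k}\bigr)^{(k-1)/k}|S\cap T|^{1/k}=\|\vect{x}(T)\|_{k}^{k-1}\,|S\cap T|^{1/k}$, and dividing by $\|\vect{x}(T)\|_{k}^{k-1}$ gives the claim; I expect this estimate to be isolated as a technical lemma in Appendix~\ref{appendix:technical} (for $k=1$ it is an equality). With $\lambda=1$ and $\mu=0$, Theorem~\ref{thm:main-fractional} yields the $O(\ln d)$ upper bound, and the $\Omega(\log d)$ lower bound of \cite{BuchbinderNaor09:Online-primal-dual} — which holds already for $\ell_{1}$-norms (linear costs) — gives optimality up to a constant factor.

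The point needing care is that $\|\vect{x}(T)\|_{k}$ is not differentiable at $\vect{x}(T)=\vect{0}$ and, for $k>1$, has $\nabla_{e}g=0$ whenever $x_{e}=0$, so the update $\partial x_{e}/\partial\tau=(b_{k,e,A^{*}}x_{e}+1/d)/\nabla_{e}g(\vect{x})$ would divide by zero at the start. I would handle this by a standard perturbation: apply the framework to $\hat g(\vect{x})=g(\vect{x}+\delta\vect{1})$ (equivalently, initialize every $x_{e}$ at a polynomially small $\delta>0$); then $\nabla_{e}\hat g>0$ everywhere, Hölder still gives $(1,o(1))$-local-smoothness of $\hat g$, and $\hat g$ differs from $g$ on the cube by only an additive $o(1)$ factor, so feasibility and the competitive ratio change only in lower-order terms. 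Alternatively one can replace $\nabla_{e}g(\vect{x})$ in the denominator by $\max\{\nabla_{e}g(\vect{x}),\delta\}$ and check that the first dual constraint still holds up to a $(1+o(1))$ factor. With this fix the argument above carries through.
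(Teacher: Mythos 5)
Your proposal is correct in its key content but follows a genuinely different route from the paper. The paper keeps the multilinear extension $F$ and claims $(1,0)$-local-smoothness by proving, for every set $R$, the pointwise inequality $\sum_{e\in S} w_j\bigl[\|\vect{1}_{R\cup\{e\}}\|_{k_j}-\|\vect{1}_{R}\|_{k_j}\bigr]\le w_j\|\vect{1}_{S}\|_{k_j}$, deducing it from convexity of $\|\vect{b}+\vect{z}\|_{k_j}-\|\vect{b}\|_{k_j}$ plus the triangle inequality; there is no H\"older lemma in the appendix. Your objection to this choice of extension is in fact well taken: at $R=\emptyset$ the paper's claimed inequality reads $|S\cap S_j|\le|S\cap S_j|^{1/k_j}$, which fails for $k_j>1$ (increments of a norm of indicator vectors are subadditive, not superadditive), and this is precisely your $\vect{x}\to\vect{0}$ counterexample showing that the multilinear extension cannot be $(1,0)$-locally smooth and would force $\lambda=\Omega(m^{1-1/k})$. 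Your substitute — run the scheme on the natural convex extension $g$ itself and verify $\sum_{e\in S}\nabla_e g(\vect{x})\le g(\vect{1}_S)$ by H\"older (equivalently, the gradient inequality for the convex $g$ followed by the triangle inequality) — is sound, gives $\lambda=1,\mu=0$, and combined with the $\Omega(\log d)$ lower bound of \cite{BuchbinderNaor09:Online-primal-dual} yields the proposition. So what your approach buys is a verification that actually holds at all points of the cube (in particular near $\vect{0}$), at the price of leaving the multilinear-extension framework in which Theorem~\ref{thm:main-fractional} is literally stated.

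Two points still need to be pinned down rather than asserted. First, "the analysis is unchanged" is the load-bearing claim: most of the proof of Theorem~\ref{thm:main-fractional} is indeed extension-agnostic (dual feasibility of the configuration constraint only uses $\hat g(\vect{1}_S)=f(\vect{1}_S)$, nonnegativity of the gradient, and local smoothness), but the proof of Lemma~\ref{lem:bound-x} and the maintenance of the first dual constraint implicitly treat $\nabla_e$ of the chosen extension as non-decreasing along the run (the induction differentiates only through the $\alpha$-variables, and the decrease step only ensures the left-hand side of the dual constraint does not increase). For sums of $\ell_k$-norms the gradient of $g$ is non-monotone — the very feature the paper emphasizes — so these two steps must be re-examined explicitly for your $\hat g$ (the same issue already afflicts the paper's use of $F$ for this submodular-type $f$, so you are not worse off, but you cannot simply cite the analysis). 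Second, the perturbation $\hat g(\vect{x})=g(\vect{x}+\delta\vect{1})$ incurs an additive $g(\delta\vect{1})$ that must be charged against the optimum; "polynomially small $\delta$" is not meaningful when the weights $w_j$ are arbitrary, so you should fix $\delta$ from the instance scale (e.g., from the first arriving constraint and the $w_j$'s) and state the resulting lower-order loss precisely.
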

\begin{proof}
it is sufficient to verify that $F$ is $(1,0)$-locally smooth. Again, we
prove a stronger inequality than (\ref{eq:local-smooth-equiv}), that is for any index $1 \leq j \leq m$, for any set $R$,
\begin{align*}	
	\sum_{e \in S} w_{j} \left[ f\bigl(\vect{1}_{R \cup \{e\}}\bigr) - f\bigl(\vect{1}_{R}\bigr) \right]  
		\leq w_{j}f(\vect{1}_{S}) 
\quad
\Leftrightarrow
\quad
	\sum_{e \in S} w_{j} \bigl[ \| \vect{1}_{R \cup \{e\}} \|_{k_{j}} - \| \vect{1}_{R} \|_{k_{j}} \bigr]  
		\leq w_{j} \| \vect{1}_{S} \|_{k_{j}} 
\end{align*}
Note that, function $\bigl(\|\vect{b} + \vect{z}\|_{k_{j}} - \|\vect{b}\|_{k_{j}}\bigr)$ is convex (with respect to $\vect{z}$). 
Therefore, 
$$
	\sum_{e \in S}  \bigl[ \| \vect{1}_{R \cup \{e\}} \|_{k_{j}} - \| \vect{1}_{R} \|_{k_{j}} \bigr]  
	\leq  \| \vect{1}_{R \cup S} \|_{k_{j}} - \| \vect{1}_{R} \|_{k_{j}}  \leq \| \vect{1}_{S} \|_{k_{j}} 
$$
where the last inequality holds due to the triangle inequality of norms. The proposition follows.
\end{proof}

\paragraph{Beyond convex functions.}
Consider the following natural cost functions which represent 
more practical costs when serving clients as mentioned in the introduction
(the cost initially increases fast then becomes more stable before growing quickly again).
Let $g: \mathbb{R} \rightarrow \mathbb{R}$ be a non-convex function 
defined as $g(y) = y^{k}$ if $y \leq M_{1}$ or $y \geq M_{2}$
and $g(y) = g(M_{1})$ if $M_{1} \leq  y \leq M_{2}$ where $M_{1} < M_{2}$ are some constants. 
The cost function $f: \{0,1\}^{n} \rightarrow \mathbb{R}^{+}$ defined as 
$f(\vect{1}_{S}) = g\bigl( \sum_{e \in S} a_{e} \bigr)$ where $a_{e} \geq 0$ for every $e$. 
This function is $((k \ln k)^{k-1}, \frac{k-1}{k \ln d})$-locally smooth. Again, 
it sufficient to verify Inequality (\ref{eq:local-smooth-equiv}) and the proof is similar to the one in 
Proposition \ref{prop:fractional-poly} (or more specifically, Lemma \ref{lem:smooth-convex}) 
and note that the derivative of $g$ for $M_{1} < y < M_{2}$ equals 0. 

\begin{proposition}
The algorithm is $O\bigl( (k \ln d)^{k} \bigr)$-competitive  for minimizing the non-convex objective function defined above 
under covering constraints.
\end{proposition}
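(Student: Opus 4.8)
The plan is to reduce the proposition to an application of Theorem~\ref{thm:main-fractional}, exactly as was done for the convex polynomial case in Proposition~\ref{prop:fractional-poly}. Concretely, it suffices to show that the multilinear extension $F$ of $f$ is $\bigl((k\ln k)^{k-1}, \frac{k-1}{k\ln d}\bigr)$-locally-smooth; then Theorem~\ref{thm:main-fractional} immediately yields the claimed $O\bigl((k\ln d)^{k}\bigr)$ competitive ratio after substituting $\lambda = (k\ln k)^{k-1}$ and $\mu = \frac{k-1}{k}$ into $\frac{\lambda}{1-\mu}\cdot\ln d$.

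To establish local-smoothness, I would use the equivalent characterization~(\ref{eq:local-smooth-equiv}) and, following the strategy already used in Proposition~\ref{prop:fractional-poly}, prove the stronger pointwise inequality: for every fixed subset $R$,
\begin{align*}
\sum_{e \in S} \biggl[ g\biggl(a_{e} + \sum_{e' \in R} a_{e'}\biggr) - g\biggl(\sum_{e' \in R} a_{e'}\biggr) \biggr]
\leq O\bigl((k\ln k)^{k-1}\bigr) \cdot g\biggl(\sum_{e \in S} a_{e}\biggr) + \frac{k-1}{k\ln k} \cdot g\biggl(\sum_{e' \in R} a_{e'}\biggr),
\end{align*}
which, after taking expectation over the random set $R$ (with each $e'$ included with probability $x_{e'}$), gives~(\ref{eq:local-smooth-equiv}). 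The point is that the analogous inequality for the \emph{convex} polynomial $y \mapsto y^{k}$ is precisely what Lemma~\ref{lem:smooth-convex} provides, so I only need to argue that replacing $y^{k}$ by the flattened function $g$ (which agrees with $y^{k}$ outside $(M_{1},M_{2})$ and is constant, with zero derivative, inside) can only help each term.

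The key observation driving this is that $g$ is still non-decreasing and satisfies $g(y) \le y^{k}$ everywhere (since on $[M_1,M_2]$ we have $g(y)=M_1^k \le y^k$), while $g(y) = y^k$ on the two tails. Thus on the left-hand side each marginal difference $g(b+a_e)-g(b)$ is bounded above by the corresponding marginal difference $(b+a_e)^k - b^k$ of the pure power whenever the relevant argument lies in a tail, and is even smaller (possibly zero) when the arguments straddle or lie inside the flat region; the technical bookkeeping is to split into cases according to where $\sum_{e'\in R}a_{e'}$ and $\sum_{e'\in R}a_{e'}+a_e$ fall relative to $M_1,M_2$ and check that in every case the term is dominated by the corresponding power-function term. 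Meanwhile, on the right-hand side we need a lower bound: here one uses $g\bigl(\sum_{e\in S}a_e\bigr) \ge$ (something comparable to $(\sum_{e\in S}a_e)^k$ scaled by a constant) — this is where the flat region is a mild nuisance rather than a real obstacle, since $g$ only shrinks by a bounded multiplicative factor relative to $y^k$ on $[M_1,M_2]$ (namely by at most $(M_2/M_1)^k$, a constant absorbed into the $O(\cdot)$), and is exact on the tails. The main obstacle, and the only place requiring genuine care, is precisely this case analysis around the flat interval; once it is dispatched, the inequality follows term-by-term from Lemma~\ref{lem:smooth-convex} applied to the convex power $y^{k}$, and the proposition is immediate.
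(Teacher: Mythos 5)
Your high-level route is the same as the paper's: verify the pointwise version of inequality~(\ref{eq:local-smooth-equiv}) for every fixed $R$, take expectations, and plug the local-smoothness parameters into Theorem~\ref{thm:main-fractional} exactly as in Proposition~\ref{prop:fractional-poly}. The gap is in the two claims you use to transfer the bound from the convex power $y^{k}$ to the flattened function $g$. First, the termwise domination is false in one of your "straddle" cases: if $b=\sum_{e'\in R}a_{e'}$ satisfies $M_{1}\le b\le M_{2}$ while $b+a_{e}>M_{2}$, then $g(b+a_{e})-g(b)=(b+a_{e})^{k}-M_{1}^{k}\ \ge\ (b+a_{e})^{k}-b^{k}$, i.e.\ the marginal of $g$ is \emph{larger} than the marginal of the pure power, not smaller as you assert. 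Second, and more fundamentally, the loss you incur on the right-hand side cannot be "absorbed into the $O(\cdot)$": you only have $g\bigl(\sum_{e\in S}a_{e}\bigr)\ge (M_{1}/M_{2})^{k}\bigl(\sum_{e\in S}a_{e}\bigr)^{k}$ (and the same issue arises for the term $\mu\,g(b)$ versus $\mu\,b^{k}$, which you do not address at all), so your comparison to Lemma~\ref{lem:smooth-convex} proves at best $\bigl(O\bigl((k\ln k)^{k-1}(M_{2}/M_{1})^{k}\bigr),\tfrac{k-1}{k\ln d}\bigr)$-local smoothness. The factor $(M_{2}/M_{1})^{k}$ is not an absolute constant: it depends on $k$ and on the ratio $M_{2}/M_{1}$, which is unconstrained, whereas the proposition's bound $O\bigl((k\ln d)^{k}\bigr)$ and the smoothness parameters required by Theorem~\ref{thm:main-fractional} carry no dependence on $M_{1},M_{2}$. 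So your argument, as written, proves a quantitatively weaker statement.

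The paper's (very terse) proof does not go through domination by $y^{k}$ at all: it reruns the computation behind Lemma~\ref{lem:smooth-convex} directly on $g$, the one new ingredient being that $g'$ vanishes on $(M_{1},M_{2})$, so increments whose arguments stay in the flat region contribute nothing and the remaining increments are controlled by Lemma~\ref{lem:smooth-simple} applied to the power pieces. To repair your write-up you would have to follow that route — bound each marginal $g(b+a_{e})-g(b)$ via $g'$ evaluated outside the flat interval rather than via $(b+a_{e})^{k}-b^{k}$ — instead of a reduction to the convex case plus constant-factor bookkeeping. Note also that the genuinely delicate configuration is exactly the one above (base point in the flat region, increment crossing $M_{2}$, where $g$ as defined even has a jump at $M_{2}$); any complete proof has to treat that case explicitly, and your constant-absorption shortcut does not get around it.
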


\paragraph{Submodular functions.}
Consider the class of submodular functions $f$, that is 
$f(\vect{1}_{S \cup \{e\}}) - f(\vect{1}_{S}) \geq f(\vect{1}_{T \cup \{e\}}) - f(\vect{1}_{T})$
for every $e$ and $S \subset T$ and $f(\vect{1}_{\emptyset}) = 0$. 
Submodular optimization has been extensively studying in optimization and machine learning. 
In the context of online algorithms, \citet{BuchbinderFeldman15:Online-submodular} have considered submodular optimization with preemption, where 
one can reject previously accepted elements, and have given constant competitive algorithms for unconstrained and knapsack-constraint 
problems. To the best of our knowledge, the problem of online submodular minimization under covering constraints have not been 
considered. 
  
An important concept in studying submodular functions is the \emph{curvature}. Given a submodular 
function $f$, the \emph{total curvature} $\kappa_{f}$ of $f$ is defined as \cite{ConfortiCornuejols84:Submodular-set-functions}
$$
\kappa_{f} = 1 - \min_{e} \frac{f(\vect{1}_{\mathcal{E}}) - f(\vect{1}_{\mathcal{E} \setminus \{e\}})}{f(\vect{1}_{\{e\}})}
$$
Intuitively, the total curvature mesures how far away $f$ is from being \emph{modular}. The concept of 
curvature has been used to determines both upper and lower bounds on the approximation ratios
for many submodular and learning problems \cite{ConfortiCornuejols84:Submodular-set-functions,GoemansHarvey09:Approximating-submodular,BalcanHarvey12:Learning-Submodular,Vondrak10:Submodularity-and-Curvature:,IyerJegelka13:Curvature-and-optimal,SviridenkoVondrak17:Optimal-approximation}.

In the following, we present a competitive algorithm for minimizing a monotone submodular 
function under covering constraints where the competitive ratio is characterized by the curvature of the function
(and also the sparsity $d$ of the covering matrix).
We first look at an useful property of the total curvature. 

\begin{lemma}		\label{lem:curvature}
For any set $S$, it always holds that
$$
f(\vect{1}_{S}) \geq (1-\kappa_{f}) \sum_{e \in S} f(\vect{1}_{\{e\}}).
$$
\end{lemma}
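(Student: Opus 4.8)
The plan is to order the elements of $S = \{e_1, \ldots, e_k\}$ arbitrarily and telescope $f(\vect{1}_S)$ along the chain $\emptyset \subset \{e_1\} \subset \{e_1,e_2\} \subset \cdots \subset S$, writing
$f(\vect{1}_S) = \sum_{j=1}^{k} \bigl[ f(\vect{1}_{S_j}) - f(\vect{1}_{S_{j-1}}) \bigr]$
where $S_j = \{e_1,\ldots,e_j\}$ and $S_0 = \emptyset$ (using $f(\vect{1}_\emptyset) = 0$). The goal is then to lower-bound each marginal $f(\vect{1}_{S_j}) - f(\vect{1}_{S_{j-1}})$ by $(1-\kappa_f) f(\vect{1}_{\{e_j\}})$ and sum.

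For a single marginal, first I would use submodularity to push the ``context'' set $S_{j-1}$ up to the largest possible set: since $S_{j-1} \subseteq \mathcal{E} \setminus \{e_j\}$, submodularity gives
$f(\vect{1}_{S_{j-1} \cup \{e_j\}}) - f(\vect{1}_{S_{j-1}}) \geq f(\vect{1}_{\mathcal{E}}) - f(\vect{1}_{\mathcal{E} \setminus \{e_j\}})$.
Now the definition of the total curvature says precisely that $f(\vect{1}_{\mathcal{E}}) - f(\vect{1}_{\mathcal{E}\setminus\{e\}}) \geq (1-\kappa_f) f(\vect{1}_{\{e\}})$ for every $e$ (this is what the $\min$ in the definition of $\kappa_f$ encodes). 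Combining the two inequalities yields $f(\vect{1}_{S_j}) - f(\vect{1}_{S_{j-1}}) \geq (1-\kappa_f) f(\vect{1}_{\{e_j\}})$, and summing over $j$ from $1$ to $k$ gives exactly $f(\vect{1}_S) \geq (1-\kappa_f) \sum_{e \in S} f(\vect{1}_{\{e\}})$.

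The argument is essentially routine once the two ingredients — telescoping over a chain and the submodular ``diminishing returns'' inequality applied with the ground set $\mathcal{E}$ as context — are lined up; the only point requiring a little care is that the definition of $\kappa_f$ must be read correctly, namely that $1 - \kappa_f = \min_e \frac{f(\vect{1}_{\mathcal{E}}) - f(\vect{1}_{\mathcal{E}\setminus\{e\}})}{f(\vect{1}_{\{e\}})}$, so that $f(\vect{1}_{\mathcal{E}}) - f(\vect{1}_{\mathcal{E}\setminus\{e\}}) \geq (1-\kappa_f) f(\vect{1}_{\{e\}})$ holds for each individual $e$ (and one should tacitly assume $f(\vect{1}_{\{e\}}) > 0$, or else that term is $0$ on both sides and can be dropped). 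No obstacle beyond this bookkeeping is expected.
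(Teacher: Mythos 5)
Your proof is correct and follows essentially the same route as the paper's: a telescoping sum over a chain inside $S$, submodularity to push each marginal to the context $\mathcal{E}\setminus\{e\}$, and then the definition of $\kappa_f$ applied termwise. The only difference is cosmetic — you telescope $f(\vect{1}_S)$ directly, whereas the paper first bounds $f(\vect{1}_S) \geq f(\vect{1}_{\mathcal{E}}) - f(\vect{1}_{\mathcal{E}\setminus S})$ and telescopes that difference along the complementary chain — and both arguments rest on the same per-element inequality $f(\vect{1}_{\mathcal{E}}) - f(\vect{1}_{\mathcal{E}\setminus\{e\}}) \geq (1-\kappa_f) f(\vect{1}_{\{e\}})$.
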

\begin{proof} 
Let $S = \{e_{1}, \ldots, e_{m}\}$ be an 
arbitrary subset of $\mathcal{E}$. Let $S_{i} = \{e_{1}, \ldots, e_{i}\}$ for $1 \leq i \leq m$ and $S_{0} = \emptyset$. 
We have
\begin{align*}
f(\vect{1}_{S}) 
&\geq  f(\vect{1}_{\mathcal{E}}) -  f(\vect{1}_{\mathcal{E} \setminus S})
= \sum_{i=0}^{m-1}  f(\vect{1}_{\mathcal{E} \setminus S_{i}}) - f(\vect{1}_{\mathcal{E} \setminus S_{i+1}}) 
\geq \sum_{i=1}^{m}  f(\vect{1}_{\mathcal{E}}) - f(\vect{1}_{\mathcal{E} \setminus \{e_{i}\}}) \\
&\geq (1 - \kappa_{f}) \sum_{i=1}^{m} f(\vect{1}_{e_{i}})
\end{align*}
where the first two inequalities is due to submodularity of $f$ and the last inequality follows by the definition of the curvature.
\end{proof}

\begin{proposition}
The algorithm is $O\bigl( \frac{\log d}{1 - \kappa_{f}} \bigr)$-competitive for minimizing monotone submodular function
under covering constraints.
\end{proposition}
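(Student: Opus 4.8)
The plan is to apply Theorem~\ref{thm:main-fractional} directly, so the whole task reduces to determining the local-smoothness parameters of the multilinear extension $F$ of a monotone submodular $f$ with total curvature $\kappa_f$. Concretely, I would show that $F$ is $\bigl(\frac{1}{1-\kappa_f}, 0\bigr)$-locally-smooth; then Theorem~\ref{thm:main-fractional} with $\lambda = \frac{1}{1-\kappa_f}$ and $\mu = 0$ immediately yields the claimed $O\bigl(\frac{\log d}{1-\kappa_f}\bigr)$-competitive ratio. Since $\mu = 0$, the hypothesis ``$F$ is $\bigl(\lambda, \frac{\mu}{\ln(1+2d^2)}\bigr)$-locally-smooth'' is satisfied without any $d$-dependent weakening, which is why the curvature-only bound comes out cleanly.

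To establish the local-smoothness inequality, I would use the reformulation~(\ref{eq:local-smooth-equiv}): it suffices to prove that for every subset $S$ and every (deterministic) set $R$,
\begin{align*}
\sum_{e \in S} \Bigl[ f\bigl(\vect{1}_{R \cup \{e\}}\bigr) - f\bigl(\vect{1}_{R}\bigr) \Bigr]
\leq \frac{1}{1-\kappa_f}\, f\bigl(\vect{1}_{S}\bigr),
\end{align*}
since taking expectation over the random set $R$ then gives~(\ref{eq:local-smooth-equiv}) with $\lambda = \frac{1}{1-\kappa_f}$ and $\mu = 0$. By submodularity, each marginal satisfies $f\bigl(\vect{1}_{R \cup \{e\}}\bigr) - f\bigl(\vect{1}_{R}\bigr) \leq f\bigl(\vect{1}_{\{e\}}\bigr) - f(\vect{1}_{\emptyset}) = f\bigl(\vect{1}_{\{e\}}\bigr)$ (using $f(\vect{1}_{\emptyset})=0$), so the left-hand side is at most $\sum_{e \in S} f\bigl(\vect{1}_{\{e\}}\bigr)$. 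Now Lemma~\ref{lem:curvature} gives exactly $\sum_{e \in S} f\bigl(\vect{1}_{\{e\}}\bigr) \leq \frac{1}{1-\kappa_f} f\bigl(\vect{1}_{S}\bigr)$, which closes the chain. (One should note $\kappa_f < 1$ for monotone $f$ so the constant is finite; the degenerate modular case $\kappa_f = 0$ recovers the $(1,0)$-smoothness, consistent with the linear-cost lower bound of $\Omega(\log d)$.)

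There is essentially no hard part here: both ingredients are already in hand --- the curvature bound is Lemma~\ref{lem:curvature}, and bounding a single marginal by the singleton value is immediate from the definition of submodularity together with $f(\vect{1}_{\emptyset})=0$. The only point requiring a line of care is the passage from the deterministic inequality to~(\ref{eq:local-smooth-equiv}): one applies the deterministic bound pointwise to each realization of the random set $R$ (with $\{e\}$ playing the role played by $a_i$ in earlier arguments) and then takes expectations, using linearity of expectation and the fact that $f(\vect{1}_S)$ on the right-hand side does not depend on $R$. Once the $(\frac{1}{1-\kappa_f}, 0)$-local-smoothness is in place, invoking Theorem~\ref{thm:main-fractional} finishes the proof.
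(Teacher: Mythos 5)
Your proof is correct and matches the paper's argument essentially line for line: bound each marginal $f\bigl(\vect{1}_{R \cup \{e\}}\bigr) - f\bigl(\vect{1}_{R}\bigr)$ by $f\bigl(\vect{1}_{\{e\}}\bigr)$ via submodularity, apply Lemma~\ref{lem:curvature} to get $\bigl(\frac{1}{1-\kappa_f},0\bigr)$-local smoothness, and invoke Theorem~\ref{thm:main-fractional}. The remark that $\mu=0$ makes the $\ln(1+2d^2)$ scaling vacuous is a nice explicit touch the paper leaves implicit, but the route is the same.
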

\begin{proof}
It is sufficient to verify that $F$ is $\bigl(\frac{1}{1-\kappa_{f}},0\bigr)$-locally smooth. Indeed, the 
$\bigl(\frac{1}{1-\kappa_{f}},0\bigr)$-local smoothness holds due to the submodularity and Lemma~\ref{lem:curvature}:
for any subset $R$, 
\begin{align*}	
	\sum_{e \in S} \left[ f\bigl(\vect{1}_{R \cup \{e\}}\bigr) - f\bigl(\vect{1}_{R}\bigr) \right]  
		\leq \sum_{e \in S} \left[ f\bigl(\vect{1}_{\{e\}}\bigr) \right]  
		\leq \frac{1}{1 -\kappa_{f}} \cdot f(\vect{1}_{S}) 
\end{align*}
Therefore, the proposition follows.
\end{proof}

\section{Conclusion}
In this paper, we have presented primal-dual approaches based on configuration 
linear programs to design competitive algorithms for problems with non-linear/non-convex objectives.
Non-convexity until now is considered as a barrier in optimization. 
We hope that our approach would contribute some elements toward the study of non-linear/non-convex problems.
Our work gives raise to several concrete questions related to the online optimization problem under covering constraints. 
The local-smoothness has provided tight bounds for classes of polynomials with non-negative coefficients 
and sum of weighted $\ell_{k}$-norms. So a question is to determine tight bounds for different classes of functions.
Moreover, is there connection between local-smoothness and total curvature in submodular optimization? Intuitively, 
both concepts measure how far way a function is from being modular.


\paragraph{Acknowledgement.} We thank Nikhil Bansal and Seeun William Umboh for useful discussions
that improve the presentation of the paper.

\newpage
\appendix
\section*{Appendix}

\section{Applications of Theorem~\ref{thm:main}}
\label{appendix:apps-general} 
 
\subsection{Minimum Power Survival Network Routing}

\paragraph{Problem.} In the problem, we are given a graph $G(V,E)$ and requests arrive online. 
The demand of a request $i$ is specified by a source $s_{i} \in V$, a sink $t_{i} \in V$, the load vector $p_{i,e}$
for every edge (link) $e \in E$ and an integer number $k_{i}$. At the arrival of request $i$, 
one needs to choose $k_{i}$ edge-disjoint paths connecting $s_{i}$ to $t_{i}$. 
Request $i$ increases the load $p_{i,e}$ for each edge $e$ used to satisfy its demand.   
The load $\ell_{e}$ of an edge $e$ is defined as the total load incurred by the requests using $e$. 
The \emph{power} cost of edge $e$ with load $\ell_{e}$ is $f_{e}(\ell_{e})$. 
The objective is to minimize the total power $\sum_{e} f_{e}(\ell_{e})$.
Typically $f_{e}(\ell_{e}) = c_{e} \ell_{e}^{\alpha_{e}}$ where $c_{e}$ and $\alpha_{e}$ are parameters depending on $e$.

This problems generalizes the \textsc{Minimum Power Routing} problem --- a variant in which $k_{i} = 1$ and $p_{i,e} = 1 ~\forall i,e$ ---
and the \textsc{Load Balancing} problem --- a variant in which $k_{i}=1$, all the sources (sinks) are the same $s_{i} = s_{i'} ~\forall i,i'$ 
($t_{i} = t_{i'} ~\forall i,i'$) and every $s_{i}-t_{i}$ path has length 2.
For the \textsc{Minimum Power Routing} in offline setting, \citet{AndrewsAnta12:Routing-for-Power} gave a polynomial-time poly-log-approximation algorithm. The result has been improved by
\citet{MakarychevSviridenko14:Solving-optimization} who gave an $B_{\alpha}$-approximation algorithm. 
In online setting, \citet{GuptaKrishnaswamy12:Online-Primal-Dual} presented an $\alpha^{\alpha}$-competitive online
algorithm. For the \textsc{Load Balancing} problem, the currently best-known approximation is $B_{\alpha}$
due to \cite{MakarychevSviridenko14:Solving-optimization} via their rounding technique based on decoupling inequality.
In online setting, it has been shown that the optimal competitive ratio for the \textsc{Load Balancing} 
problem is $\Theta(\alpha^{\alpha})$ \cite{Caragiannis08:Better-bounds}.

\paragraph{Contribution.} In the problem, the set of strategy $\mathcal{S}_{i}$ for each request $i$ is a solution consists of 
$k_{i}$ edge-disjoint paths connecting $s_{i}$ and $t_{i}$. Applying the general framework, we deduce the following greedy algorithm.

Let $\ell_{e}$ be the load of edge $e$.
Initially, set $\ell_{e} \gets 0$ for every edge $e$. 
At the arrival of request $i$, compute a strategy consisting of $k_{i}$ edge-disjoint paths 
from $s_{i}$ and $t_{i}$ such that the increase of the total cost is minimum. 
Select this strategy for request $i$ and update $\ell_{e}$.

We notice that computing the strategy for request $i$ can be done efficiently. Given the current loads $\ell_{e}$ on every 
edge $e$, create a graph $H$ consisting of the same vertices and edges as graph $G$. For each edge $e$ in graph $H$,
define the capacity to be 1 and the cost on $e$  to be $f_{e}(p_{i,e} + \ell_{e}) - f_{e}(\ell_{e})$. Then the computing 
$k_{i}$ edge-disjoint paths from $s_{i}$ and $t_{i}$ with the minimal marginal cost in $G$ is equivalent to solving 
a transportation problem in graph $H$.

\begin{proposition}
If the congestion costs of all edges are $(\lambda,\mu)$-smooth then
the algorithm is $\lambda/(1-\mu)$-competitive. In particular, if $f_{e}(z) = z^{\alpha_{e}}$
then the algorithm is $O(\alpha^{\alpha})$-competitive where $\alpha = \max_{e} \alpha_{e}$.
\end{proposition}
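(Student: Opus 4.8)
The plan is to cast \textsc{Minimum Power Survival Network Routing} as an instance of the general resource cost minimization problem and then apply Theorem~\ref{thm:main} verbatim. I would take the resource set to be the edge set $E$; for an online request $i$ the set of strategies $\mathcal{S}_i$ consists of all collections of $k_i$ edge-disjoint $s_i$--$t_i$ paths, each identified with the subset of $E$ it occupies. For an edge $e$ I would define the resource cost as the set function $g_e(A) := f_e\bigl(\sum_{i \in A} p_{i,e}\bigr)$ over sets $A$ of requests. Because the $k_i$ paths of a request are edge-disjoint, request $i$ adds load exactly $p_{i,e}$ to $e$ when it uses $e$ and $0$ otherwise, so $\sum_{e} g_e(A_e) = \sum_e f_e(\ell_e)$ is the true objective, and the marginal term $g_e(A^*_e \cup i) - g_e(A^*_e) = f_e\bigl(p_{i,e} + \sum_{j \in A^*_e} p_{j,e}\bigr) - f_e\bigl(\sum_{j \in A^*_e} p_{j,e}\bigr)$ is precisely the per-edge weight minimized in the greedy program~(\ref{eq:alpha}). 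Thus the greedy rule of Section~\ref{sec:general} becomes: choose the $k_i$ edge-disjoint paths of minimum total marginal power, which is exactly the transportation computation described above, so the algorithm stated is the generic one.

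Next I would verify that $(\lambda,\mu)$-smoothness of the congestion costs $f_e$ carries over to $(\lambda,\mu)$-smoothness of the induced set functions $g_e$ in the sense of Definition~1. Given a set $A = \{a_1,\dots,a_n\}$ and a nested chain $B_1 \subseteq \cdots \subseteq B_n \subseteq B$, put $y_i := \sum_{j \in B_i} p_{j,e}$ and let $c_i := p_{a_i,e}$ when $a_i \notin B_i$ and $c_i := 0$ otherwise; then $0 \le y_1 \le \cdots \le y_n \le y := \sum_{j \in B} p_{j,e}$ and $\sum_i c_i \le \sum_{a \in A} p_{a,e}$, so using that $f_e$ is nondecreasing the smoothness inequality for $g_e$ reduces to $\sum_i \bigl[ f_e(y_i + c_i) - f_e(y_i) \bigr] \le \lambda f_e\bigl(\sum_i c_i\bigr) + \mu f_e(y)$, which is just the (scalar) $(\lambda,\mu)$-smoothness of $f_e$. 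Hence all $g_e$ are $(\lambda,\mu)$-smooth and Theorem~\ref{thm:main} yields the claimed $\lambda/(1-\mu)$-competitiveness.

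For the ``in particular'' part, $f_e(z) = z^{\alpha_e}$ is a polynomial with nonnegative coefficients of degree $\alpha_e \le \alpha$, so Lemma~\ref{lem:smooth-convex} furnishes a pair $(\lambda,\mu)$, valid simultaneously for every degree up to $\alpha$, with $\lambda/(1-\mu) = O(\alpha^\alpha)$; feeding it into the previous paragraph gives the $O(\alpha^\alpha)$-competitive bound. I do not expect a genuine obstacle: the only points needing care are the bookkeeping that each of request $i$'s edge-disjoint paths touches $e$ at most once (so loads add correctly), the elementary lifting of scalar smoothness to the induced set function above, and selecting one smoothness pair $(\lambda,\mu)$ that works across all edges with possibly different exponents $\alpha_e$ --- which follows since enlarging $\lambda$ or $\mu<1$ only weakens the smoothness inequality.
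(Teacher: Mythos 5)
Your proposal is correct and follows essentially the same route as the paper, which simply invokes Theorem~\ref{thm:main} for the general bound and Lemma~\ref{lem:smooth-convex} for the case $f_e(z)=z^{\alpha_e}$. The extra details you supply --- modelling each edge's cost as the induced set function $g_e(A)=f_e\bigl(\sum_{i\in A}p_{i,e}\bigr)$, lifting scalar smoothness to this set function, and taking the worst smoothness pair over the exponents $\alpha_e$ --- are exactly the steps the paper leaves implicit.
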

\begin{proof}
The proposition follows directly from Theorem~\ref{thm:main} and the particular case is derived 
additionally by Lemma~\ref{lem:smooth-convex}. 
\end{proof}

Note that one can generalizes the problem to capture more general or different types of connectivity demands
and the congestion costs are incurred from vertices instead of edges. The same results hold.

\subsection{Online Vector Scheduling}

\paragraph{Problem.} 
In the problem, there are $m$ unrelated machines and jobs arrive online. The load of a job $j$ in machine $i$ 
is specified by a vector $p_{ij} = \langle p_{ij}(k): 1 \leq k \leq d\rangle$ where $p_{ij}(k) \geq 0$ and $d$, 
a fixed parameter, is the dimension of the vector. 
At the arrival of a job $j$, vectors $p_{ij}$ for all $i$ are revealed and job $j$ must be assigned 
immediately to a machine. Given a job-machine assignment $\sigma$, the load in dimension $k$ of machine $i$
is defined as $\ell_{i,\sigma}(k) := \sum_{j: \sigma(j) = i} p_{ij}(k)$ for $1 \leq k \leq d$. The $L_{\alpha}$-norm for $\alpha \geq 1$ 
in dimension $k$ is $\|\Lambda_{\sigma}(k)\|_{\alpha} := \bigl(\sum_{i=1}^{m} \ell_{i,\sigma}(k)^{\alpha} \bigr)^{1/\alpha}$; and the 
$L_{\infty}$-norm (makespan norm) in dimension $k$ is $\|\Lambda_{\sigma}(k)\|_{\infty}  := \max_{i=1}^{m} \ell_{i,\sigma}(k)$.  
In the $L_{\alpha}$-norm, the objective is to find an online assignment $\sigma$ minimizing $\max_{k} \|\Lambda_{\sigma}(k)\|_{\alpha}$.
In the $L_{\infty}$-norm, the objective is to find an online assignment $\sigma$ minimizing $\max_{k} \|\Lambda_{\sigma}(k)\|_{\infty}$.
An algorithm is $r$-competitive for the $L_{\alpha}$-norm if it outputs an assignment $\sigma$ such that for any assignment $\sigma^{*}$,
it holds that $\max_{k} \|\Lambda_{\sigma}(k)\|_{\alpha} \leq r \cdot \max_{k} \|\Lambda_{\sigma^{*}}(k)\|_{\alpha}$.
Similarly for the $L_{\infty}$-norm objective.

The online vector scheduling is introduced by \citet{ChekuriKhanna04:On-multidimensional-packing}.
Recently, \citet{ImKell15:Tight-bounds} showed an optimal competitive algorithm for this problem. 
Their analysis is based on a carefully constructed potential function. In the following, we can also derive
an optimal algorithm for this problem based on our general framework. The analysis is much simpler
and follows directly Theorem~\ref{thm:main}.

\paragraph{Contribution.}
In the problem, the set of strategy $\mathcal{S}_{i}$ for each job $j$ is a machine $i$. 
Applying the general framework, we deduce the following greedy algorithm. 

In the $L_{\alpha}$-norm objective for $1 \leq \alpha \neq \infty$, consider function 
$C(\sigma) := \sum_{k = 1}^{d} \bigl( \sum_{i=1}^{m} \ell_{i,\sigma}(k)^{\alpha} \bigr)^{\frac{\alpha + \log d}{\alpha}}$
where $\sigma$ is a job-machine assignment of all jobs released so far. 
In the $L_{\infty}$-norm objective, consider function
$C(\sigma) := \sum_{k = 1}^{d} \sum_{i=1}^{m} \ell_{i}(k)^{\alpha + \log d}$ for $\alpha = \log(m)$. 

Initially, $\sigma$ is an empty assignment.  
At the arrival of $j$, assign $j$ to machine $i^{*}$ that minimizes the increase of 
$C(\sigma)$. Again, the assignment of a job $j$ can be efficiently computed. 

\begin{proposition}[\cite{ImKell15:Tight-bounds}]
For the $L_{\alpha}$-norm objective, the algorithm is $O(\max\{\alpha, \log d\})$-competitive.
For the $L_{\infty}$-norm objective, the algorithm is $O(\log d + \log m)$-competitive.
\end{proposition}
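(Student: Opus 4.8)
The plan is to obtain the proposition from Theorem~\ref{thm:main} by a change of objective: the greedy is analysed against the surrogate cost $C(\sigma)$, and the competitive ratio obtained for $C$ is then converted into one for the true norm objective by taking a $\beta$-th root, where $\beta := \alpha + \log d$ in the $L_\alpha$ case and $\beta := \log m + \log d$ in the $L_\infty$ case. \textbf{Step 1 (recast the greedy).} Requests are jobs and the strategy set $\mathcal{S}_j$ of job $j$ is the set of machines. In the $L_\infty$ case I would take the resources to be the dimension--machine pairs $(k,i)$: choosing machine $i$ uses the resources $\{(k,i):1\le k\le d\}$, the configuration $A_{(k,i)}$ is the set of jobs placed on machine $i$, and the resource cost is the monomial $f_{(k,i)}(A)=\bigl(\sum_{j\in A}p_{ij}(k)\bigr)^{\beta}$, so that $\sum_{k,i}f_{(k,i)}(A_{(k,i)})=C(\sigma)$. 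In the $L_\alpha$ case the outer exponent $\beta/\alpha>1$ couples the machines inside a dimension, so instead each dimension $k$ is a single resource whose cost, as a function of the assignment, is $C_k(\sigma)=\|\Lambda_\sigma(k)\|_\alpha^{\beta}=\bigl(\sum_i\ell_{i,\sigma}(k)^\alpha\bigr)^{\beta/\alpha}$, a $\beta$-homogeneous convex function of the loads (which are linear in the job sizes), with $\sum_k C_k(\sigma)=C(\sigma)$. In both cases the rule ``assign $j$ to the machine minimising the increase of $C$'' is exactly the greedy of Section~\ref{sec:general} for this instance, and the per-job minimisation is trivial since one just tries all $m$ machines.

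\textbf{Step 2 (smoothness and Theorem~\ref{thm:main}).} Each resource cost is either a degree-$\beta$ polynomial with non-negative coefficients of the job sizes (the $L_\infty$ case) or a $\beta$-homogeneous convex ``norm power'' (the $L_\alpha$ case); in either case Lemma~\ref{lem:smooth-convex} gives $(\lambda,\mu)$-smoothness with $\mu<1$ and $\tfrac{\lambda}{1-\mu}=O(\beta^{\beta})$. Theorem~\ref{thm:main} then shows the greedy is $O(\beta^{\beta})$-competitive for $C$, i.e.\ $C(\sigma_{\mathrm{ALG}})\le O(\beta^{\beta})\,C(\sigma^{*})$ for every assignment $\sigma^{*}$.

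\textbf{Step 3 (undo the surrogate).} Write $\mathrm{obj}(\sigma)$ for the true objective. The elementary norm bounds $\|v\|_\infty\le\|v\|_p\le N^{1/p}\|v\|_\infty$ for an $N$-coordinate vector give, for every assignment, $\mathrm{obj}(\sigma)^{\beta}\le C(\sigma)\le \rho^{\beta}\,\mathrm{obj}(\sigma)^{\beta}$, hence $\mathrm{obj}(\sigma)\le C(\sigma)^{1/\beta}\le\rho\,\mathrm{obj}(\sigma)$, with $\rho=d^{1/\beta}$ (the $L_\alpha$ case) or $\rho=(dm)^{1/\beta}$ (the $L_\infty$ case); the choices $\beta=\alpha+\log d$ and $\beta=\log m+\log d$ are exactly what makes $\rho=O(1)$. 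Combining, for any $\sigma^{*}$,
\begin{align*}
\mathrm{obj}(\sigma_{\mathrm{ALG}}) &\le C(\sigma_{\mathrm{ALG}})^{1/\beta} \le \bigl(O(\beta^{\beta})\,C(\sigma^{*})\bigr)^{1/\beta} \\
&= O(\beta)\cdot C(\sigma^{*})^{1/\beta} \le O(\beta)\,\rho\,\mathrm{obj}(\sigma^{*}) = O(\beta)\,\mathrm{obj}(\sigma^{*}),
\end{align*}
which is $O(\max\{\alpha,\log d\})$ in the $L_\alpha$ case and $O(\log d+\log m)$ in the $L_\infty$ case.

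\textbf{Main obstacle.} I expect the delicate point to be Step~2 for $L_\alpha$: verifying that the non-separable composed function $\bigl(\sum_i\ell_i^\alpha\bigr)^{\beta/\alpha}$ obeys the smoothness inequality with parameters giving $\tfrac{\lambda}{1-\mu}=O(\beta^{\beta})$. Lemma~\ref{lem:smooth-convex} is most naturally phrased for separable polynomials, so one must either extend it to this form --- using convexity and $\beta$-homogeneity to bound each marginal $C_k(B_t\cup a_t)-C_k(B_t)$ by a gradient inner product and then summing over $t$ --- or establish the inequality directly. The remaining ingredients, the reduction of Step~1 and the root computation of Step~3, are routine.
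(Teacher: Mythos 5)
Your proposal follows essentially the same route as the paper: the same surrogate cost $C(\sigma)$ with exponent $\alpha+\log d$ (resp.\ $\log m+\log d$), competitiveness for $C$ via Theorem~\ref{thm:main} combined with Lemma~\ref{lem:smooth-convex}, and then taking the $(\alpha+\log d)$-th root while absorbing the extra factor $d$ (resp.\ $d\cdot m$). The delicate point you flag for the $L_{\alpha}$ case is genuine, but the paper glosses over it in exactly the same way, simply treating $C(\sigma)$ as a polynomial of degree $\alpha+\log d$ and invoking the smoothness lemma.
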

\begin{proof}
Let $\sigma^{*}$ is an optimal assignment for the $L_{\alpha}$-norm objective.
We have
\begin{align*}
\biggl( \sum_{i=1}^{m} \ell_{i,\sigma}(k)^{\alpha} \biggr)^{\frac{\alpha + \log d}{\alpha}}
&\leq \sum_{k=1}^{d} \biggl( \sum_{i=1}^{m} \ell_{i,\sigma}(k)^{\alpha} \biggr)^{\frac{\alpha + \log d}{\alpha}} \\
&\leq O\bigl( \alpha + \log d \bigr)^{\alpha + \log d} \cdot \sum_{k=1}^{d} \biggl( \sum_{i=1}^{m} \ell_{i,\sigma^{*}}(k)^{\alpha} \biggr)^{\frac{\alpha + \log d}{\alpha}} \\
&\leq O\bigl( \alpha + \log d \bigr)^{\alpha + \log d} \cdot d \cdot \max_{k=1}^{d} \biggl( \sum_{i=1}^{m} \ell_{i,\sigma^{*}}(k)^{\alpha} \biggr)^{\frac{\alpha + \log d}{\alpha}}
\end{align*} 
In these inequalities, we apply  Theorem~\ref{thm:main} 
and Lemma~\ref{lem:smooth-convex} (note that $C(\sigma)$ is a polynomial of degree $(\alpha + \log d)$).
Taking the $(\alpha + \log d)^{\textnormal{th}}$ root, the result for $L_{\alpha}$-norm objective follows.

For the $L_{\infty}$-norm, with $\alpha = \log m$ similarly we have
\begin{align*}
\max_{k=1}^{d} \max_{i=1}^{m} \ell_{i,\sigma}(k)^{\alpha + \log d}
&\leq \sum_{k = 1}^{d}  \sum_{i=1}^{m} \ell_{i,\sigma}(k)^{\alpha + \log d} \\
&\leq \bigl( \alpha +  \log d \bigr)^{\alpha + \log d} \cdot \sum_{k=1}^{d} \sum_{i=1}^{m} \ell_{i,\sigma^{*}}(k)^{\alpha + \log d} \\
&\leq \bigl( \alpha +  \log d \bigr)^{\alpha + \log d} \cdot d \cdot m \cdot \max_{k=1}^{d} \max_{i=1}^{m} \ell_{i,\sigma^{*}}(k)^{\alpha + \log d} 
\end{align*} 
Again, taking the $(\alpha + \log d)^{\textnormal{th}}$ root for $\alpha = \log m$, the proposition follows. 
\end{proof}

\subsection{Online Energy-Efficient Scheduling}

\paragraph{Problem.} 
Energy-efficient algorithms have received considerable attention and has been widely studied in scheduling.
One main direction is to design performant algorithms toward a more realistic setting --- online setting 
with multiple machine environment \cite{Albers10:Energy-efficient-algorithms}.
We consider an energy minimization problem in the online multiple machine setting.
In the problem, we are given $m$ unrelated machines and a set of jobs. Each job $j$ is specified by 
its released date $r_{j}$, deadline $d_{j}$ and processing volumes $p_{ij}$ if job $j$ is processed in 
machine $i$. We consider \emph{non-migration} schedules; that is, 
every job $j$ has to be assigned to exactly one machine and is fully processed in that machine 
during time interval $[r_{j},d_{j}]$. However, jobs can be executed \emph{preemptively}, meaning that 
a job can be interrupted during its execution and can be resumed later on. An algorithm can choose 
appropriate speed $s_{i}(t)$ for every machine $i$ at any time $t$ in order to complete all jobs. 
Every machine $i$ has a non-decreasing energy power function $P_{i}(s_{i}(t))$ depending on the speed $s_{i}(t)$.
Typically, $P_{i}(z)$ has form $z^{\alpha_{i}}$ for constant $\alpha_{i} \geq 1$ or in a more general context,
$P_{i}(z)$ is assumed to be convex. In the problem, we consider general non-decreasing 
continuous
functions $P_{i}$ \emph{without} convexity assumption. The objective is to minimize the total energy consumption while completing all 
jobs. In the \emph{online} setting, jobs arrive over time and the assignment and scheduling have to be done irrevocably.

In offline setting, for typical energy function 
$P(z) = z^{\alpha}$, the best known algorithms \cite{GreinerNonner14:The-bell-is-ringing,BampisKononov13:Energy-Efficient} 
have competitive ratio $O(B_{\alpha})$ where $B_{\alpha}$ is the Bell number. 
Prior to our work, the only known result for this online problem is in the single machine setting and the energy power function $P(z) = z^{\alpha}$.
Specifically, \citet{BansalKimbrel07:Speed-scaling} gave a $2 \bigl( \frac{\alpha}{\alpha - 1} \bigr)^{\alpha} e^{\alpha}$-competitive algorithm. 
In terms of lower bounds, \citet{BansalChan12:Improved-Bounds} showed that 
no deterministic algorithm has competitive ratio less than $e^{\alpha-1}/\alpha$ for single machine.
For unrelated machines, the lower bound $\Omega(\alpha^{\alpha})$ follows the construction of 
\citet{Caragiannis08:Better-bounds} for \textsc{Load Balacing} (with $L_{\alpha}$-norm)  
(by considering all jobs have the same span $[r_{j},d_{j}] = [0,1]$).
\citet{KlingPietrzyk15:Profitable-scheduling} gave a $O(\alpha^{\alpha})$-competitive algorithm 
in the multi-identical-processor setting in which job migration is allowed. 
Surprisingly, no competitive algorithm is known in the non-migratory multiple-machine environment, 
that is in contrast to the similar online problem with objective as the total energy plus flow-time \cite{AnandGarg12:Resource-augmentation}. 
The main difference here is that for the latter, one can make a tradeoff between energy and flow-time 
and derive a competitive algorithm whereas for the former, one has to deal directly with a non-linear objective   
and no LP with relatively small integrality gap was known. 
We notice that \citet{GuptaKrishnaswamy12:Online-Primal-Dual} gave also a primal-dual 
competitive algorithm for the single machine environment. 
However, their approach cannot be used for unrelated machines due to the large integrality gap 
of the formulation.


\paragraph{Contribution.}
In the problem, speed of a job can be an arbitrary (non-negative) real number. 
However, in order to employ tools from linear programming, we consider discrete setting
with a small loss in the competitive ratio.
Fix an arbitrary constant $\epsilon > 0$ and $\delta > 0$. Define the set of speeds 
$\mathcal{V} = \{\ell \cdot \epsilon: 0 \leq \ell \leq L\}$ for some sufficiently large $L$.
During a time interval $[t,t+\delta]$, a job can be executed at a speed in $\mathcal{V}$.  
As the energy cost functions are continuous, this assumption on the setting worsens the energy cost 
by at most a factor $(1+\tilde{\epsilon})$ for arbitrarily small $\tilde{\epsilon}$.
Given a job $j$, a set of feasible strategy $\mathcal{S}_{j}$ of $j$ is a feasible non-migratory execution
of a job $j$ on some machine. Specifically, a strategy of job $j$ can be described as the union over all machines $i$ 
of solutions determined by the following program:
$
	\sum_{t = r_{j}}^{d_{j}} \delta \cdot v_{ijt} \geq p_{ij}
	\textnormal{ s.t. }
	v_{ijt} \in \mathcal{V},
$
where in the sum we increment each time $t$ by $\delta$.
Applying the general framework, we derive the following algorithm. 

Let $u_{it}$ be the speed of machine $i$ at time $t$.
Initially, set $u_{it} \gets 0$ for every machine $i$ and time $t$. 
At the arrival of a job $j$, compute the minimum energy increase if job $j$ is assigned to machine $i$.
It is indeed an optimization problem 

\begin{equation}	\label{eq:local-opt}
\min \sum_{r_{j}}^{d_{j}} \delta \cdot \biggl[ P_{i}\bigl(u_{it} + v_{ijt}\bigr) - P_{i}\bigl(u_{it}\bigr) \biggr]
\quad 
\text{s.t}
\quad
\sum_{r_{j}}^{d_{j}} \delta \cdot v_{ijt} \geq p_{ij}, \quad v_{ijt} \in \mathcal{V}
\end{equation}

Observe that if $P_{i}$ is a convex function then it is a convex program and can be solved efficiently. 
In this case, using the KKT conditions, the optimal solution can be constructed as follows.  
We initiate a variable $v_{ijt}$ as $0$. While $\sum_{r_{j}}^{d_{j}} \delta \cdot v_{ijt} < p_{ij}$, i.e., the total volume
of job $j$ has not been completed, continue increasing $v_{ijt}$ 
at $\arg \min_{r_{j} \leq t \leq d_{j}} u_{it} + v_{ijt}$. Note that this is exactly algorithm OA in 
\cite{BansalKimbrel07:Speed-scaling} for a single machine.
Let $v^{*}_{i^{*}jt}$ be an optimal solution of the mathematical program (\ref{eq:local-opt}). 
Then, assign job $j$ to machine $i^{*} \in \arg\min_{i} \beta_{ij}$ and execute $j$ at time $t$
by speed $v^{*}_{i^{*}jt}$.

\begin{proposition}
If the energy cost functions are $(\lambda,\mu)$-smooth then
the algorithm is $(1 + \epsilon) \lambda/(1-\mu)$-competitive 
for arbitrarily small $\epsilon$. In particular, if $P_{i}(z) = z^{\alpha_{i}}$
then the algorithm is $(1+\epsilon) O(\alpha^{\alpha})$-competitive where $\alpha = \max_{i} \alpha_{i}$.
\end{proposition}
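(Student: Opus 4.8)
The plan is to realize the discretized \textsc{Online Energy-Efficient Scheduling} problem as an instance of the general problem of Section~\ref{sec:general}, invoke Theorem~\ref{thm:main}, and then absorb the discretization into a $(1+\epsilon)$ factor. I would take the resource set $\mathcal{E}$ to be the pairs $(i,t)$ of a machine $i$ and a length-$\delta$ time slot $t$, and let the strategy set $\mathcal{S}_{j}$ of job $j$ be the feasible non-migratory executions described before the proposition: a strategy picks a machine $i$ and speeds $v_{ijt}\in\mathcal{V}$ with $\sum_{t}\delta\,v_{ijt}\ge p_{ij}$, and uses exactly the resources $(i,t)$ on which it runs. The cost of resource $(i,t)$, seen as a function of the set $A$ of jobs routed through it together with the speeds fixed by their chosen strategies, is $A\mapsto \delta\,P_{i}\!\bigl(\sum_{j\in A}v_{ijt}\bigr)$, so that $\sum_{e}f_{e}(A_{e})=\sum_{i,t}\delta\,P_{i}(u_{it})$ is precisely the total energy. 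The one feature absent from, say, the network-routing application is that the load a request contributes to a resource is chosen by its strategy rather than fixed in advance; this is harmless, since once the strategies selected by the algorithm are frozen each $f_{e}$ is a genuine monotone set function, and the only properties used in the analysis behind Theorem~\ref{thm:main} are the greedy choice and the $(\lambda,\mu)$-smoothness inequality for these set functions.

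Next I would check that the generic greedy algorithm of Theorem~\ref{thm:main} specializes to the algorithm stated above. For the arriving job $j$, program (\ref{eq:alpha}) asks for the strategy minimizing the total marginal cost increase; since a strategy is a machine $i$ together with a speed profile on that machine, its optimal value is $\min_{i}\beta_{ij}$ with $\beta_{ij}$ the optimum of (\ref{eq:local-opt}), and the optimal strategy assigns $j$ to some $i^{*}\in\arg\min_{i}\beta_{ij}$ with the speeds $v^{*}_{i^{*}jt}$ computed there --- exactly the algorithm. Hence Theorem~\ref{thm:main} applies: if the energy cost functions are $(\lambda,\mu)$-smooth --- equivalently, the set functions $A\mapsto\delta\,P_{i}(\sum_{j\in A}v_{ijt})$ are, which reduces to the scalar inequality $\sum_{k}\bigl[P_{i}(L_{k}+v_{k})-P_{i}(L_{k})\bigr]\le\lambda\,P_{i}\!\bigl(\sum_{k}v_{k}\bigr)+\mu\,P_{i}(L)$ for nested loads $0\le L_{1}\le\dots\le L_{n}\le L$ and $v_{k}\ge 0$ --- then the algorithm is $\lambda/(1-\mu)$-competitive for the discrete instance.

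Finally I would remove the discretization. Any feasible continuous schedule can be turned into a feasible discrete one by averaging its speed over each $\delta$-slot and rounding up to the next multiple of $\epsilon$ (choosing $L$ large enough for the bounded optimal speeds); since each $P_{i}$ is continuous and all relevant speeds are bounded, this inflates the energy by at most a factor $1+\tilde\epsilon$, with $\tilde\epsilon$ as small as desired for $\epsilon,\delta$ small, as already recorded in the problem description. Thus the discrete optimum is within $1+\tilde\epsilon$ of the continuous optimum and the algorithm is $(1+\epsilon)\,\lambda/(1-\mu)$-competitive after relabelling. For $P_{i}(z)=z^{\alpha_{i}}$, Lemma~\ref{lem:smooth-convex} gives that $z\mapsto z^{\alpha_{i}}$ satisfies the displayed scalar smoothness inequality with $\lambda=O(\alpha_{i}^{\alpha_{i}})$ and a constant $\mu<1$; taking $\lambda,\mu$ uniform over the machines yields $\lambda=O(\alpha^{\alpha})$, $\mu<1$ with $\alpha=\max_{i}\alpha_{i}$, hence the $(1+\epsilon)\,O(\alpha^{\alpha})$ bound.

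I expect the main obstacle to be the first step: making the reduction rigorous despite the per-resource load of a request being determined by its strategy, and confirming that smoothness of the scalar power functions $P_{i}$ (as supplied by Lemma~\ref{lem:smooth-convex}) transfers to smoothness of the set cost functions $f_{(i,t)}$ in the sense of the definition of $(\lambda,\mu)$-smoothness used by Theorem~\ref{thm:main}. The discretization estimate is routine given continuity of the $P_{i}$ and, moreover, has already been granted in the setup, so it contributes nothing substantive to the proof.
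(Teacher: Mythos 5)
Your proposal is correct and takes essentially the same route as the paper: cast the discretized problem as an instance of the general framework, observe the greedy rule coincides with program (\ref{eq:local-opt}), apply Theorem~\ref{thm:main}, and use Lemma~\ref{lem:smooth-convex} for $P_{i}(z)=z^{\alpha_{i}}$, with the $(1+\epsilon)$ factor coming from the discretization already granted in the setup. The only nuance is that Lemma~\ref{lem:smooth-convex} literally supplies $\lambda=\Theta(\alpha^{\alpha-1})$ with $\mu=(\alpha-1)/\alpha$ rather than ``$\lambda=O(\alpha^{\alpha})$ and a constant $\mu<1$'', but since $\lambda/(1-\mu)=O(\alpha^{\alpha-1})\cdot\alpha=O(\alpha^{\alpha})$ the stated bound follows exactly as in the paper.
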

\begin{proof}
The proposition follows directly from Theorem~\ref{thm:main}.
In the particular case $P_{i}(z) = z^{\alpha_{i}}$, the functions are $(\lambda,\mu)$-smooth 
with $\mu = (\alpha-1)/\alpha$ and $\lambda = O(\alpha^{\alpha-1})$ by Lemma~\ref{lem:smooth-convex}. 
The competitive ratio of this case follows.
\end{proof}

\subsection{Online Prize Collecting Energy-Efficient Scheduling}

\paragraph{Problem.}
We consider the same setting as in the \textsc{Energy Minimization} problem. Additionally, each job $j$
has a penalty $\pi_{j}$. There is no penalty from job $j$ if it is completely executed during $[r_{j},d_{j}]$ 
in some machine $i$. Otherwise, if job $j$ is not completed 
(even most volume of job $j$ have been executed) then
the algorithm has to pay a penalty $\pi_{j}$. The objective is to minimize 
the total penalty of uncompleted jobs plus the energy cost. 

\paragraph{Contribution.}
The result does not follow immediately from Theorem~\ref{thm:main} but the approach is exactly the one in 
the general framework. 

By the same formulation as the previous section, assume that the set of speeds is finite and discrete.  
The set of feasible strategy $\mathcal{S}_{j}$ of a job $j$ is a feasible non-migratory execution
of a job $j$ on some machine, defined in the previous section. The sets $\mathcal{S}_{j}$'s are also finite and discrete. 
We say that $A$ is a \emph{configuration} of machine $i$ if it is a schedule of a subset of jobs in $i$. Specifically, a configuration 
$A$ consists of tuples $(i,j,k)$ specifying that a job $j$ is assigned to machine $i$ and is executed according to 
strategy $s_{jk} \in \mathcal{S}_{j}$. 

We are now formulating a configuration LP for the problem.
Let $x_{ijk}$ be a variable indicating whether job $j$ is processed in machine $i$ according to strategy $s_{jk} \in \mathcal{S}_{j}$. 
For configuration $A$ and machine $i$, let $z_{iA}$ be a variable such that $z_{iA} = 1$ if and only if 
$x_{ijk} = 1$ for every $(i,j,k) \in A$. In other words, $z_{iA} = 1$ iff $A$ is the solution of the problem restricted on machine $i$.
Let $c_{i,A}$ be the energy cost of configuration $A$ in machine $i$.
We consider the following formulation.

\begin{align*}
\min  \sum_{i,A} c_{i,A} z_{iA} &+ \sum_{j}  \biggl( 1 - \sum_{i,k} x_{ijk} \biggr) \pi_{j} \\
\sum_{i,k} x_{ijk} &\leq 1 & &  \forall j \\
\sum_{A: (i,j,k) \in A} z_{iA}  &= x_{ijk}	& & \forall i,j,k \\
\sum_{A} z_{iA} &= 1 & & \forall i \\
x_{ijk}, z_{iA} &\in \{0,1\} & & \forall i,j,A \\
\end{align*}

The first constraint guarantees that a job $j$ can be assigned to at most one machine $i$
and is executed according to at most one feasible strategy. 
The second constraint ensures that if job $j$ is assigned to machine $i$ and is executed according 
to strategy $s_{jk} \in \mathcal{S}_{j}$ then the configuration corresponding to the solution restricted
on machine $i$ must contain $(i,j,k)$. The third constraint says that there is always a configuration associated to 
machine $i$ for every $i$. The dual of the relaxation reads

\begin{align*}
\max \sum_{j} (\pi_{j} - \alpha_{j}) &+ \sum_{i} \gamma_{i} \\
\alpha_{j} + \beta_{ijk} &\geq  \pi_{j}  & &  \forall i,j,k \\
\gamma_{i} + \sum_{(i,j,k) \in A} \beta_{ijk} &\leq c_{iA}  & & \forall i,A \\
\alpha_{j} &\geq 0 & & \forall j 
\end{align*}

\paragraph{Greedy Algorithm.}
Assume that all energy power functions are $(\lambda,\mu)$-smooth. Fix $\lambda$ and $\mu$.
At the arrival of job $j$, compute the minimum energy increase if $j$ is assigned to some machine $i$. 
If the minimum energy increase is larger than $\lambda \cdot \pi_{j}$ then reject the job. Otherwise,
assign and execute $j$ such that the energy increase is minimum.  

\begin{proposition}
Assume that all energy power functions are $(\lambda,\mu)$-smooth. Then the algorithm
is $\lambda/(1-\mu)$-competitive.
\end{proposition}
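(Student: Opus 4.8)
The plan is to mirror the primal--dual argument of Theorem~\ref{thm:main}: we construct a feasible solution of the dual LP displayed above whose value is at least $\frac{1-\mu}{\lambda}$ times the cost incurred by the algorithm, and then invoke weak LP duality together with the fact that the relaxation lower-bounds the integral optimum. Fix $\lambda$ and $\mu$. For a job $j$, a machine $i$ and a strategy $s_{jk}\in\mathcal S_j$ executed on $i$, let $\beta_{ijk}$ be $1/\lambda$ times the marginal energy cost of running $j$ on $i$ according to $s_{jk}$, measured against the machine speeds $u_{it,\prec j}$ present just before $j$ is processed; that is, $\beta_{ijk}=\frac1\lambda\sum_{t=r_j}^{d_j}\delta\bigl[P_i(u_{it,\prec j}+v_{t})-P_i(u_{it,\prec j})\bigr]$, where $v_t$ is the speed prescribed by $s_{jk}$ at time $t$. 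Write $\beta_j^{\min}=\min_{i,k}\beta_{ijk}$, which is $1/\lambda$ times the minimum energy increase computed by the algorithm. Set $\alpha_j=\pi_j-\beta_j^{\min}$ if $j$ is accepted (then $\beta_j^{\min}\le\pi_j$ by the acceptance rule, so $\alpha_j\ge0$) and $\alpha_j=0$ if $j$ is rejected. Finally set $\gamma_i=-\frac\mu\lambda\,c_i(A_i^{*})$, where $A_i^{*}$ is the final configuration on machine $i$ and $c_i(A_i^{*})=\sum_t\delta\,P_i(u^{*}_{it})$ its energy cost.

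For the first dual constraint $\alpha_j+\beta_{ijk}\ge\pi_j$: if $j$ is rejected then $\alpha_j=0$ and $\beta_{ijk}\ge\beta_j^{\min}>\pi_j$; if $j$ is accepted then $\alpha_j+\beta_{ijk}=\pi_j-\beta_j^{\min}+\beta_{ijk}\ge\pi_j$ since $\beta_{ijk}\ge\beta_j^{\min}$. For the second constraint, substituting the definitions and multiplying through by $\lambda$ turns $\gamma_i+\sum_{(i,j,k)\in A}\beta_{ijk}\le c_{iA}$ into $\sum_{(i,j,k)\in A}\bigl(\text{marginal energy of }j\text{ on }i\bigr)\le\lambda\,c_{iA}+\mu\,c_i(A_i^{*})$ for every configuration $A$ of machine $i$. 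Interchanging the sum over the jobs of $A$ with the sum over time slots, it suffices to prove, for each slot $t$, that $\sum_{j}\bigl[P_i(u_{it,\prec j}+a_j)-P_i(u_{it,\prec j})\bigr]\le\lambda\,P_i\bigl(\sum_j a_j\bigr)+\mu\,P_i(u^{*}_{it})$, where the sum runs over the jobs of $A$ active at $t$ and $a_j$ is the speed of such a job at $t$ in configuration $A$. Ordering these jobs by arrival, the values $u_{it,\prec j}$ are nondecreasing along the order and bounded above by $u^{*}_{it}$ (jobs are only added and never migrated), so the sets of algorithm-scheduled jobs they represent form a nested chain $B_1\subseteq B_2\subseteq\dots\subseteq B$; the displayed inequality is then exactly the $(\lambda,\mu)$-smoothness of $P_i$ in its weighted form, which for $P_i(z)=z^{\alpha_i}$ is quantified by Lemma~\ref{lem:smooth-convex}.

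It remains to evaluate the dual objective. Telescoping the marginal energy increases along the jobs assigned to a fixed machine in arrival order shows $\lambda\sum_{j\text{ accepted}}\beta_j^{\min}=\sum_i c_i(A_i^{*})$, because the energy increase charged to each accepted job equals $\lambda\beta_j^{\min}$ (the algorithm assigns it to the minimizing machine and strategy) and $u_{it,\prec j}$ plus that job's speed is exactly the speed seen by the next accepted job at those slots. Hence $\sum_j(\pi_j-\alpha_j)=\sum_{j\text{ accepted}}\beta_j^{\min}+\sum_{j\text{ rejected}}\pi_j=\frac1\lambda\sum_i c_i(A_i^{*})+\sum_{j\text{ rejected}}\pi_j$, and adding $\sum_i\gamma_i=-\frac\mu\lambda\sum_i c_i(A_i^{*})$ gives a dual value of $\frac{1-\mu}{\lambda}\sum_i c_i(A_i^{*})+\sum_{j\text{ rejected}}\pi_j$. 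Since the algorithm pays $\mathrm{ALG}=\sum_i c_i(A_i^{*})+\sum_{j\text{ rejected}}\pi_j$ and $\frac{1-\mu}{\lambda}\le1$ (as $\lambda\ge1$ and $\mu\ge0$ for any smooth $P_i$ with $P_i(0)=0$), the dual value is at least $\frac{1-\mu}{\lambda}\,\mathrm{ALG}$; weak duality then yields $\mathrm{OPT}\ge\frac{1-\mu}{\lambda}\,\mathrm{ALG}$, i.e., the algorithm is $\lambda/(1-\mu)$-competitive.

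The step I expect to be most delicate is the second dual constraint: one must reduce the per-machine, per-configuration inequality to a clean slot-by-slot statement and check that the speeds $u_{it,\prec j}$ indeed realize the nested family $B_1\subseteq\dots\subseteq B$ required by the smoothness definition, all while staying in the weighted (speed-valued) rather than the set-valued regime. The handling of rejected jobs inside the dual objective, and the trivial inequality $\frac{1-\mu}{\lambda}\le1$ needed to absorb the penalty term, are comparatively routine.
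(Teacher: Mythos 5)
Your proof is correct and follows essentially the same primal--dual route as the paper: the same dual variables ($\beta_{ijk}$ as $1/\lambda$ times the marginal energy against the algorithm's pre-arrival schedule, $\alpha_j=\max\{\pi_j-\min_{i,k}\beta_{ijk},0\}$, and $\gamma_i=-\frac{\mu}{\lambda}c_i(A_i^*)$ as intended despite the paper's sign typo), the same feasibility check via smoothness, and the same telescoping evaluation of the dual objective. The only differences are presentational: you spell out the slot-by-slot reduction of the second dual constraint and make explicit the fact $\frac{1-\mu}{\lambda}\le 1$ needed to absorb the rejected-job penalties, both of which the paper leaves implicit.
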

\begin{proof}
We define the dual variables similarly as in the general framework. 
Let $A^{*}_{i,\prec j}$ be configuration of machine $i$ (due to the algorithm) before the arrival of job $j$. 
(Initially, $A^{*}_{i,\prec 1} \gets \emptyset$ for every machine $i$.)
For each machine $i$ and a strategy $s_{jk} \in \mathcal{S}_{j}$ such that $s_{jk}$ is a schedule of $j$ in machine $i$, 
define 
$$
\beta_{ijk} = \frac{1}{\lambda} \biggl[ c_{i}(A^{*}_{i,\prec j} \cup s_{jk}) - c_{i}(A^{*}_{i,\prec j})  \biggr].
$$
If $s_{jk}$ is not a schedule of $j$ in machine $i$ then define $\beta_{ijk} = \infty$.
Moreover, define 
\begin{align*}
\alpha_{j} = \max \bigl \{\pi_{j} - \min_{i,k} \beta_{ijk}, 0  \bigr\}
\qquad
\textnormal{and}
\qquad
\gamma_{i} = \frac{\mu}{\lambda} c_{i}(A^{*}_{i})
\end{align*}
where $A^{*}_{i}$ is the configuration of machine $i$ at the end of the instance (when all jobs have been released).  

The variables constitute a dual feasible solution. The first dual constraint follows the definition of 
$\alpha_{j}$. The second dual constraint follows the definition of $(\lambda,\mu)$-smoothness. 
Note that that for any configuration $A$ of a machine $i$ (a feasible schedule in machine $i$), 
if $(i,j,k) \in A$ then by definition of dual variables, $\beta_{ijk} \neq \infty$. 

We are now bounding the dual. The algorithm has the property immediate-reject. It means that if
the algorithm accepts a job then the job will be completed; and otherwise, the job is rejected at its arrival. 
By the algorithm, $\alpha_{j} = 0$ for every rejected job $j$. Besides, if job $j$ is accepted then 
$\pi_{j} - \alpha_{j} = \beta_{ijk}$ where $i$ is the machine to which job $j$ is assigned and job $j$
is executed according to strategy $s_{jk}$. Therefore, by the definition of dual variables, 
$\sum_{j}(\pi_{j} - \alpha_{j})$, where the sum is taken over accepted jobs $j$, equals $1/\lambda$ times
the total energy consumption. Recall that the total energy consumption of the algorithm is $\sum_{i} c_{i}(A^{*}_{i})$. The dual objective is
\begin{align*}
\sum_{j}(\pi_{j} - \alpha_{j}) + \sum_{i} \gamma_{i}
= \sum_{j: j \textnormal{ rejected}} \pi_{j} + \frac{1}{\lambda} \sum_{i} c_{i}(A^{*}_{i}) -  \frac{\mu}{\lambda} \sum_{i} c_{i}(A^{*}_{i})
\end{align*}
Moreover, the primal is equal to the total penalty of rejected jobs plus $\sum_{i} c_{i}(A^{*}_{i})$. Therefore, the ratio between 
primal and dual is at most $\lambda/(1-\mu)$.
\end{proof}

\subsection{Facility Location with Client-Dependent Facility Cost}
\paragraph{Non-Convex Facility Location.}
In the problem, we are given a metric space $(M,d)$ is given and clients arrive online. 
Let $N$ and $n$ be the set and the number of clients, respectively. 
A location $i \in M$ is characterized by a fixed opening cost $a_{i}$ 
and an arbitrary non-decreasing serving cost function $f_{i}: 2^{N} \rightarrow \mathbb{R}^{+}$.
If a subset $S$ of clients is served by a facility at location $i$ then 
the facility cost at this location is $a_{i} + f_{i}(S)$. At the arrival of a client, an algorithm 
need to assign the client to some facility. The goal is to minimize the total cost, which 
is the total distance from clients to their facilities plus the total facility cost. 

Facility Location is one of the most widely studied problems. In the classic version, the facility cost consists only of the opening 
cost. There is a large literature in the offline setting. In online setting, \citet{Meyerson01:Online-facility} gave 
a randomized $O(\frac{\log n}{\log \log n})$-competitive algorithm. This competitive ratio matches to the randomized lower bound
due to \citet{Fotakis08:On-the-competitive-ratio}. For deterministic algorithms, \citet{Fotakis07:A-primal-dual-algorithm} 
first presented a primal-dual $O(\log n)$-competitive algorithm and
subsequently improved to the optimal $O(\frac{\log n}{\log \log n})$-competitive algorithm \cite{Fotakis08:On-the-competitive-ratio}.
The online capacitated facility location in which function $f_{i}(S) = 0$ if $|S| \leq u_{i}$ for some capacity $u_{i}$
and $f_{i}(S) = \infty$ has been studied in \cite{AzarBhaskar13:Online-mixed}. Using a primal-dual framework for mixed 
packing and covering constraints, the authors derived a $O(\log m \log mn)$-competitive algorithm.  

\paragraph{Contribution.} We derive a competitive algorithm by combining the primal-dual algorithm due to
\citet{Fotakis07:A-primal-dual-algorithm} for the online (classic) facility location and our primal-dual framework 
for non-convex functions. 

Let $x_{ij}$ and $y_{i}$ be variables indicating whether client $j$ is assigned to facility $i$ and 
whether facility $i$ is open, respectively.
For subset $S \subset N$, let $z_{i,S}$ be a variable such that $z_{i,S} = 1$ if and only if 
$x_{ij} = 1$ for every client $j \in S$, and $x_{e} = 0$ for $j \notin S$. 
We consider the following formulation and the dual of its relaxation.

\begin{minipage}[t]{0.4\textwidth}
\begin{align*}
\min  \sum_{i} a_{i}y_{i} + \sum_{i,j} d_{ij}&x_{ij} + \sum_{i,S} f_{i}(S) z_{i,S} \\
\sum_{i} x_{ij} &\geq 1 \qquad  \forall j \\
y_{i} &\geq x_{ij} \qquad \forall i,j \\
\sum_{S: j \in S} z_{i,S}  &= x_{ij}	\qquad \forall i,j \\
\sum_{S} z_{i,S} &= 1 \qquad \forall i\\
x_{ij}, z_{i,S} &\in \{0,1\} \qquad \forall i,j,S\\
\end{align*}
\end{minipage}
\quad
\begin{minipage}[t]{0.5\textwidth}
\begin{align*}
\max \sum_{j} \alpha_{j} &+ \sum_{i} \theta_{i} \\
\alpha_{j} &\geq d_{ij} + \beta_{ij} + \gamma_{ij}   & &  \forall i,j \\
\sum_{j} \beta_{ij} &\leq a_{i} & & \forall i \\
\theta_{i} + \sum_{j \in S} \gamma_{ij} &\leq f_{i}(S)  & & \forall i,S \\
\alpha_{j}, \beta_{ij} &\geq 0 & & \forall i,j 
\end{align*}
\end{minipage}

\paragraph{Algorithm.} 
Assume that all serving cost $f_{i}$ are $(\lambda,\mu)$-smooth.
Intuitively, $\beta_{ij}$ and $\gamma_{ij}$ can be interpreted as the contributions of client $j$ to the opening cost and 
the serving cost at location $i$. At the arrival of client $j$, continuously increase $\alpha_{j}$. For any facility such that 
$\alpha_{j} = d_{ij}$, start increasing $\beta_{ij}$. If $\sum_{j'} \beta_{ij'} = a_{i}$ then (stop increasing $\beta_{ij}$) 
start increasing $\gamma_{ij}$ until $\frac{\mu}{\lambda}\bigl[ f_{i}(S \cup j) - f_{i}(S) \bigr]$ where $S$ is the current set of clients assigned to $i$. 
Assign $j$ to the first facility $i$ such that $\gamma_{ij} = \frac{\mu}{\lambda}\bigl[ f_{i}(S \cup j) - f_{i}(S) \bigr]$ 
(open $i$ if it has not been opened). 

\begin{proposition}
Assume that all serving cost $f_{i}$ are $(\lambda,\mu)$-smooth. Then the algorithm is 
$O\bigl(\log n + \frac{\lambda}{1-\mu}\bigr)$-competitive.
\end{proposition}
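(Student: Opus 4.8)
The plan is to construct a dual feasible solution whose objective value is within a constant factor of the algorithm's cost, and then invoke weak LP duality. I would split the total cost into two parts: the \emph{connection cost} (total client-to-facility distance) and the \emph{facility cost} (opening plus serving), and bound each part separately against a suitable combination of the dual variables $\alpha_j$, $\theta_i$, $\beta_{ij}$, $\gamma_{ij}$. The key observation driving the construction is that when client $j$ is finally assigned to facility $i$, the dual-increasing process has set $\alpha_j = d_{ij} + \beta_{ij} + \gamma_{ij}$ for that $i$ (and $\alpha_j \ge d_{ij} + \beta_{ij} + \gamma_{ij}$ for every other $i$, so the first dual constraint holds), so $\alpha_j$ exactly pays for the connection cost plus the marginal contributions of $j$ to the opening cost and the (scaled) serving cost at its facility.

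First I would handle the facility-cost side using the structure inherited from our primal-dual framework. Summing the serving-cost constraint $\theta_i + \sum_{j\in S}\gamma_{ij}\le f_i(S)$: I would set $\theta_i := \frac{\mu}{\lambda}f_i(A^*_i)$ where $A^*_i$ is the final client set at $i$, and observe that $\gamma_{ij} = \frac{\mu}{\lambda}[f_i(A^*_{i,\prec j}\cup j) - f_i(A^*_{i,\prec j})]$ is exactly the scaled marginal increment, so the $(\lambda,\mu)$-smoothness of $f_i$ (applied to the chain of prefixes $A^*_{i,\prec j}$) yields $\theta_i + \sum_{j\in S}\gamma_{ij}\le f_i(S)$ for every $S$, establishing dual feasibility of the third constraint. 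Meanwhile the $\beta$-variables are capped at $\sum_{j'}\beta_{ij'}=a_i$, giving the second constraint for free. Then $\sum_j\gamma_{ij} = \frac{\mu}{\lambda}f_i(A^*_i)$ telescopes, so $\sum_i(\theta_i+\sum_j\gamma_{ij}) = \frac{2\mu}{\lambda}\sum_i f_i(A^*_i)$, which is a $\Theta(\frac{\lambda}{1-\mu})$-fraction of the algorithm's serving cost once one also accounts for the "$\gamma_{ij}$ stopped short" clients (those assigned because $\gamma_{ij}$ first hit the threshold); for those clients $\gamma_{ij}$ pays a $\frac{\mu}{\lambda}$-fraction of their marginal serving cost, and summing gives exactly $\frac{\mu}{\lambda}\sum_i f_i(A^*_i)$, so the serving cost is charged to the duals with loss $O(\frac{\lambda}{1-\mu})$.

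Second I would handle the opening cost and the connection cost via the Fotakis-style argument: the classic online facility-location analysis shows that the primal-dual process with $\alpha_j$ growing continuously and triggering $\beta_{ij}$ once $\alpha_j\ge d_{ij}$ produces $\sum_j\beta_{ij}$ paying for all opening costs up to an $O(\log n)$ factor (using the standard doubling/charging argument on how many facilities a client "pays into" before being assigned, and the fact that once a facility is open nearby clients connect cheaply). The connection cost is directly bounded by $\sum_j\alpha_j$ since $\alpha_j\ge d_{ij}$ at assignment, and each $\alpha_j$ simultaneously covers $\beta_{ij}+\gamma_{ij}$. Combining: the algorithm's cost is at most $O(\log n)\cdot(\text{opening}) + (\text{connection}) + O(\frac{\lambda}{1-\mu})\cdot(\text{serving})$, all of which are absorbed into a feasible dual of value $\Omega\bigl(\frac{1}{\max\{\log n,\,\lambda/(1-\mu)\}}\bigr)$ times the algorithm's cost, so by weak duality the competitive ratio is $O\bigl(\log n + \frac{\lambda}{1-\mu}\bigr)$.

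I expect the main obstacle to be cleanly decoupling the Fotakis $O(\log n)$ charging argument for the opening/connection part from the new $\gamma$-mechanism for the serving part: one must verify that inserting the extra phase where $\gamma_{ij}$ grows (after $\beta_{ij}$ saturates) does not break the invariants of the classic analysis — in particular that a client still gets assigned in bounded "dual time", that the set of facilities a client pays into is still controlled, and that the smoothness inequality is applied along the correct prefix chain of each facility's client set. Making these two analyses coexist, rather than any single estimate, is the delicate point; the individual bounds are each essentially routine given Theorem~\ref{thm:main}'s technique and Fotakis' result.
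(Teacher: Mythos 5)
Your overall plan is the same as the paper's proof: build a feasible solution to the configuration-LP dual, charge the opening-plus-connection cost to $O(\log n)\sum_j(\alpha_j-\gamma_{\pi(j)j})$ via Fotakis's analysis, charge the serving cost to the remaining dual terms via smoothness, and finish by weak duality. The genuine defect is in your dual fitting for the serving part: you set $\theta_i:=+\frac{\mu}{\lambda}f_i(A^*_i)$, and this choice is neither dual feasible nor is its feasibility implied by $(\lambda,\mu)$-smoothness. The constraint $\theta_i+\sum_{j\in S}\gamma_{ij}\le f_i(S)$ must hold for every configuration $S$; already $S=\emptyset$ forces $\theta_i\le f_i(\emptyset)=0$. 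What smoothness applied to the prefix chain actually gives, after dividing by $\lambda$, is $\frac{1}{\lambda}\sum_{j\in S}\bigl[f_i(A^*_{i,\prec j}\cup j)-f_i(A^*_{i,\prec j})\bigr]-\frac{\mu}{\lambda}f_i(A^*_i)\le f_i(S)$, i.e.\ it certifies feasibility exactly when $\theta_i$ is the \emph{negative} quantity proportional to $-f_i(A^*_i)$ and $\gamma_{ij}$ is the $\frac{1}{\lambda}$-scaled marginal serving cost, mirroring $\gamma_e=-\frac{\mu}{\lambda}f_e(A^*_e)$ in the general framework; the paper's proof indeed takes $\theta_i$ negative for precisely this reason. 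With your positive $\theta_i$ the dual value is not a lower bound on OPT, so weak duality yields nothing; and your accounting also fails on its own terms: $\theta_i+\sum_j\gamma_{ij}=\frac{2\mu}{\lambda}f_i(A^*_i)$ is not a $\Theta\bigl(\frac{1-\mu}{\lambda}\bigr)$-fraction of the serving cost (for small $\mu$ it is arbitrarily smaller), so the claimed $O\bigl(\frac{\lambda}{1-\mu}\bigr)$ charge for the serving cost does not follow.

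With the signs corrected, everything you outline goes through as in the paper: the marginal $\gamma$-terms telescope so that $\sum_j\gamma_{\pi(j)j}+\sum_i\theta_i=\frac{1-\mu}{\lambda}\sum_i f_i\bigl(\pi^{-1}(i)\bigr)$, hence the serving cost is at most $\frac{\lambda}{1-\mu}$ times this dual mass, while Fotakis's argument bounds opening plus connection by $O(\log n)\sum_j\bigl(\alpha_j-\gamma_{\pi(j)j}\bigr)$; adding the two charges bounds the total cost by $O\bigl(\log n+\frac{\lambda}{1-\mu}\bigr)$ times the feasible dual objective $\sum_j\alpha_j+\sum_i\theta_i$. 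Your concern about the $\gamma$-phase interfering with the classic charging argument is treated at the same level of detail as in the paper, which simply invokes Fotakis for that inequality; the mis-signed $\theta_i$ and the consequent feasibility and charging errors are the only substantive gap.
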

\begin{proof}
We define dual variables similarly as in Theorem~\ref{thm:main}. The $\alpha$-variables,
$\beta$-variables and $\gamma$-variables are defined in the algorithm. 
Define $\theta_{i}$ equal $-1/\lambda$ times the (final) serving cost at facility $i$. 
Let $\pi(j)$ is the facility to which $j$ is assigned and $\pi(N)$ the set of all open facilities by the algorithm. 

The dual variables constitute a feasible solution. The first and second dual constraints are due to the algorithm.
Note that by the definition of $\gamma$-variables, it always holds that
$\gamma_{ij} \leq \frac{\mu}{\lambda}\bigl[ f_{i}(S \cup j) - f_{i}(S) \bigr]$ where $S$ is the set of clients assigned to $i$
before the arrival of $j$.
The last constraint follows the $(\lambda,\mu)$-smoothness of serving costs. We are now bounding the primal and the dual. 
We have 
\begin{align*}
	\sum_{i \in \pi(N)} a_{i} + \sum_{j} d_{\pi(j),j} &\leq O(\log n) \sum_{j} \biggl( \alpha_{j} - \gamma_{\pi(j),j} \biggr)\\
		\sum_{i} f_{i}\bigl(\pi^{-1}(i)\bigr) &\leq \frac{\lambda}{1-\mu} \biggl( \sum_{j} \gamma_{\pi(j),j} + \sum_{i} \theta_{i} \biggr)
\end{align*}
where the first inequality is due to \citet{Fotakis07:A-primal-dual-algorithm} and the second one follows the definition of dual variables.
The proposition follows.
\end{proof}

\section{Proof of Lemma~\ref{lem:bound-x}}	\label{sec:proof-bound-x} 

\setcounter{lemma}{1}

\begin{lemma}		
Let $e$ be an arbitrary resource.  
During the execution of the algorithm, it always holds that 
$$
x_{e}	\geq \sum_{A: e \notin A} \frac{1}{\max_{i} b_{i,e,A} \cdot d} 
		\left[ \exp\biggl( \frac{\lambda \cdot \ln(1+2d^{2})}{\nabla_{e} F(\vect{x})} 
				\cdot \sum_{i} b_{i,e,A} \cdot \alpha_{i,A} \biggr) - 1 \right]
$$
\end{lemma}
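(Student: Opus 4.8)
The plan is to prove the inequality as an invariant that Algorithm~\ref{algo:covering} maintains throughout its continuous execution. Fix a resource $e$, and for each configuration $A$ with $e\notin A$ abbreviate $\Phi_{e,A}:=\sum_i b_{i,e,A}\,\alpha_{i,A}$, $M_{e,A}:=\max_i b_{i,e,A}$, and write $g_{e,A}:=\frac{1}{M_{e,A}\,d}\bigl[\exp\bigl(\tfrac{\lambda\ln(1+2d^2)}{\nabla_e F(\vect x)}\,\Phi_{e,A}\bigr)-1\bigr]$ for the $A$-th summand on the right-hand side, viewed as a function of the current primal and dual values. I will show that $x_e\ge\sum_{A:\,e\notin A}g_{e,A}$ holds at every time $\tau$. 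At $\tau=0$ all $x_e$ and all $\alpha_{i,A}$ equal $0$, so both sides vanish and the base case holds.

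I would then induct over the events of the algorithm. At a discrete event where $A^{*}$ grows from $A$ to $A\cup\{e'\}$ with $e'\ne e$, the only new summand on the right is $g_{e,A\cup\{e'\}}$, and it equals $0$ since no variable $\alpha_{i,A\cup\{e'\}}$ has been increased yet ($\Phi_{e,A\cup\{e'\}}=0$); the remaining summands are unchanged, so the invariant is preserved. For the continuous part, consider one iteration of the while loop with current constraint $k$ and current set $A^{*}=A$: only the coordinates $x_{e'}$ with $e'\notin A$ and $b_{k,e',A}>0$ change, the variable $\alpha_{k,A}$ increases at rate $\tfrac{1}{c_{k,A}\lambda\ln(1+2d^2)}$, and the decrease sub-step lowers certain variables $\alpha_{m^{*}_{e'},A}$. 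The decrease sub-step can only decrease the relevant $\Phi_{e,A}$'s, hence the right-hand side, so it is harmless, and it suffices to bound the effect of the increases. Differentiating, $\dot x_e=\frac{b_{k,e,A}\,x_e+1/d}{\nabla_e F(\vect x)}$ when $e$ is active (and $\dot x_e=0$, with the matching summands frozen, when it is not), while a chain-rule computation using the identity $\exp\bigl(\tfrac{\lambda\ln(1+2d^2)}{\nabla_e F(\vect x)}\Phi_{e,A}\bigr)=M_{e,A}\,d\cdot g_{e,A}+1$ rewrites $\frac{d}{d\tau}\sum_A g_{e,A}$ as $\frac{b_{k,e,A}}{c_{k,A}\,\nabla_e F(\vect x)}\bigl(g_{e,A}+\tfrac{1}{M_{e,A}\,d}\bigr)$ plus the non-positive contributions of the decreases and of the change of $\nabla_e F(\vect x)$. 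Combining the inductive hypothesis $x_e\ge\sum_{A'}g_{e,A'}\ge g_{e,A}$, the bound $b_{k,e,A}\le M_{e,A}$, the while-loop guarantee $b_{k,e,A}\,x_e<c_{k,A}\le1$, and the chosen rate $1/c_{k,A}$ for $\alpha_{k,A}$, one checks $\dot x_e\ge\frac{d}{d\tau}\sum_A g_{e,A}$, which preserves the invariant and yields the lemma.

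The main obstacle is exactly this comparison of the two derivatives, where two points need care. First, $\nabla_e F(\vect x)$ sits in the denominator of the exponent and changes during the iteration, since the other active coordinates are being increased; here one exploits that the update of $x_e$ is itself inversely proportional to $\nabla_e F(\vect x)$, so that a drop in $\nabla_e F$ accelerates $x_e$ by just enough to absorb the resulting growth of the right-hand side (and an increase of $\nabla_e F$ only shrinks it). Second, the factor $1/c_{k,A}$ in the rate of $\alpha_{k,A}$, together with $b_{k,e,A}=\min\{b_{k,e},c_{k,A}\}$ and the slack $x_e-g_{e,A}\ge\sum_{A'\subsetneq A}g_{e,A'}\ge0$ inherited from earlier phases, is what lets the additive $1/d$ term in the $x_e$-update dominate the additive $\frac{b_{k,e,A}}{M_{e,A}\,d\,c_{k,A}}$ term produced by differentiating $g_{e,A}$; making these constants line up, and verifying that nothing breaks when an active coordinate reaches $1$, is the delicate part of the argument.
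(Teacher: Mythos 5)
Your argument is essentially the paper's own proof: an induction over the continuous execution of Algorithm~\ref{algo:covering}, with the base case at the start, the observation that newly created configurations contribute zero summands and that decreasing dual variables only lower the right-hand side, the case $x_{e}=1$ frozen because $e$ enters $A^{*}$, and the key step being the comparison of $\partial x_{e}/\partial\tau$ (from the primal update rule) with the derivative of the right-hand side (driven by the rate of $\alpha_{k,A^{*}}$ and bounded via the induction hypothesis). The two ``delicate points'' you single out --- the time-varying $\nabla_{e}F(\vect{x})$ sitting in the exponent and the $1/c_{k,A^{*}}$ factor in the dual rate versus the additive $1/d$ term --- are exactly the places the paper's proof passes over silently, so your sketch matches the paper's route; just be aware that your proposed resolutions of those two points are asserted rather than verified, and they are where any fully rigorous write-up would have to do real work.
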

\begin{proof}
We prove the lemma by induction. At the beginning of the instance, while no request has been released, 
both sides of the lemma are 0. Assume that the lemma holds until the arrival of $k^{\textnormal{th}}$ request.
Consider a moment $\tau$ and let $A^{*}$ be the current set of resources $e$ such that $x_{e} = 1$. 
We first consider the case $x_{e} < 1$.
The derivative of the right hand side according to $\tau$ is 
\begin{align*}
&\sum_{i} \frac{\partial \alpha_{i,A^{*}}}{\partial \tau} \cdot 
	\frac{b_{i,e,A^{*}}}{\max_{i} b_{i,e,A^{*}} \cdot d} \cdot \frac{\lambda \cdot \ln(1+2d^{2})}{ \nabla_{e} F(\vect{x})} 
		\cdot \exp\biggl( \frac{\lambda \cdot \ln(1+2d^{2})}{ \nabla_{e} F(\vect{x})} \cdot \sum_{i} b_{i,e,A} \alpha_{i,A} \biggr) \\
&\leq  \frac{b_{k,e,A^{*}} \cdot x_{e} + 1/d}{\nabla_{e} F(\vect{x})}
= \frac{\partial x_{e}}{\partial \tau}
\end{align*}
where in the inequality, we use the induction hypothesis; $\frac{\partial \alpha_{k,A^{*}}}{\partial \tau} > 0$ 
and $\frac{\partial \alpha_{i,A^{*}}}{\partial \tau} \leq 0$ for $i \neq k$; and the increasing rate of 
$\alpha_{k,A^{*}}$ according to the algorithm.
So the rate in the left-hand side is always larger than that in the right-hand side. 
Moreover, at some steps in the algorithm, $\alpha$-variables might be decreased while the $x$-variables
are maintained monotone. Hence, the lemma inequality holds.

The remaining case is $x_{e} = 1$. In this case, by the algorithm, the set $A^{*}$ has been updated so that 
$e \in A^{*}$. The increasing rates of both sides of the lemma inequality are 0. Therefore, the lemma follows.
\end{proof}

\section{Technical Lemmas}  	\label{appendix:technical}

In this section, we show technical lemmas in order to determine smoothness parameters
for polynomials with non-negative coefficients. The following lemma has been proved in 
\cite{CohenDurr12:Smooth-inequalities}. We give it here for completeness.

\setcounter{lemma}{4}
\begin{lemma}[\cite{CohenDurr12:Smooth-inequalities}]
  \label{lem:smooth-simple}
Let $k$ be a positive integer.
Let $0< a(k) \leq 1$ be a function on $k$. Then, for any  $x, y > 0$, it holds that
$$
y(x+y)^{k} \leq \frac{k}{k+1}a(k) x^{k+1} + b(k) y^{k+1} 
$$
where $\alpha$ is some constant and
\begin{subnumcases}
 {\label{b} b(k) =} 
	 \Theta\left(\alpha^{k} \cdot \left(\frac{k}{\log ka(k)} \right)^{k-1}\right) & \text{if} 
		$\lim_{k\rightarrow\infty}(k-1)a(k) = \infty$,	\label{b1} \\
	 \Theta\left(\alpha^{k} \cdot k^{k-1}\right)	 & \text{if} $(k-1)a(k)$ \text{are bounded} $\forall k$, \label{b2}\\
	 \Theta\left(\alpha^{k} \cdot \frac{1}{ka(k)^{k}}\right) & \text{if}
		$\lim_{k\rightarrow\infty} (k-1)a(k) = 0$.	\label{b3} 
\end{subnumcases}
\end{lemma}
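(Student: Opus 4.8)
The plan is to eliminate one variable by homogeneity, compute the maximizer of the resulting one-variable function in closed form, and then read off the three regimes from the asymptotics of a transcendental equation. Concretely, I would first note that both sides of $y(x+y)^{k}\le\frac{k}{k+1}a(k)x^{k+1}+b(k)y^{k+1}$ are homogeneous of degree $k+1$ in $(x,y)$; dividing by $y^{k+1}$ and setting $t=x/y>0$, the claim is equivalent to taking $b(k):=\sup_{t\ge 0}g(t)$ with $g(t):=(1+t)^{k}-\frac{k}{k+1}a(k)t^{k+1}$ and showing this supremum has the three stated orders. Since $g(0)=1$, $g'(0)=k>0$ and $g(t)\to-\infty$ as $t\to\infty$, the supremum is attained at a finite interior point $t^{*}>0$, so the whole problem is to estimate $g(t^{*})$.

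Next I would pin down $t^{*}$. We have $g'(t)=k\bigl[(1+t)^{k-1}-a(k)t^{k}\bigr]$, and a one-line derivative computation shows $\phi(t):=(1+t)^{k-1}/t^{k}$ is strictly decreasing from $+\infty$ to $0$; hence $g$ has a unique critical point $t^{*}$, characterized by $(1+t^{*})^{k-1}=a(k)(t^{*})^{k}$, equivalently $\bigl(1+\tfrac1{t^{*}}\bigr)^{k-1}=a(k)\,t^{*}$. Multiplying the first form by $(1+t^{*})$ and substituting $a(k)(t^{*})^{k}$ for $(1+t^{*})^{k-1}$ inside $g$ yields the clean identity
\[
b(k)=g(t^{*})=a(k)(t^{*})^{k}\Bigl(1+t^{*}-\tfrac{k}{k+1}t^{*}\Bigr)=(1+t^{*})^{k-1}\Bigl(1+\tfrac{t^{*}}{k+1}\Bigr),
\]
so everything reduces to estimating $t^{*}$, up to sub-exponential factors that the slack $\alpha^{k}$ in the statement absorbs.

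To estimate $t^{*}$, I would set $u:=(k-1)/t^{*}$ and $v:=(k-1)a(k)$ and use the two-sided bound $e^{u/2}\le\bigl(1+\tfrac1{t^{*}}\bigr)^{k-1}\le e^{u}$ (valid once $t^{*}\ge1$) together with $a(k)t^{*}=v/u$; this gives $u\,e^{u/2}\le v\le u\,e^{u}$, i.e.\ $u=\Theta\bigl(W(v)\bigr)$ with $W$ the Lambert function. Now split: if $v\to\infty$ then $u\sim\ln v$, so $t^{*}=\Theta\bigl(k/\log(ka(k))\bigr)$, and since $a(k)\le1$ forces $1\ll t^{*}\ll k$ we get $1+\tfrac{t^{*}}{k+1}=\Theta(1)$, so $b(k)=\Theta\bigl(\alpha^{k}(k/\log(ka(k)))^{k-1}\bigr)$, the first form; if $v=\Theta(1)$ then $u=\Theta(1)$, $t^{*}=\Theta(k)$, $1+\tfrac{t^{*}}{k+1}=\Theta(1)$, giving $b(k)=\Theta(\alpha^{k}k^{k-1})$, the second form; if $v\to0$ then $W(v)\sim v$, so $t^{*}=\Theta(1/a(k))\gg k$, whence $1+\tfrac{t^{*}}{k+1}=\Theta\bigl(1/(ka(k))\bigr)$ and $b(k)=\Theta\bigl(\alpha^{k}/(ka(k)^{k})\bigr)$, the third form. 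In each case one plugs the order of $t^{*}$ into the displayed identity and checks the exponents match; the ``$\le$'' half of the lemma is automatic because $b(k)$ was defined as the exact supremum, and the matching ``$\Theta$'' lower bounds are equally automatic since $b(k)=g(t^{*})$.

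The step I expect to be the main obstacle is the last one: solving $\bigl(1+1/t^{*}\bigr)^{k-1}=a(k)t^{*}$ for $t^{*}$ uniformly across the three regimes of $v=(k-1)a(k)$, and, more delicately, doing the asymptotic bookkeeping carefully enough that the approximation $(1+1/t)^{k-1}\approx e^{(k-1)/t}$ and the imprecision in $W(v)$ cost only factors of the shape $\alpha^{k}$ (i.e.\ $e^{o(k)}$) rather than something like $k^{\Theta(k)}$. Keeping two-sided exponential bounds throughout, and exploiting that $a(k)\le1$ forces $v\le k-1$ (hence $v/\log v=e^{o(k)}$), is what makes this go through. Everything else — the homogeneity reduction, the monotonicity of $\phi$, and the substitution identity — is routine.
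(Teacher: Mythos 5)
Your proposal is correct and follows essentially the same route as the paper's proof: reduce by homogeneity to the one-variable function, locate the unique critical point $t^{*}$ (the paper's $z_{0}$), derive the same identity $b(k)=(1+t^{*})^{k-1}\bigl(1+\tfrac{t^{*}}{k+1}\bigr)$, and estimate $t^{*}$ via two-sided exponential bounds and Lambert-$W$ asymptotics in the three regimes of $(k-1)a(k)$. The only point the paper makes explicit that you gloss over is verifying $t^{*}>1$ (via $f'(1)<0$, using $a(k)\leq 1$) before applying the bound $e^{u/2}\le(1+1/t^{*})^{k-1}\le e^{u}$, which is a minor check.
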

\begin{proof}
Let $f(z) := \frac{k}{k+1}a(k) z^{k+1} - (1 + z)^{k} + b(k)$. 
To show the claim, it is equivalent to prove that $f(z) \geq 0$ for all $z > 0$. 

We have $f'(z) = ka(k)z^{k} - k(1+z)^{k-1}$. We claim that the equation 
$f'(z)=0$ has an unique positive root $z_{0}$. Consider the equation 
$f'(z)=0$ for $z > 0$. It is equivalent to 
$$
\left(\frac{1}{z} + 1\right)^k \cdot \frac{1}{z} = a(k)
$$
The left-hand side is a strictly decreasing function and the limits 
when $z$ tends to $0$ and $\infty$ are $\infty$ and $0$, respectively.
As $a(k)$ is a positive constant, there exists an unique root 
$z_0 > 0$. 

Observe that function $f$ is decreasing in $(0,z_{0})$ and 
increasing in $(z_{0}, +\infty)$, so $f(z) \geq f(z_{0})$ for all $z > 0$. 
Hence, by choosing 
\begin{equation}	\label{eq:bk}
b(k) = \Big | \frac{k}{k+1}a(k) z_{0}^{k+1} - (1 + z_{0})^{k} \Big |
= (1+z_{0})^{k-1}\Big(1+\frac{z_{0}}{k+1}\Big)
\end{equation}
it follows that $f(z) \geq 0 ~\forall z>0$.

We study the positive root $z_{0}$ of equation 
\begin{equation}	\label{eq:smooth-simple-derivative}
a(k) z^{k} - (1+z)^{k-1} = 0
\end{equation}
Note that $f'(1) = k(a(k) - 2^{k-1}) < 0$ since $0< a(k) \leq 1$. 
Thus, $z_{0} > 1$.  For the sake of simplicity, we
define the function $g(k)$ such that $z_{0} = \frac{k-1}{g(k)}$ where
$0 < g(k) < k-1$.  Equation~(\ref{eq:smooth-simple-derivative}) is
equivalent to
\begin{equation*} \label{eq:majoration}
\left(1 + \frac{g(k)}{k-1}\right)^{k-1}g(k) = (k-1)a(k)
\end{equation*}
Note that $e^{w/2} < 1 + w < e ^{w}$ for $w \in (0,1)$.  For $w := \frac{g(k)}{k-1}$,
we obtain the following upper and lower bounds for the term  $(k-1)a(k)$:
 
\begin{equation} \label{eq:LambertW}
e^{g(k)/2}g(k) < (k-1)a(k) < e^{g(k)}g(k)
\end{equation}

Recall the definition of \emph{Lambert $W$ function}.  
For each $y \in \mathbb{R}^{+}$, $W(y)$ is 
defined to be  solution of  the equation $xe^{x}  = y$.  Note that, 
$xe^{x}$  is increasing with respect to $x$, hence $W(\cdot)$ is increasing.

By definition of the Lambert $W$ function and
Equation~(\ref{eq:LambertW}), we get that
\begin{equation} \label{eq:encadrementG}
W\left((k-1)a(k)\right) < g(k) < 2W\left(\frac{(k-1)a(k)}{2}\right)
\end{equation}


First, consider  the case where  $\lim_{k\rightarrow\infty}(k-1)a(k) =
\infty$.  The asymptotic sequence for $W(x)$ as $x \to +\infty$ is the
following:    $W(x)=\ln    x-\ln   \ln    x+\frac{\ln\ln    x}{\ln
  x}+O\left(\left(\frac{\ln\ln x}{\ln x}\right)^2\right)$.  So, for large enough $k$, $W((k-1)a(k)) = \Theta(\log((k-1)a(k)))$.  Since
$z_{0} =  \frac{k-1}{g(k)}$, from Equation~(\ref{eq:encadrementG}), we
get $z_{0}  = \Theta\left( \frac{k}{\log (ka(k))}  \right)$.  
Therefore, by (\ref{eq:bk}) we have 
$b(k) = \Theta\left(\alpha^{k}\cdot\left(\frac{k}{\log ka(k)}
  \right)^{k-1}\right)$ for some constant $\alpha$.

Second,  consider  the  case  where  $(k-1)a(k)$ is  bounded  by  some
constants.  So by~(\ref{eq:encadrementG}), we have $g(k) = \Theta(1)$.
Therefore $z_{0} = \Theta(k)$ which again implies 
$b(k)  =  \Theta\left(\alpha^{k}\cdot  k^{k-1}\right)$  for  some
constant $\alpha$.

Third, we consider the case where $\lim_{k\rightarrow\infty}(k-1)a(k) = 0$. 
We focus on the Taylor series $W_0$ of $W$ around 0.
It can be found using the Lagrange inversion and is given by
$$
W_{0}(x) = \sum_{i=1}^{\infty}\frac{(-i)^{i-1}}{i!}x^{i} = x - x^{2} + O(1)x^{3}.
$$
Thus, for $k$ large  enough $g(k) = \Theta((k-1)a(k))$.  Hence, $z_{0}
=  \Theta(1/a(k))$.  Once  again that  implies
$b(k)   =  \Theta\left(\alpha^{k}\cdot\frac{1}{ka(k)^{k}}\right)$  for
some constant $\alpha$.
\end{proof}

\begin{lemma} 	\label{lem:smooth-convex}
For any sequences of non-negative real numbers 
$\{a_{1}, a_{2}, \ldots, a_{n}\}$ and $\{b_{1}, b_{2}, \ldots, b_{n}\}$
and for any polynomial $g$ of degree $k$ with non-negative coefficients, 
it holds that
\begin{align*}
\sum_{i=1}^{n} \left[ g\biggl( b_{i}  + \sum_{j=1}^{i} a_{j} \biggr) - g\biggl( \sum_{j=1}^{i} a_{j} \biggr) \right]
\leq \lambda(k) \cdot g\biggl( \sum_{i=1}^{n} b_{i}  \biggr) + 
	\mu(k) \cdot g\biggl( \sum_{i=1}^{n} a_{i}  \biggr) 
\end{align*}
where $\mu(k) = \frac{k-1}{k}$ and $\lambda(k) = \Theta\left(k^{k-1}\right)$.  
The same inequality holds for  $\mu(k) = \frac{k-1}{k \ln k}$ and $\lambda(k) = \Theta\left((k \ln k)^{k-1}\right)$.  
\end{lemma}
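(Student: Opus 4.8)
The plan is to reduce everything to the one‑variable inequality already established in Lemma~\ref{lem:smooth-simple}. Since $g$ has non‑negative coefficients, write $g(z)=\sum_{d=0}^{k}c_d z^d$ with $c_d\ge 0$; both sides of the asserted inequality are linear in $g$, so it suffices to prove, for each monomial $z^d$ with $0\le d\le k$, a bound $\sum_{i=1}^n\bigl[(b_i+A_i)^d-A_i^d\bigr]\le \lambda(k)B^d+\mu(k)A^d$, where I abbreviate $A_i:=\sum_{j\le i}a_j$, $A:=\sum_i a_i$, $B:=\sum_i b_i$. Multiplying by $c_d$ and summing then yields $\sum_i[g(b_i+A_i)-g(A_i)]\le \lambda(k)g(B)+\mu(k)g(A)$. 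The monomials $d=0$ (both sides vanish) and $d=1$ (left side equals $B$, swallowed by $\lambda(k)\ge 1$) are trivial, so the substance is the range $2\le d\le k$.

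Fix such a $d$. First I pass to a product form: since $t\mapsto t^{d-1}$ is non‑decreasing on $[0,\infty)$, $(b_i+A_i)^d-A_i^d\le d\,b_i\,(b_i+A_i)^{d-1}$, and since $b_i\le B$ and $A_i\le A$ by non‑negativity, $(b_i+A_i)^{d-1}\le (A+B)^{d-1}$; summing over $i$ and using $\sum_i b_i=B$ collapses the left side to $d\,B\,(A+B)^{d-1}$. Now apply Lemma~\ref{lem:smooth-simple} with its parameter $k$ replaced by $d-1$, with $x=A$, $y=B$, and with an admissible function $a(\cdot)$ still to be chosen: it yields $B(A+B)^{d-1}\le \frac{d-1}{d}a(d-1)A^{d}+b(d-1)B^{d}$, hence $\sum_i[(b_i+A_i)^d-A_i^d]\le (d-1)a(d-1)\,A^{d}+d\,b(d-1)\,B^{d}$. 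Thus the $A^d$‑coefficient is $(d-1)a(d-1)$, which I control directly, and the $B^d$‑coefficient is $d\,b(d-1)$, whose size is dictated by the case analysis in Lemma~\ref{lem:smooth-simple}.

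For the first variant I take $a(d-1)=1/d$, so the $A^d$‑coefficient equals $\frac{d-1}{d}=\mu(d)$, and $\mu(d)\le\mu(k)$ because $d\mapsto 1-1/d$ is increasing; since $(k-1)a(k)$ stays bounded we are in case~\eqref{b2}, so $d\,b(d-1)$ is of order $d^{d-1}$, which is absorbed into $\Theta(k^{k-1})=\lambda(k)$ as $d\mapsto d^{d-1}$ is increasing. Hence every monomial obeys the bound with $(\lambda(k),\mu(k))$ and the weighted sum finishes this case. The second variant is the interesting one, because $d\mapsto \frac{d-1}{d\ln d}$ is \emph{not} monotone (it tends to $0$), so one cannot dominate $\mu(d)$ by $\mu(k)$ degree‑by‑degree. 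Instead I impose the common target $\mu:=\frac{k-1}{k\ln k}$ on every degree at once, choosing $a(d-1)=\mu/(d-1)$ (admissible, since $0<\mu<1\le d-1$). Again $(k-1)a(k)$ is bounded, placing us in case~\eqref{b2}, but now the implicit constant there depends on $\mu$: following the proof of Lemma~\ref{lem:smooth-simple} and \eqref{eq:bk}, the stationary point obeys $z_0=\Theta\bigl((d-1)/W(\mu)\bigr)=\Theta\bigl((d-1)/\mu\bigr)$ since $W(\mu)=\Theta(\mu)$ for small $\mu$, whence $b(d-1)=(1+z_0)^{d-2}\bigl(1+z_0/d\bigr)=\Theta\bigl(d^{d-2}\mu^{-(d-1)}\bigr)$ and $d\,b(d-1)=\Theta\bigl(d^{d-1}\mu^{-(d-1)}\bigr)=\Theta\bigl((d\ln k)^{d-1}\bigr)\le \Theta\bigl((k\ln k)^{k-1}\bigr)=\lambda(k)$, using $1/\mu=\Theta(\ln k)$. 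Summing the monomial bounds weighted by $c_d$ gives the claim with $\mu(k)=\frac{k-1}{k\ln k}$.

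The step I expect to be the real obstacle is exactly this second variant: one must recognize that a uniform (non‑monotone) target $\mu$ has to be forced on all degrees simultaneously, and then carefully track how the hidden constant in $b(\cdot)$ of Lemma~\ref{lem:smooth-simple} scales with $1/\mu$ — this is the sole mechanism by which the extra factor $(\ln k)^{k-1}$, and hence the stated $\lambda(k)=\Theta\bigl((k\ln k)^{k-1}\bigr)$, arises. The remaining ingredients (the linearity reduction, the monotonicity of $t^{d-1}$, and the small‑degree base cases) are routine.
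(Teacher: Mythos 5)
Your proof is correct and follows essentially the same route as the paper: decompose $g$ into monomials using the non-negativity of its coefficients, apply the convexity bound $(b_i+A_i)^t-A_i^t\le t\,b_i(b_i+A_i)^{t-1}$, collapse the sum to $tB(A+B)^{t-1}$, and invoke Lemma~\ref{lem:smooth-simple} with a suitable choice of $a(\cdot)$; your first variant (with $a(t-1)=1/t$) coincides with the paper's argument. Where you genuinely diverge is the $\ln k$ variant: the paper simply repeats the per-degree argument with $a(k)=\frac{1}{(k+1)\ln k}$ and then reuses the final domination step "$\mu(t)\le\mu(k)$," which in fact fails there because $t\mapsto\frac{t-1}{t\ln t}$ is decreasing; your fix --- imposing the single target $\mu=\frac{k-1}{k\ln k}$ on every degree via $a(d-1)=\mu/(d-1)$ and then tracking, through the Lambert-$W$ computation inside Lemma~\ref{lem:smooth-simple}, how the constant $b(\cdot)$ scales as $\mu^{-(d-1)}=\Theta\bigl((\ln k)^{d-1}\bigr)$ --- is exactly what is needed to make that case rigorous, and it recovers the stated $\lambda(k)=\Theta\bigl((k\ln k)^{k-1}\bigr)$ (with the same benign $\alpha^{k}$-type constants the paper itself absorbs). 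So your write-up is not just faithful to the paper's approach but repairs a real gap in its treatment of the second parameter pair.
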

\begin{proof}
We first prove for $\mu(k) = \frac{k-1}{k}$ and $\lambda(k) = \Theta\left(k^{k-1}\right)$.
Let $g(z) = g_{0}z^{k} + g_{1}z^{k-1} + \cdot + g_{k}$ with $g_{t} \geq 0 ~\forall t$.
The lemma holds since it holds for every $z^{t}$ for $0 \leq t \leq k$.
Specifically,
\begin{align}	\label{eq:proof-smooth-convex}
\sum_{i=1}^{n} &\left[ g\biggl( b_{i}  + \sum_{j=1}^{i} a_{j} \biggr) - g\biggl( \sum_{j=1}^{i} a_{j} \biggr) \right]
= \sum_{t=1}^{k} g_{k - t} \cdot \sum_{i=1}^{n} \left[ \biggl( b_{i}  + \sum_{j=1}^{i} a_{j} \biggr)^{t} - \biggl( \sum_{j=1}^{i} a_{j} \biggr)^{t} \right]  \notag \\ 
&\leq  \sum_{t=1}^{k} g_{k - t} \cdot \left[ t \cdot b_{i} \cdot \biggl( b_{i} + \sum_{j=1}^{i} a_{j}  \biggr)^{t-1} \right] 
\leq  \sum_{t=1}^{k} g_{k - t} \cdot \left[ \lambda(t) \biggl( \sum_{i=1}^{n} b_{i}  \biggr)^{t} + 
	\mu(t) \biggl( \sum_{i=1}^{n} a_{i}  \biggr)^{t} \right] \notag \\ 
&\leq   \lambda(k) \cdot g\biggl( \sum_{i=1}^{n} b_{i}  \biggr) + 
	\mu(k) \cdot g\biggl( \sum_{i=1}^{n} a_{i}  \biggr)
\end{align}
The first inequality follows the convex inequality $(x+y)^{k+1} - x^{k+1} \leq (k+1)y(x+y)^{k}$.
The second inequality follows Lemma~\ref{lem:smooth-simple} (Case 2 and $a(k) = 1/(k+1)$). 
The last inequality holds since $\mu(t) \leq \mu(k)$ and $\lambda(t) \leq \lambda(k)$ for $t \leq k$.

The case  $\mu(k) = \frac{k-1}{k \ln k}$ and $\lambda(k) = \Theta\left((k \ln k)^{k-1}\right)$ is proved similarly. 
The only different step is in the second inequality of (\ref{eq:proof-smooth-convex}). 
In fact, applying Lemma~\ref{lem:smooth-simple} (Case 3 and $a(k) = \frac{1}{(k+1)\ln k}$), 
one gets the lemma inequality for $\mu(k) = \frac{k-1}{k \ln k}$ and $\lambda(k) = \Theta\left((k \ln k)^{k-1}\right)$.
\end{proof}

\newpage

\bibliographystyle{plainnat}

\end{document}